\newcommand{\floor}[1]{\lfloor #1 \rfloor}
\newcommand{\Ind}{\mathbbm{1}}
\newcommand\independent{\protect
\mathpalette{\protect\independenT}{\perp}}
\def\independenT#1#2{\mathrel{\rlap{$#1#2$}\mkern2mu{#1#2}}}
\newcommand*\circled[1]{\tikz[baseline=(char.base)]{
            \node[shape=circle,draw,inner sep=1pt] (char) {#1};}}
\DeclareRobustCommand\full {\tikz[baseline=-0.5ex]\draw[blue,thick] (0,0)--(0.5,0);}
\DeclareRobustCommand\dotted{\tikz[baseline=-0.5ex]\draw[blue,thick,dotted] (0.06,0)--(0.5,0);}
\DeclareRobustCommand\dashed{\tikz[baseline=-0.5ex]\draw[black!30!green,thick,dashed] (0,0)--(0.52,0);}
\DeclareRobustCommand\sqr[1]{\tikz{\node[draw=#1,fill=#1, fill opacity = 0.5,rectangle,minimum
width=0.2cm,minimum height=0.2cm,inner sep=0pt] at (0,0) {};}}
\DeclareRobustCommand\fullgrey {\tikz[baseline=-0.5ex]\draw[black,thick, opacity = 0.5] (0,0)--(0.5,0);}
\DeclareRobustCommand\dashedgrey{\tikz[baseline=-0.5ex]\draw[black,thick,dashed, opacity = 0.5] (0,0)--(0.52,0);}
\theoremstyle{plain} 
\newtheorem{theorem}{Theorem}
\newtheorem{proposition}[theorem]{Proposition}
\newtheorem*{proposition*}{Proposition}
\newtheorem{lemma}[theorem]{Lemma}
\newtheorem*{lemma*}{Lemma}
\newtheorem{corollary}[theorem]{Corollary}
\newtheorem*{claim*}{Claim}
\newtheorem{definition}{Definition}
\newtheorem*{definition*}{Definition}
\newtheorem{assumption}{Assumption}
\newtheorem*{question*}{Question}
\newtheorem*{conjecture*}{Conjecture}
\theoremstyle{definition} 
\newtheorem*{remark*}{Remark}
\newtheorem*{aside*}{Aside}
\newtheorem{example}[theorem]{Example}
\newtheorem*{example*}{Example}
\newtheorem*{aim*}{Aim}
\newtheorem*{observation*}{Observation}
\begin{document}

\title{Semiparametric posterior corrections}

\author{Andrew Yiu$^{1}$, Edwin Fong$^{2}$, Chris Holmes$^{1,3}$, 
and Judith Rousseau$^{1}$ \\ \\
$^1$Department of Statistics, University of Oxford\\
$^2$Novo Nordisk\\
$^3$The Alan Turing Institute}

\maketitle

\begin{abstract}
We present a new approach to semiparametric inference using corrected posterior distributions. The method allows us to leverage the adaptivity, regularization and predictive power of nonparametric Bayesian procedures to estimate low-dimensional functionals of interest without being restricted by the holistic Bayesian formalism. Starting from a conventional nonparametric posterior, we target the functional of interest by transforming the entire distribution with a Bayesian bootstrap correction. We provide conditions for the resulting \textit{one-step posterior} to possess calibrated frequentist properties and specialize the results for several canonical examples: the integrated squared density, the mean of a missing-at-random outcome, and the average causal treatment effect on the treated. The procedure is computationally attractive, requiring only a simple, efficient post-processing step that can be attached onto any arbitrary posterior sampling algorithm. Using the ACIC 2016 causal data analysis competition, we illustrate that our approach can outperform the existing state-of-the-art through the propagation of Bayesian uncertainty.
\end{abstract}

\section{Introduction}

Functional estimation is emerging as a major paradigm in modern statistics \citep{vanderLaan11, Buja19a, Buja19b, Vansteelandt22}. The definition of the estimand---the target of inference---is placed at the forefront of the statistical analysis, requiring the user to carefully articulate their scientific objectives. This helps to avoid the common pitfall in which the choice of estimand is informed by the choice of statistical model, rather than the other way round. Causal inference, in particular, has a natural affinity with functional estimation through the nonparametric identification of causal effects \citep{Pearl09, Hernan20}. 

An important advantage of the functional approach is that we are free to estimate low-dimensional functionals without being restricted to low-dimensional models. Nonparametric methods provide greater flexibility in capturing complex relationships between variables, mitigating the risks of model misspecification bias. In particular, it is becoming increasingly popular to leverage highly flexible black-box algorithms (e.g. random forests, gradient boosting, deep learning) to estimate nuisance parameters \citep{vanderLaan06, Chernozhukov18, Vansteelandt22}. 

Using infinite-dimensional models to estimate low-dimensional estimands is often referred to as \textit{semiparametric inference}. By exploiting the smoothness (or \textit{differentiability}) of the target functional, it may be possible to obtain estimators with parametric properties like $\sqrt{n}$-consistency and asymptotic normality, even if the nuisance parameter estimators converge at a slower rate \citep{Murphy00, vanderLaan11, Kennedy22, Hines22}. Some authors have suggested that semiparametric methods can bridge the ``two cultures'' of statistical modelling \citep{Breiman01} by combining the interpretability and efficiency of parametric inference with the robustness of black-box algorithms \citep{Ogburn21,Kennedy21,Vansteelandt21}.

Nonparametric Bayesian approaches such as \textit{Bayesian additive regression trees} \citep{Chipman07} have performed especially well in empirical studies for both prediction and inferring low-dimensional estimands such as causal effects \citep{Hill11, Dorie19, Hahn20}. For parametric Bayesian models, the influence of the prior disappears asymptotically to first-order thanks to the renowned \textit{Bernstein-von Mises theorem} \citep{LeCam86, vanderVaart98}. For infinite-dimensional models however, the impact of the prior may remain significant in the asymptotic regime, and the prior specification cannot be completely justified by subjective beliefs \citep{diaconis:freedman86, Rousseau16}. This has led to a rich literature on the frequentist properties of Bayesian nonparametric methods, with a strong emphasis on posterior contraction rates \citep[e.g.][]{Ghosal17}. In particular, many Bayesian approaches have been shown to enjoy excellent properties in terms of adaptive posterior contraction rates due   to the flexibility enabled by hierarchical prior modelling \citep[e.g.][]{Rousseau16, Ghosal17}.

Bayesian approaches naturally deal with nuisance parameters by integrating them out, which allows for coherent inference on possibly multiple parameters of interest, whether they are estimated simultaneously or sequentially \citep[e.g.][]{berger:liseo:wolpert99}.  Unfortunately, good adaptive contraction rates do not necessarily translate to good behaviour for marginal posterior distributions of specific functionals, even to first-order.  Although there has been a growing literature on the existence of Bernstein-von Mises theorems in the semiparametric setting \citep{Castillo12,Rivoirard12,castillo:nickl14,Castillo15}, the results have been only partially positive, and it is now recognized that a given prior will perform well for some functionals of interest but not for others. This is perhaps unsurprising given that Bayesian inference is a ``plug-in'' approach, and similar phenomena have been observed for semiparametric plug-in estimators in the frequentist literature \citep{Bickel03, vanderLaan06, Robins17}. The reason is that flexible nonparametric methods must employ regularization to achieve good performance, and the ``regularization bias'' can bleed into the corresponding plug-in estimator, precluding fast convergence rates \citep{vanderLaan06, vanderVaart14, Chernozhukov18}.

For some particular examples, it has been demonstrated that the asymptotic posterior bias for a nonparametric Bayesian model can be controlled by tailoring the prior specification to the estimand \citep{Castillo15, Ray20}. This is undesirable from a practical perspective however, as it will require careful tuning (or even rewriting) of any existing software for posterior computation. Moreover, a different modification will be required for each estimand of interest. In any case, studying the existence of a Bernstein-von Mises theorem remains a formidable open question for many popular families of priors, such as nonparametric mixtures. For the implementations of nonparametric Bayesian procedures that are widely applied in practice, the existing empirical evidence suggests that the lack of a semiparametric Bernstein-von Mises theorem represents the rule, rather than the exception \citep{Dorie19, RaySzabo19, Hahn20}.

To address these issues, we introduce a simple post-processing procedure that starts from a given posterior distribution on the whole data-generating distribution and then corrects the marginal posterior for each functional of interest. This operates by adding a stochastic term based on the \textit{efficient influence function} of the functional \citep{Pfanzagl82, vanderVaart91, vanderVaart98} and the Bayesian bootstrap \citep{Rubin81}, which plays a crucial role for correcting not only the bias of the posterior but also its shape. The original user-specified Bayesian model can be left untouched, acting as a central hub from which we can target each estimand of interest individually. 

We provide general conditions for these new \textit{one-step posteriors} to satisfy a Bernstein-von Mises theorem, which ensures that central credible regions achieve approximately nominal coverage with the semiparametric efficient size. Unlike the semiparametric Bernstein-von Mises theorems derived for standard nonparametric posteriors \citep[e.g.][]{Castillo12, Castillo15}, our conditions do not depend delicately on the likelihood and prior. Instead, we require only contraction rates and complexity bounds on the posterior with empirical process theory, leading to a Bayesian counterpart to the classical theory of influence-function-based estimation \citep{Pfanzagl82, vanderLaan03, vanderLaan11, Hines22, Kennedy22}. In particular, the one-step posterior could be interpreted as a natural analogue of the \textit{one-step estimator} \citep{Pfanzagl82, Newey98} for correcting an entire posterior distribution rather than just a point estimator. 

Our methodology, which we introduce in Section \ref{sec::meth}, is  computationally efficient and attaches onto any existing posterior sampling implementation without modification of the original algorithm. The procedure takes each posterior sample and adds a randomized correction term, which only requires drawing an independent set of uniform Dirichlet weights of lengths equal to the sample size. A single set of posterior samples can be retained for simultaneous inference of multiple functionals of interest.

We apply our approach to the classic example of estimating the integrated squared density in Section \ref{sec::dens_squared}. This was previously studied in a Bayesian setting by \citet{Castillo15} but only for random histogram priors on the interval $[0,1]$. We verify the conditions for the more complex class of Dirichlet process Gaussian location mixtures \citep{Shen13}. To the best our knowledge, this is the first Bernstein-von Mises theorem associated with priors based on nonparametric mixture models.

In Section \ref{sec::miss}, we study the estimation of the mean of an outcome that is missing-at-random. This is a problem that has received much attention due to its connections with estimating the average treatment effect in causal inference \citep{Kang07, Robins17}. The conditions for the Bernstein-von Mises theorem are expressed here in terms of the propensity score and outcome regression models, showing that the one-step posterior possesses a doubly robust property, i.e. it depends on the combined contraction rates of both parameters. For this specific example in the binary outcome setting, \citet{Ray20} proposed a correction approach that has similar motivations and asymptotic considerations to ours, but it requires a modification to the prior that can be relatively complicated to implement. Working within their set-up, we study the behaviour of the one-step posterior and provide a comparison of theoretical assumptions.


The last example is the average treatment effect on the treated \citep{Rubin77, Heckman85}, which has not been studied before from a large-sample Bayesian perspective. Due to its complex structure, this estimand poses some theoretical challenges. In Section \ref{sec::att}, we introduce a slight modification of the correction algorithm, taking inspiration from estimating equation methodology. The resulting corrected posterior also satisfies a doubly robust property. Extensions of this approach to sample conditional treatment effects can be found in Section \ref{sec::samp_treat}.

We evaluate our methodology empirically in Section \ref{sec::sim}. The first simulation study validates the approach of Section \ref{sec::dens_squared}, exhibiting the improved performance of the one-step posterior relative to the uncorrected posterior, which is heavily biased and exhibits substantial undercoverage. This is followed by a large-scale comparison with state-of-the-art causal algorithms using the ACIC 2016 data anaylsis competition \citep{Dorie19}. Our method---applied to \textit{Bayesian additive regression trees} \citep{Chipman07}---is the best performer in terms of both point and interval estimation. In fact, no other method was able to obtain nominal coverage across all of the data-generating mechanisms. We suggest that this superior performance could be attributed to the propagation of Bayesian uncertainty through the posterior correction to regularize the estimation, which cannot be exploited by the classical semiparametric approaches. Thus, we argue that the one-step posterior correction should appeal not only to Bayesians, but also to classically-minded users interested in leveraging the power of Bayesian algorithms while obtaining frequentist-calibrated inference.

\section{Methodology} \label{sec::meth}

\subsection{Set-up and background} \label{sec::setup}

Suppose that we observe independent and identically distributed data $Z^{(n)} = (Z_{1}, \ldots, Z_{n})$ from a distribution $P_{0}$ known to belong to a set of probability measures $\mathcal{P}$ on a Polish sample space $(\mathcal{Z}, \mathcal{A})$. We will use the shorthand notation $P[f] = \int f(z)\,dP(z)$ for $P \in \mathcal{P}$. In particular, $\mathbb{P}_{n}[f] = n^{-1}\sum_{i=1}^{n} f(Z_{i})$, where $\mathbb{P}_{n}$ is the empirical measure, and we let $\mathbb{G}_{n} = \sqrt{n}(\mathbb{P}_{n}-P_{0})$ denote the empirical process. For $P \in \mathcal{P}$, let $L_{2}(P)$ be the Hilbert space consisting of all measurable functions $h:\mathcal{Z}\rightarrow \mathbb{R}$ with $P[h^{2}] < \infty$ equipped with the inner product $\langle h_{1}, h_{2} \rangle_{P} = P[h_{1}h_{2}]$ and norm $\Vert h \Vert_{P} = \sqrt{P[h^{2}]}$. 

In a Bayesian analysis, we equip $\mathcal{P}$ with a $\sigma$-field $\mathcal{B}$ such that $(\mathcal{P},\mathcal{B})$ is a standard Borel space, and we specify a prior probability distribution $\Pi$ on $P$. Let $\Pi(\cdot \mid Z^{(n)})$ denote a version of the posterior distribution given $Z^{(n)}$. The target estimand is $\chi(P_{0})$, where $\chi: \mathcal{P} \rightarrow \mathbb{R}$ is a measurable functional. For notational clarity, we have chosen the range space to be one-dimensional; generalizing our results to higher dimensions is straightforward (e.g. by applying the Cram\`er-Wold device: see p. 16 of \citet{vanderVaart98}). This set-up includes ``separated'' semiparametric problems in which we have $\mathcal{P} = \{P_{\theta,\eta}\}$ and $\chi(P_{\theta,\eta}) = \theta$, where $\theta \in \mathbb R^k$, for some $k\geq 1$ and $\eta$ is a possibly infinite-dimensional nuisance parameter.

An important special case is $\mathcal{P} = M(\mathcal{Z})$---the set of all probability measures on $(\mathcal{Z}, \mathcal{A})$---equipped with its Borel $\sigma$-field $\mathcal{B}_{M}$ for the weak topology \citep{Ghosh03}. Dirichlet processes \citep{Ferguson73} form the canonical family of prior distributions on $(M(\mathcal{Z}), \mathcal{B}_{M})$. If $P \sim DP(\alpha)$ for a finite positive measure $\alpha$ on $(\mathcal{Z}, \mathcal{A})$, then $DP(\alpha + n\mathbb{P}_{n})$ is a version of the posterior given $Z^{(n)}$. The process $DP(n\mathbb{P}_{n})$ is known as the \textit{Bayesian bootstrap} \citep{Rubin81}, which can be interpreted as the noninformative limiting posterior for any sequence of Dirichlet process models in which the prior information $\alpha$ goes to zero. The Bayesian bootstrap plays a central role in our methodology; we denote it by $\Pi_{BB}(\cdot \mid Z^{(n)})$. 

It is now necessary to introduce some concepts from semiparametric theory. We refer the reader to Section \ref{supp::overview} for a detailed overview of the following definitions. The functional $\chi$ is required to be \textit{differentiable} at all $P \in \mathcal{P}$ in the sense of \citet{vanderVaart91}. The \textit{efficient influence function} of $\chi$ at $P$ is a measurable function $\dot{\chi}_{P}: \mathcal{Z} \rightarrow \mathbb{R}$ satisfying $P[\dot{\chi}_{P}] = 0$ and $\Vert \dot{\chi}_{P} \Vert_{P} < \infty$. A sequence of regular estimators $\hat{\chi}_{n} = \hat{\chi}_{n}(Z^{(n)})$ is said to be \textit{asymptotically efficient} at $P_{0}$ if
\begin{equation} \label{eqn::asympeff}
    \sqrt{n}(\hat{\chi}_{n}-\chi(P_{0})) = \sqrt{n}\mathbb{P}_{n}[\dot{\chi}_{P_{0}}] + o_{P_{0}}(1).
\end{equation}
The resulting limiting distribution $\mathcal{N}(0, \Vert \dot{\chi}_{P_{0}} \Vert^{2}_{P_{0}})$ is optimal in terms of both the \textit{convolution theorem} \citep[e.g. Theorem 25.20 of][]{vanderVaart98} and the \textit{local asymptotic minimax theorem} \citep{vanderVaart92}. 

For smooth parametric models, it is well-known that---under mild regularity conditions---the marginal posterior of the parameter converges to a normal distribution centred at an efficient estimator (e.g. the maximum likelihood estimator) with covariance equal to the inverse Fisher information divided by the sample size. This is the \textit{(efficient) Bernstein-von Mises} (BvM) theorem (Chapter 10 of \citet{vanderVaart98}). Consequently, the frequentist coverage of many credible regions---such as balls around the posterior mean, equal tails intervals, highest posterior density regions etc.---will be asymptotically equal to its nominal level, and their sizes will be asymptotically optimal. 

A semiparametric BvM theorem describes the marginal posterior of the target functional on an infinite-dimensional model. Similar to the parametric case, we would like the marginal posterior to converge to a normal distribution centred at an efficient estimator with the efficient asymptotic variance. In this article,  we study weak convergence of the posterior distributions, which we control using  the \textit{bounded Lipschitz distance} $d_{BL}$ as in \cite{Castillo15} (see also Section \ref{supp::weak_conv}).
\begin{definition} \label{def::semi_bvm}
Let $\mathcal{L}_{\Pi}(\sqrt{n}(\chi(P) - \hat{\chi}_{n}) \mid Z^{(n)})$ denote the posterior law of $\sqrt{n}(\chi(P) - \hat{\chi}_{n})$, where $\hat{\chi}_{n}$ is any sequence of estimators satisfying (\ref{eqn::asympeff}). The posterior satisfies the \textit{semiparametric Bernstein-von Mises theorem} if
\begin{equation*}
    d_{BL}\left(\mathcal{L}_{\Pi}(\sqrt{n}(\chi(P) - \hat{\chi}_{n}) \mid Z^{(n)}), \mathcal{N}(0, \Vert \dot{\chi}_{P_{0}} \Vert^{2}_{P_{0}})\right)\xrightarrow[]{P_{0}} 0.
\end{equation*}
More informally, we can say that the posterior law of $\sqrt{n}(\chi(P) - \hat{\chi}_{n})$ ``converges weakly to $\mathcal{N}(0, \Vert \dot{\chi}_{P_{0}} \Vert^{2}_{P_{0}})$ in probability''. 
\end{definition}

In the non-Bayesian setting, efficient estimators have been obtained by \textit{one-step estimation} \citep{Pfanzagl82, Newey98}, which operates by performing bias corrections that specifically target the estimand. Suppose we have an initial estimate $\hat{P}$ of the data distribution, which induces a \textit{plug-in estimate} $\chi(\hat{P})$. Since the efficient influence function acts as a first-order distributional derivative of the functional \citep{Robins17,Fisher21,Kennedy22}, we might expect the ``second-order remainder''
\begin{equation*}
    r_{2}(P_{0},P) =  \chi(P_{0})-\chi(\hat{P}) - P_{0}[\dot{\chi}_{\hat{P}}]
\end{equation*}
to have some form of quadratic dependence on the estimation error of $\hat{P}$. This suggests that $\chi(\hat{P}) + P_{0}[\dot{\chi}_{\hat{P}}]$ would be an improvement over the plug-in estimator, but the true data distribution is of course unknown. Replacing $P_{0}$ with the empirical measure $\mathbb{P}_{n}$ leads to
\begin{equation*}
    \hat{\chi}_{\text{1-step}} = \chi(\hat{P})+\mathbb{P}_{n}[\dot{\chi}_{\hat{P}}].
\end{equation*}
which is called the \textit{one-step estimator}. We review sufficient conditions for the one-step estimator to be asymptotically efficient in Section \ref{sec::grad_based}; as we will discuss in Section \ref{sec::theory}, the assumptions for our approach share key conceptual similarities.

\begin{example}[The mean of an outcome that is missing-at-random] \label{exa::miss}
Suppose the data takes the form $Z = (X,A,AY)$, where $X$ is a vector of covariates, $A$ is a binary missingness indicator, and $Y$ is the outcome variable of interest that is unobserved when $A=0$. The target estimand is the outcome mean $\chi(P) = E_{P}[Y]$, which is identified by
\begin{equation*}
    \chi(P) =  E_{P}[E_{P}[Y \mid A=1, X]]
\end{equation*}
under the \textit{ignorability} (or \textit{missing-at-random}) \citep{Rubin76} assumption $A \independent Y \mid X$, and the \textit{positivity} (or \textit{overlap)} assumption:
\begin{equation*}
    P(A=1 \mid X) > 0
\end{equation*}
with $P$-probability 1. In practical terms, the missingness is deemed to be uninformative with respect to the outcome after adjusting for a set of measured covariates. This example is closely related to estimating the average causal effect of a binary treatment in causal inference (see Section \ref{sec::samp_treat} or \citet{Morgan07}).

The data distribution $P$ is naturally parameterized in terms of the ``propensity score'' $\pi(x) = P(A=1 \mid X=x)$, the conditional outcome density $p_{Y\mid X}(y \mid x)$, and the marginal covariate distribution $Q$. Let $m(x) = E_{P}[Y \mid X=x]$ denote the ``outcome regression function'', such that the target estimand can be written as
\begin{equation*}
    \chi(P) = Q[m(X)].
\end{equation*}
Thus, a plug-in estimator of $\chi$ takes the form
\begin{equation*}
    \chi(\hat{P}) = \hat{Q}[\hat{m}(X)]
\end{equation*}
for estimates $\hat{Q}$ and $\hat{m}$ of $Q$ and $m$ respectively.

The efficient influence function of $\chi$ is
\begin{equation*}
    \dot{\chi}_{P}(Z) = \frac{A}{\pi(X)}\{Y - m(X)\} + m(X) - \chi(P),
\end{equation*}
which was first derived in \citet{Robins92}. As a result, the one-step estimator can be easily constructed as
\begin{equation*}
    \hat{\chi}_{\text{1-step}} = \mathbb{P}_{n}\left[\frac{A}{\hat{\pi}(X)}\{Y-\hat{m}(X)\} + \hat{m}(X)\right],
\end{equation*}
where we have avoided the need to estimate $Q$ but now require an estimate $\hat{\pi}$ of the propensity score. In Section \ref{sec::miss}, we will study this example in the Bayesian setting.

\end{example}

\subsection{The one-step posterior correction} \label{sec::1step_cor}

If $\chi$ is well-defined for discrete distributions, its marginal posterior under a Dirichlet process prior on $P$ generally satisfies the semiparametric BvM theorem (see, for example, Theorem 12.2 of \citet{Ghosal17} and the subsequent discussion). However, the functionals in semiparametric problems are usually defined with respect to restrictions on the data-generating distribution that require some smoothness, in which case Dirichlet processes are inappropriate. For instance, we might require $\mathcal{P}$ to be dominated by the Lebesgue measure (e.g. Section \ref{sec::dens_squared}), or we might impose positivity conditions as described in Example \ref{exa::miss} (see also: Sections \ref{sec::miss} and \ref{sec::att}). 

As discussed in the introduction, nonparametric smoothing in our prior may introduce a non-negligible bias in the posterior on a low-dimensional functional that precludes a BvM theorem. In some settings, it has been shown that this bias can be avoided by undersmoothing \citep{Castillo12, Castillo15, Ray20}; that is, we must ensure that the true parameters are at least as regular as the prior, e.g. support on sufficiently smooth H\"older and Sobolev spaces \citep[Appendix C of][]{Ghosal17}. This is impractical, however, because the true regularities are unknown and the recovery of the nuisance parameters is degraded by overfitting, which can be detrimental to finite-sample performance. Moreover, undersmoothing may require non-trivial modifications to existing implementations of Bayesian computation methods.

Instead of tuning the prior, we propose an approach that adjusts the posterior directly. Our starting point is an ordinary Bayesian model that is---as usual---optimized to recover the whole data-generating distribution $P$ rather than any particular functional. This means that the prior can be specified according to conventional guidelines, and the posterior $\Pi(\cdot \mid Z^{(n)})$ can be computed with standard methods. As discussed later, our approach will also allow us to circumvent some of the limitations of likelihood-based methods.

\begin{definition}
The \textit{one-step posterior} of the functional $\chi$ given $Z^{(n)}$ is the pushforward measure of
\begin{align*}
\Tilde{\chi}: \mathcal{P} \times M(\mathcal{Z}) &\rightarrow \mathbb{R} \\
 \Tilde{\chi}(P,\Tilde{P}) &= \chi(P) + \Tilde{P}[\dot{\chi}_{P}].
\end{align*}
on the product measure
\begin{equation*}
    (P, \Tilde{P}) \mid Z^{(n)} \sim \Pi(\cdot \mid Z^{(n)}) \times \Pi_{BB}(\cdot \mid Z^{(n)}).
\end{equation*}
\end{definition} 

The structure of the corrected parameter $\Tilde{\chi}$ takes inspiration from the one-step estimator described in the previous subsection. Informally speaking, the initial estimate $\hat{P}$ has been replaced by the parameter $P$ drawn from the initial posterior, and the empirical measure $\mathbb{P}_{n}$ has been replaced by the Bayesian bootstrap parameter $\Tilde{P}$. In our case, we are correcting an entire probability distribution rather than just a point estimator. 

We can sample from the one-step posterior via the simple hierarchical scheme in Algorithm \ref{alg::1step_samp}. First, the user draws samples $\{P^{(1)}, \ldots, P^{(B)}\}$ of $P$ from their initial posterior $\Pi(\cdot \mid Z^{(n)})$ (e.g. via Markov chain Monte Carlo). Each posterior sample $P^{(b)}$ is then paired with an independent vector of uniform Dirichlet weights of length $n$; this arises from the representation
\begin{equation*}
    \Tilde{P} = \sum_{i=1}^{n}W_{i}\delta_{Z_i}
\end{equation*}
of the Bayesian bootstrap \citep[e.g. Section 4.7 of][]{Ghosal17}, where $(W_{1}, \ldots, W_{n}) \sim \text{Dir}(n;1,\ldots, 1)$. Finally, the draw of the estimand $\chi(P^{(b)})$ is summed with the weighted average of the efficient influence function at $P^{(b)}$.

\begin{algorithm}
\caption{One-step posterior sampling scheme}\label{alg::1step_samp}
Input: number of posterior samples $B$\;
Sample $\{P^{(1)}, \ldots, P^{(B)}\}$ from the initial posterior $\Pi(\cdot \mid Z^{(n)})$\;
 \For{$b \gets 1$ to $B$}{
   Sample $(W_{1}^{(b)},\ldots, W_{n}^{(b)})$ from $\textrm{Dir}(n;1,\ldots,1)$ independently of $P^{(b)}$\;
   Compute \begin{equation*}
        \Tilde{\chi}^{(b)} = \chi(P^{(b)}) + \sum_{i=1}^{n}W_{i}^{(b)}\dot{\chi}_{P^{(b)}}(Z_{i});\;
    \end{equation*}
 }
 Return $\{\Tilde{\chi}^{(1)}, \ldots , \Tilde{\chi}^{(B)}\}$.
\end{algorithm}

Algorithm \ref{alg::1step_samp} is attractive for several reasons. First, the same set of posterior samples of $P$ can be retained to use for different functionals. Second, there is no need to modify any existing implementation for sampling from $\Pi(\cdot \mid Z^{(n)})$; the user can just take the output from the initial implementation and add the random correction terms. This extra step is extremely efficient as it only requires drawing independent sets of Dirichlet weights and averaging the efficient influence functions. Moreover, steps 3-6 in Algorithm \ref{alg::1step_samp} can be easily parallelized to save further time. 

To provide some geometric intuition for our method, we adapt the ideas from \citet{Fisher21}. Consider the following mixture distributions constructed from an arbitrary $P \in \mathcal{P}$ (representing a draw from our initial posterior) and the true distribution $P_{0}$:
\begin{equation*}
    P^{\varepsilon} = (1-\varepsilon)P + \varepsilon P_{0}
\end{equation*}
for $0\leq \varepsilon \leq 1$. Within our model space $\mathcal{P}$, we could interpret $\{P^{\varepsilon}\}_{\varepsilon \in [0,1]}$ as a line segment linking $P$ and $P_{0}$ (these line segments are not guaranteed to lie within the model in general, but we will overlook this for the purpose of exposition). This induces a path $\{\chi(P^{\varepsilon})\}_{\varepsilon \in [0,1]}$ in the parameter space that reaches the true parameter value $\chi(P_{0})$ at $\varepsilon = 1$. Starting at $\varepsilon = 0$, we might hope to get closer to the truth by constructing a good approximation to this path.

This idea is illustrated visually in Figure \ref{fig::1step_proj}. The blue solid curve is equal to $\chi(P^{\varepsilon}) - \chi(P_{0})$, and its tangent at $\varepsilon = 0$---given by the blue dotted line---has slope $P_{0}[\dot{\chi}_{P}]$. The one-step estimator---taking $P$ as the ``initial estimate'' of $P_{0}$---estimates the slope with $\mathbb{P}_{n}[\dot{\chi}_{P}]$, and the resulting projection onto $\varepsilon = 1$ is shown by the green dashed line. With our posterior correction, the Bayesian bootstrap term $\Tilde{P}[\dot{\chi}_{P}]$ gives us a distribution on the slope rather than just a point estimate. As a result, we also get a distribution on the intercept at $\varepsilon = 1$ after projecting along the tangent from $\varepsilon = 0$, which is shown by the red histogram on the right. This intercept is the one-step corrected parameter; the histogram represents the uncertainty in $\Tilde{\chi}$ conditional on the ``posterior draw'' $P$. For the one-step posterior, this uncertainty is combined with the posterior uncertainty for $P$, which we recall to be conditionally independent of $\Tilde{P}$ given the data.

The reader may wonder why the empirical measure is insufficient for our posterior correction. We can illustrate the reason by considering linear functionals, which form the simplest class of examples. Let $\chi(P) = P[g]$, where $g:\mathcal{Z}\rightarrow \mathbb{R}$ is a known bounded function. The efficient influence function is $\dot{\chi}_{P} = g-P[g]$ \citep[e.g][]{Castillo15}, such that
\begin{equation*}
    \Tilde{\chi} = P[g] + \Tilde{P}[g - P[g]] = \Tilde{P}[g].
\end{equation*}
In words, the one-step corrected parameter is simply the Bayesian bootstrap expectation of $g$, which we already know to satisfy the semiparametric Bernstein-von Mises theorem. If $\Tilde{P}$ is replaced by $\mathbb{P}_{n}$ in the correction, then the parameter reduces to the one-step estimator $\mathbb{P}_{n}[g]$, which has zero posterior uncertainty. Thus, the Bayesian bootstrap correction term serves not only to debias the initial posterior but also to obtain the (asymptotically) correct posterior shape. This statement is established in generality in Section \ref{sec::theory}.

To summarize, the one-step posterior combines the uncertainty from two unknown quantities. First, we have the uncertainty in the functional $\chi(P)$, which arises from our initial posterior. Given a ``starting point'' $P$, we also have uncertainty in $P_{0}[\dot{\chi}_{P}]$, which is the slope of the tangent to the curve connecting $\chi(P)$ to the truth $\chi(P_{0})$. This latter source of uncertainty is modelled by replacing the unknown $P_{0}$ with the Bayesian bootstrap parameter $\Tilde{P}$.

\begin{figure}[]
\begin{center}
\includegraphics[width=6in]{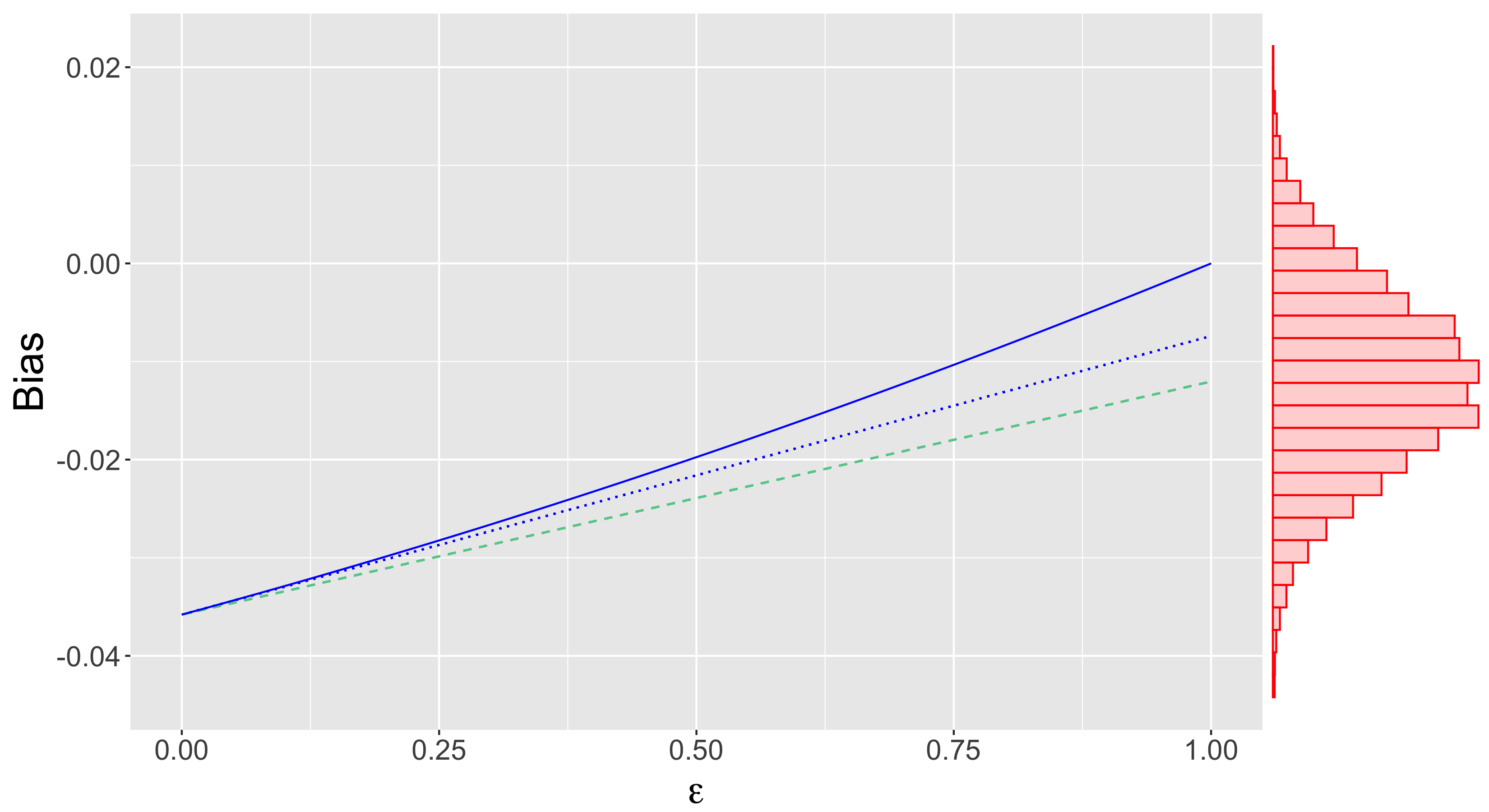}
\end{center}
\caption{One-step corrections as linear projections. The blue solid curve (\full) is $\chi(P^{\varepsilon}) - \chi(P_{0})$, whose tangent at $\varepsilon = 0$ is given by the blue dotted line (\dotted). The green dashed line (\dashed) estimates the tangent with the empirical distribution, and the histogram (\sqr{red}) on the right corresponds to the distribution of the tangent drawn from the Bayesian bootstrap, representing the uncertainty in the estimand conditional on the starting point $P$. \label{fig::1step_proj}}
\end{figure}

\subsection{Theory} \label{sec::theory}

In this subsection, we provide conditions for the one-step posterior to satisfy the semiparametric Bernstein-von Mises theorem; that is,
\begin{equation*}
    d_{BL}\left(\mathcal{L}_{\Pi \times \Pi_{BB}}(\sqrt{n}(\Tilde{\chi} - \hat{\chi}_{n}) \mid Z^{(n)}), \mathcal{N}(0, \Vert \dot{\chi}_{P_{0}} \Vert^{2}_{P_{0}})\right)\xrightarrow[]{P_{0}} 0,
\end{equation*}
where $\mathcal{L}_{\Pi \times \Pi_{BB}}(\sqrt{n}(\Tilde{\chi} - \hat{\chi}_{n}) \mid Z^{(n)})$ denotes the posterior law of $\sqrt{n}(\Tilde{\chi} - \hat{\chi}_{n})$ and $\hat{\chi}_{n}$ is any sequence of estimators satisfying (\ref{eqn::asympeff}). As discussed in Section \ref{sec::setup}, this justifies the use of central credible sets as confidence regions.

\begin{assumption} \label{ass::1step_bvm}
There exists a sequence of measurable subsets $(H_{n})_{n}$ of $\mathcal{P}$ satisfying $\Pi(P \in H_{n} \mid Z^{(n)})  \xrightarrow[]{P_{0}} 1$ with
\begin{itemize}
    \item[(a)]{(No second-order bias)} \begin{equation*}
    \sup_{P \in H_{n}} \left|\sqrt{n}r_{2}(P_{0},P)\right|=\sup_{P \in H_{n}} \left|\sqrt{n}\left\{\chi(P_{0})-\chi(P) - P_{0}[\dot{\chi}_{P}]\right\}\right| \rightarrow 0. 
    \end{equation*}
    \item[(b)]{($L_{2}$-convergence)} $\sup_{P \in H_{n}} \Vert \dot{\chi}_{P} - \dot{\chi}_{P_{0}} \Vert_{P_{0}}  \rightarrow 0$
    \item[(c)]{(Donsker class)} The sets $\{\dot{\chi}_{P}:\,P \in H_{n}\}$ are eventually contained in a fixed $P_{0}$-Donsker class \citep{vanderVaart96}.
\end{itemize}

\end{assumption}

We also provide an alternative to Assumption \ref{ass::1step_bvm}(c) below: \ref{ass::1step_bvm}(c$^{*}$)(i) weakens the Donsker condition at the cost of adding \ref{ass::1step_bvm}(c$^{*}$)(ii), which requires envelope function moment bounds. This alternative is also more appropriate if the sieve $\{H_{n}\}$ is not nested (i.e. we do not have $\ldots \supset H_{k-1} \supset H_{k} \supset H_{k+1} \supset \ldots$), which would likely preclude the possibility of finding a fixed Donsker class that satisfies Assumption \ref{ass::1step_bvm}(c).

\addtocounter{assumption}{-1}
\begin{assumption} \label{ass:onestep:post}
\,
\begin{itemize}
    \item[(c$^{*}$)]
    \begin{itemize}
        \item[(i)]{(Convergence of $\dot{\chi}_{P}$ under the empirical process)}
        \begin{equation*}
    \sup_{P \in H_{n}} |\mathbb{G}_{n}[\dot{\chi}_{P} - \dot{\chi}_{P_{0}}]| \xrightarrow[]{P_{0}} 0.
    \end{equation*}
    This statement is interpreted in terms of outer probability \citep{vanderVaart96} if the supremum is not measurable.
    \item[(ii)]{(Bounding of envelope functions)} The sets $\{\dot{\chi}_{P}:\,P \in H_{n}\}$ have envelope functions $G_{n}$ (i.e. $|\dot{\chi}_{P}(z)| \leq G_{n}(z)$ for all $P \in H_{n}$ and all $z \in \mathcal{Z}$) satisfying
    \begin{equation*}
        \lim_{C \rightarrow \infty} \limsup_{n \rightarrow \infty}P_{0}G_{n}^{2}1_{G_{n}^{2} > C} = 0, \qquad P_{0}G_{n}^{4} = o(n).
    \end{equation*}
    \end{itemize}
\end{itemize}
\end{assumption}

\begin{theorem} \label{theo::1step_bvm}
Under Assumption \ref{ass::1step_bvm}, the one-step posterior satisfies the semiparametric BvM theorem. The result also holds if Assumption \ref{ass::1step_bvm}(c) is replaced by Assumption \ref{ass::1step_bvm}(c$^{*}$).
\end{theorem}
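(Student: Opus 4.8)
The plan is to expand $\sqrt{n}(\Tilde{\chi} - \hat{\chi}_{n})$ around $P_{0}$ so that the limiting behaviour is carried by one Bayesian bootstrap term. Using $\Tilde{\chi} = \chi(P) + \Tilde{P}[\dot{\chi}_{P}]$, the identity $\chi(P) - \chi(P_{0}) = -P_{0}[\dot{\chi}_{P}] - r_{2}(P_{0},P)$ from the definition of the second-order remainder, the split $\Tilde{P} - P_{0} = (\Tilde{P} - \mathbb{P}_{n}) + (\mathbb{P}_{n} - P_{0})$, and $\mathbb{G}_{n}[\dot{\chi}_{P}] = \mathbb{G}_{n}[\dot{\chi}_{P_{0}}] + \mathbb{G}_{n}[\dot{\chi}_{P} - \dot{\chi}_{P_{0}}]$, one obtains
\begin{align*}
\sqrt{n}\big(\chi(P) + \Tilde{P}[\dot{\chi}_{P}] - \chi(P_{0})\big) &= \sqrt{n}(\Tilde{P} - \mathbb{P}_{n})[\dot{\chi}_{P}] + \mathbb{G}_{n}[\dot{\chi}_{P_{0}}] \\
&\qquad + \mathbb{G}_{n}[\dot{\chi}_{P} - \dot{\chi}_{P_{0}}] - \sqrt{n}\,r_{2}(P_{0},P).
\end{align*}
Subtracting $\sqrt{n}(\hat{\chi}_{n} - \chi(P_{0})) = \mathbb{G}_{n}[\dot{\chi}_{P_{0}}] + o_{P_{0}}(1)$, which is (\ref{eqn::asympeff}) together with $P_{0}[\dot{\chi}_{P_{0}}] = 0$, cancels the $\mathbb{G}_{n}[\dot{\chi}_{P_{0}}]$ terms and yields
\begin{equation*}
\sqrt{n}(\Tilde{\chi} - \hat{\chi}_{n}) = \sqrt{n}(\Tilde{P} - \mathbb{P}_{n})[\dot{\chi}_{P}] + R_{n}(P) - \varepsilon_{n}, \qquad R_{n}(P) := \mathbb{G}_{n}[\dot{\chi}_{P} - \dot{\chi}_{P_{0}}] - \sqrt{n}\,r_{2}(P_{0},P),
\end{equation*}
with $\varepsilon_{n} = o_{P_{0}}(1)$ depending on neither $P$ nor $\Tilde{P}$.

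Assumption \ref{ass::1step_bvm}(a) makes $\sqrt{n}\,r_{2}(P_{0},P)$ vanish uniformly over $H_{n}$, and $\mathbb{G}_{n}[\dot{\chi}_{P} - \dot{\chi}_{P_{0}}]$ vanishes uniformly over $H_{n}$ either by asymptotic equicontinuity of the fixed $P_{0}$-Donsker class of \ref{ass::1step_bvm}(c) combined with the $L_{2}$-shrinkage \ref{ass::1step_bvm}(b), or directly by \ref{ass::1step_bvm}(c$^{*}$)(i) (in outer probability if the suprema fail to be measurable). Hence $\sup_{P \in H_{n}} |R_{n}(P)| \xrightarrow[]{P_{0}} 0$, and since $\Pi(P \in H_{n} \mid Z^{(n)}) \xrightarrow[]{P_{0}} 1$ the shift $R_{n}(P) - \varepsilon_{n}$ is negligible in probability under the one-step posterior; by a Slutsky-type argument for the bounded Lipschitz metric it therefore suffices to prove that the law of $\sqrt{n}(\Tilde{P} - \mathbb{P}_{n})[\dot{\chi}_{P}]$ under $(P,\Tilde{P}) \sim \Pi(\cdot \mid Z^{(n)}) \times \Pi_{BB}(\cdot \mid Z^{(n)})$ converges weakly to $\nu := \mathcal{N}(0, \Vert \dot{\chi}_{P_{0}} \Vert^{2}_{P_{0}})$ in $P_{0}$-probability.

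For this I would condition on $P$ first and use the exponential representation of the Dirichlet weights: writing $\Tilde{P} = \sum_{i} W_{i}\delta_{Z_{i}}$ with $W_{i} = E_{i}/\sum_{j} E_{j}$, $E_{i} \iid \mathrm{Exp}(1)$, and $\bar{E}_{n} = n^{-1}\sum_{j} E_{j}$,
\begin{equation*}
\sqrt{n}(\Tilde{P} - \mathbb{P}_{n})[\dot{\chi}_{P}] = \bar{E}_{n}^{-1}\,\frac{1}{\sqrt{n}}\sum_{i=1}^{n}(E_{i} - 1)\big(\dot{\chi}_{P}(Z_{i}) - \mathbb{P}_{n}[\dot{\chi}_{P}]\big),
\end{equation*}
so that, conditionally on $(P, Z^{(n)})$, the sum has mean zero and variance $\sigma_{n}^{2}(P) := \mathbb{P}_{n}[\dot{\chi}_{P}^{2}] - (\mathbb{P}_{n}[\dot{\chi}_{P}])^{2}$. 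The first task is $\sup_{P \in H_{n}} \lvert \sigma_{n}^{2}(P) - \Vert \dot{\chi}_{P_{0}} \Vert^{2}_{P_{0}} \rvert \xrightarrow[]{P_{0}} 0$: the population limits $P_{0}[\dot{\chi}_{P}^{2}] \to \Vert \dot{\chi}_{P_{0}} \Vert^{2}_{P_{0}}$ and $P_{0}[\dot{\chi}_{P}] = P_{0}[\dot{\chi}_{P} - \dot{\chi}_{P_{0}}] \to 0$ follow from \ref{ass::1step_bvm}(b), while the empirical fluctuations $(\mathbb{P}_{n} - P_{0})[\dot{\chi}_{P}]$ and $(\mathbb{P}_{n} - P_{0})[\dot{\chi}_{P}^{2}]$ are controlled uniformly over $H_{n}$ by a uniform law of large numbers --- provided by the fixed Donsker class (whose square-integrable envelope makes the associated class of squares Glivenko--Cantelli), or under \ref{ass::1step_bvm}(c$^{*}$) by a truncation argument using $P_{0}G_{n}^{2}1_{G_{n}^{2} > C}$ and $P_{0}G_{n}^{4} = o(n)$. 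The same envelope ingredients verify the Lindeberg condition $\sup_{P \in H_{n}} \mathbb{P}_{n}\big[(\dot{\chi}_{P} - \mathbb{P}_{n}[\dot{\chi}_{P}])^{2}\,1_{\{|\dot{\chi}_{P} - \mathbb{P}_{n}[\dot{\chi}_{P}]| > \epsilon\sqrt{n}\}}\big] \xrightarrow[]{P_{0}} 0$ for every $\epsilon > 0$. The Lindeberg--Feller theorem, applied conditionally on $(P,Z^{(n)})$ to the triangular array $n^{-1/2}(E_{i} - 1)(\dot{\chi}_{P}(Z_{i}) - \mathbb{P}_{n}[\dot{\chi}_{P}])$, then shows that, uniformly over $P \in H_{n}$, the conditional law of the sum is within $o_{P_{0}}(1)$ in $d_{BL}$ of $\mathcal{N}(0,\sigma_{n}^{2}(P))$; combining this with $\bar{E}_{n}^{-1} \to 1$ (a statement about the bootstrap randomness alone) and the $d_{BL}$-continuity of $\sigma^{2} \mapsto \mathcal{N}(0,\sigma^{2})$ gives $\sup_{P \in H_{n}} d_{BL}\!\big( \mathcal{L}(\sqrt{n}(\Tilde{P} - \mathbb{P}_{n})[\dot{\chi}_{P}] \mid P, Z^{(n)}),\, \nu\big) \xrightarrow[]{P_{0}} 0$.

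Finally, integrate over $P \sim \Pi(\cdot \mid Z^{(n)})$: the convexity bound $d_{BL}\big(\int \mu_{P}\, d\Pi(P \mid Z^{(n)}),\, \nu\big) \le \int d_{BL}(\mu_{P}, \nu)\, d\Pi(P \mid Z^{(n)})$ for fixed $\nu$, with $d_{BL} \le 2$ and $\Pi(P \notin H_{n} \mid Z^{(n)}) \xrightarrow[]{P_{0}} 0$, bounds the mixture law of $\sqrt{n}(\Tilde{P} - \mathbb{P}_{n})[\dot{\chi}_{P}]$ within $\sup_{P \in H_{n}} d_{BL}(\mu_{P}, \nu) + 2\,\Pi(P \notin H_{n} \mid Z^{(n)})$ of $\nu$, which tends to $0$ in $P_{0}$-probability; reinstating the negligible $R_{n}(P) - \varepsilon_{n}$ completes the proof, the \ref{ass::1step_bvm}(c$^{*}$) variant differing only in how the uniform law of large numbers and the Lindeberg step are justified. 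The main obstacle is exactly this uniform-in-$P$ conditional Bayesian bootstrap CLT: a textbook bootstrap CLT fixes the integrand, whereas here $\dot{\chi}_{P}$ ranges over the data-dependent, posterior-typical family $\{\dot{\chi}_{P} : P \in H_{n}\}$, so both the variance convergence and the Lindeberg bound must be made uniform over $H_{n}$ --- which is precisely what \ref{ass::1step_bvm}(b) and the complexity conditions \ref{ass::1step_bvm}(c)/(c$^{*}$) are engineered to deliver --- while the remainder expansion, the bias bound \ref{ass::1step_bvm}(a), and the empirical-process term are comparatively routine.
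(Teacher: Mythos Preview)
Your proposal is correct and follows essentially the same approach as the paper: the identical decomposition into $\sqrt{n}(\Tilde{P}-\mathbb{P}_{n})[\dot{\chi}_{P}] + R_{n}(P)$ (the paper simply fixes $\hat{\chi}_{n} = \chi(P_{0}) + \mathbb{P}_{n}[\dot{\chi}_{P_{0}}]$, so your $\varepsilon_{n}$ disappears), the same control of $R_{n}$ via asymptotic equicontinuity of the Donsker class or (c$^{*}$)(i) directly, and the same conditioning-then-mixing strategy for the bootstrap term. The only difference is that for the uniform-in-$P$ conditional Bayesian bootstrap CLT the paper cites Theorem~3.6.13 of \citet{vanderVaart96} under (c) and its own Lemma~\ref{lem::bvm_bb} (built on a quantitative bound from \citet{Ray21}) under (c$^{*}$), whereas you sketch a direct Lindeberg--Feller argument; just be aware that Lindeberg--Feller for each fixed $P$ gives only pointwise weak convergence, so extracting $\sup_{P\in H_{n}} d_{BL}(\cdot,\nu)\to 0$ requires either a quantitative CLT bound (as the paper uses) or a diagonal-sequence argument, a detail your sketch leaves implicit.
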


The structure of Assumption \ref{ass::1step_bvm} mirrors the standard assumptions for one-step estimators to attain asymptotic efficiency (see Section \ref{sec::grad_based}). To satisfy Assumptions \ref{ass::1step_bvm}(a) and \ref{ass::1step_bvm}(b), the sets $H_{n}$ are typically included in shrinking neighbourhoods of $P_{0}$. In light of the earlier discussion about the quadratic nature of $r_{2}$, this term can be neglected when these neighbourhoods shrink quickly enough. The empirical process conditions in Assumptions \ref{ass::1step_bvm}(c) and (c*) restrict the complexity of the posterior so that we can obtain uniform convergence results.

To provide some context for Assumption \ref{ass::1step_bvm}, we compare with the conditions for Theorem 2.1 in \citet{Castillo15}, which is a semiparametric BvM result for functionals. As one might expect, the second-order remainder must be negligible in both cases (Assumption \ref{ass::1step_bvm}(a)) so that the functional is well-approximated by its linear expansion. Assumptions \ref{ass::1step_bvm}(b), (c) and (c*) are not explicitly required by \citet{Castillo15}, but similar conditions are generally used to establish the asymptotically normal expansion of the likelihood (see (4.10) of \citet{Castillo15}, Theorem 1 of \citet{Ray20} and p. 373-374 of \citet{Ghosal17}). Analogous conditions can also be found in the classical setting for semiparametric likelihood methods \citep{Murphy00, Cheng08a}.

Crucially, our result avoids the restrictive ``change-of-measure'' condition in (2.14) of \citet{Castillo15} (see also (3.6) in \citet{Ray20}). This depends delicately on the interplay between the model and the prior, which can fail if the prior is oversmooth or adaptive. From a technical viewpoint, the condition is also challenging to verify for complex priors.

As we will see with the examples, the efficient influence function often admits a natural representation of the form
\begin{equation*}
    \dot{\chi}_{P} = \psi_{P} - P[\psi_{P}],
\end{equation*}
where $\psi_{P}: \mathcal{Z} \rightarrow \mathbb{R}$ is a measurable function. Theorem \ref{theo::1step_bvm} still holds if $\dot{\chi}_{P}$ and $\dot{\chi}_{P_{0}}$ are replaced by $\psi_{P}$ and $\psi_{P_{0}}$ respectively in Assumptions \ref{ass::1step_bvm}(b), (c) and (c*). 

For our theoretical analysis, the conditional independence of $P$ and $\Tilde{P}$ given $Z^{(n)}$ is crucial. This is because we wish to utilize the desirable asymptotic properties of the Bayesian bootstrap, and these properties must continue to hold even if we condition on $P$. Our method can be extended to work with proper Dirichlet process posteriors but this would necessitate additional assumptions on the prior base measure, so we will not pursue this further. 

\section{The integrated squared density} \label{sec::dens_squared}

Suppose that we observe independent and identically distributed data $Z^{(n)} =(Z_{1},\ldots,Z_{n})$ from a distribution $P_{0}$ on $\mathbb{R}$ that is absolutely continuous with respect to Lebesgue measure with a square-integrable density. We make slight changes in the notation by identifying distributions with their Lebesgue densities, e.g. $P_{0}$ is now identified by the true density $f_{0}$. Consider the problem of estimating $\chi(f)=\int f(z)^{2}\,dz$. There is a large literature on the estimation of $\chi(f)$ \citep[e.g.][]{Bickel88, Birge95, Laurent96, Newey98, Gine08} and more generally, on power integrals $\int f(z)^{q}dz$ for integers $q \geq 2$. The integrated squared density plays an important role in nonparametric rank statistics \citep[p.266-267 of][]{PrakasaRao83}.

In \cite{Castillo15}, the authors obtained a BvM theorem for this functional with random histogram priors, which are much simpler than common priors used in density estimation such as infinite mixture models. Under regularity assumptions on the true density $f_0$, we prove in this section that the one-step posterior distribution satisfies the semiparametric BvM theorem for Dirichlet process Gaussian location mixtures. The prior on $f$ is defined by 
\begin{equation} \label{eqn::dpmm_model}
f_{G,\sigma}(z) = \int_{\mathbb R} \varphi_{\sigma}(z-\mu) dG(\mu), \quad G \sim DP(M G_0), \quad M>0,
\end{equation}
where $\varphi_{\sigma}(z-\mu)$ is the Gaussian density centred at $\mu$ with standard deviation $\sigma$, and $G_0$ is a distribution on $\mathbb R$ with density $g_0$ that is positive and continuous and satisfies 
$g_0(x) \lesssim e^{-c_1 |x|^\tau}$ for some $\tau>0$. For instance, we can choose $G_0$ to be a Gaussian distribution. As in \citet{Kruijer10,Shen13,Scricciolo14}, we consider an inverse Gamma prior on $\sigma $: $\sigma \sim IG(a,b)$.

The efficient influence function for $\chi$ is $\dot{\chi}_{f}(Z) = 2\left\{f(Z) - \chi(f)\right\}$ \citep{Bickel88}, so the one-step corrected parameter is
\begin{equation*}
    \Tilde{\chi} = 2\Tilde{P}[f(Z)] - \chi(f).
\end{equation*}
It is straightforward to derive the second-order bias:
\begin{equation*}
    r_{2}(f_{0},f) = \Vert f - f_{0} \Vert_{2}^{2},
\end{equation*}
where $\Vert\cdot\Vert_{2}$ denotes the $L_{2}$-norm with respect to Lebesgue measure. We highlight the quadratic nature of this term; accordingly, we require the $L_{2}$ contraction rate for $f$ to be faster than $n^{-1/4}$.

We consider the following smoothness assumption on $f_{0}$ introduced by \citet{Shen13}.
\begin{definition}
     A density $f$ belongs to the class of locally $\alpha$-H\"older functions $C(\alpha, L, \tau_0)$ with polynomial function $L$ and $\tau_{0} \geq 0$ if $p$ has $k_\alpha= \lceil \alpha \rceil -1$ derivatives with 
     $$ |(D^{k_\alpha}f)(z+h)  - (D^{k_\alpha}f)(z)| \leq L(z) e^{\tau_{0}h^{2}}|h|^{\alpha-k_\alpha}, \quad \forall z,h \in \mathbb R.$$
\end{definition}

\begin{theorem} \label{th:f2}
Suppose that the true density $f_{0}$ belongs to a class $C(\alpha, L, \tau_0)$ for $\alpha > 1$. Assume further that there exists $\epsilon>0$ such that for all $k\leq k_\alpha$
     $$ P_{0}\left[\left(\frac{|D^k f_{0}|}{f_{0}}\right)^{(2\alpha+ \epsilon)/k}\right]< \infty , \quad P_{0}\left[ \left(\frac{L}{f_{0}}\right)^{2+ \epsilon/\alpha}\right]< \infty  $$
and $f_0$ is monotone non-increasing (resp. non-decreasing) for $z$ large enough (resp. $-z$ large enough) and there exist $c_1, c_2>0$ such that 
     $$ f_0(z) \leq c_1 e^{-c_2 |z|^{\tau_0}}.$$
If we specify a Dirichlet process Gaussian location mixture prior for $f$ as defined above, then the one-step posterior distribution of the integrated squared density satisfies the semiparametric BvM theorem. 
\end{theorem}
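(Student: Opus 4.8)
The plan is to derive Theorem~\ref{th:f2} from Theorem~\ref{theo::1step_bvm} by verifying Assumption~\ref{ass::1step_bvm} for a suitable sieve $(H_n)_n$, exploiting the representation $\dot\chi_f = \psi_f - P[\psi_f]$ with $\psi_f = 2f$ (so that, by the remark following Theorem~\ref{theo::1step_bvm}, parts (b) and (c$^{*}$) may be checked with $2f$ and $2f_0$) together with the explicit form $r_2(f_0,f) = \|f - f_0\|_2^2$ recorded above.

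First I would fix the contraction rate and the sieve. Under the stated local H\"older smoothness of order $\alpha$, the moment bounds on $|D^k f_0|/f_0$ and $L/f_0$, and the tail bound $f_0(z) \le c_1 e^{-c_2|z|^{\tau_0}}$, the adaptive posterior contraction theory for Dirichlet process Gaussian location mixtures with an inverse-gamma bandwidth prior \citep{Kruijer10,Shen13,Scricciolo14} yields an $L_2$ contraction rate $\epsilon_n \asymp n^{-\alpha/(2\alpha+1)}(\log n)^{t}$, and the accompanying proofs simultaneously produce a sieve $\mathcal{S}_n$ of posterior probability tending to one on which $\underline\sigma_n \le \sigma \le \overline\sigma_n$ with $\underline\sigma_n \asymp n^{-1/(2\alpha+1)}$ up to logs, $G$ is effectively supported in $[-a_n,a_n]$ with $a_n$ a power of $\log n$, and the number of effective atoms is of order $a_n/\underline\sigma_n$ up to logarithmic factors. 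Taking $H_n = \mathcal{S}_n \cap \{\,f : \|f - f_0\|_2 \le \epsilon_n\,\}$ gives $\Pi(H_n \mid Z^{(n)}) \to 1$ in $P_0$-probability. Since $\alpha > 1$ makes $\sqrt n\,\epsilon_n^2 \to 0$, we get $\sup_{f\in H_n}\sqrt n\,|r_2(f_0,f)| \le \sqrt n\,\epsilon_n^2 \to 0$, which is Assumption~\ref{ass::1step_bvm}(a); and because $f_0$ is bounded (being continuous with exponential tails), $\sup_{f \in H_n}\|f - f_0\|_{P_0}^2 \le \|f_0\|_\infty\,\epsilon_n^2 \to 0$, which is Assumption~\ref{ass::1step_bvm}(b).

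The substantive part is Assumption~\ref{ass::1step_bvm}(c$^{*}$): a fixed Donsker class as in \ref{ass::1step_bvm}(c) is unavailable, since $\mathcal{S}_n$ unavoidably contains mixtures with $\sigma \downarrow 0$ and hence unbounded total variation. For \ref{ass::1step_bvm}(c$^{*}$)(ii) I would take $G_n = 2\sup_{f \in H_n}|f|$ and bound it in two regimes: on the bulk, $L_2$-proximity to $f_0$ inside $H_n$ together with $\sigma \ge \underline\sigma_n$ excludes tall, narrow modes, so $\sup_{f \in H_n}\|f\|_\infty$ stays controlled; for $|z| \gg a_n$ the truncated mixtures decay at a sub-Gaussian rate, to be weighed against $f_0(z) \le c_1 e^{-c_2|z|^{\tau_0}}$ and the polynomial $L(z)$. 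The hypotheses $\alpha > 1$, $P_0[(L/f_0)^{2+\epsilon/\alpha}]<\infty$ and $P_0[(|D^kf_0|/f_0)^{(2\alpha+\epsilon)/k}]<\infty$ are exactly what is needed to conclude $P_0 G_n^4 = o(n)$ and $\lim_C\limsup_n P_0 G_n^2 \Ind_{G_n^2 > C} = 0$. For \ref{ass::1step_bvm}(c$^{*}$)(i) I would apply a bracketing maximal inequality for the empirical process indexed by $\{\,f - f_0 : f \in H_n\,\}$: Gaussian location mixtures with bandwidth $\ge \underline\sigma_n$ and mixing support $[-a_n,a_n]$ have bracketing entropy of order $a_n/\underline\sigma_n$ times a polylogarithmic factor, and combining this with the $L_2(P_0)$-diameter bound $\lesssim \epsilon_n$ and the envelope control makes the entropy integral, hence $\E^{*}\sup_{f \in H_n}|\mathbb{G}_n[f - f_0]|$, tend to zero.

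I expect the main obstacle to be the package in the previous paragraph. One has to trade off the complexity of the Gaussian-mixture sieve---which grows polynomially in $n$ because adaptivity forces $\sigma$ to shrink at rate $n^{-1/(2\alpha+1)}$---against the size of the envelope, and to establish carefully that the sup-norm of the densities stays controlled on the contraction-ball intersection $H_n$ rather than merely on $\mathcal{S}_n$, which requires using $L_2$-closeness to $f_0$ to rule out spikes of width comparable to $\underline\sigma_n$. Handling the tails of the envelope, where the sub-Gaussian decay of the truncated mixtures must be balanced against the heavier possible decay of $f_0$ and the polynomial growth $L(z)$, is a secondary but delicate point and is where the moment conditions on $L/f_0$ and $|D^kf_0|/f_0$ are consumed.
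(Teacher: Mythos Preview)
Your overall strategy---verify Assumption~\ref{ass::1step_bvm} via the Shen sieve, $L_2$ contraction, and a bracketing maximal inequality---is the same as the paper's. But three of your attributions are off, and the first one leaves a genuine gap.

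First, the sharp $L_2$ contraction rate $\epsilon_n \asymp n^{-\alpha/(2\alpha+1)}(\log n)^t$ that you take as input is \emph{not} delivered by \citet{Kruijer10,Shen13,Scricciolo14}: those references give the Hellinger rate, and Scricciolo's $L_2$ rate is only $\tilde\epsilon_n(n\tilde\epsilon_n^2)^{1/4}$. The paper in fact derives the sharp $L_2$ rate inside the proof, as a consequence of a sup-norm bound $\sup_{f\in H_n}\|f\|_\infty \le C\|f_0\|_\infty$, via $\|f-f_0\|_2^2 \le 2 d_H(f,f_0)^2(\|f\|_\infty+\|f_0\|_\infty)$. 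This sup-norm bound is where $\alpha>1$ is actually consumed (not, as you wrote, in making $\sqrt n\,\epsilon_n^2\to 0$, which needs only $\alpha>1/2$): on the Hellinger ball one has $\mathrm{Leb}\{f\ge 4\|f_0\|_\infty\}\lesssim \tilde\epsilon_n^2$, while $|f'|\lesssim\sigma_n^{-2}$ forces any excursion to persist on an interval of length $\gtrsim \sigma_n^2\asymp \tilde\epsilon_n^{2/\alpha}$; the contradiction requires $\tilde\epsilon_n^{2/\alpha}\gg\tilde\epsilon_n^2$, i.e.\ $\alpha>1$. You had the ``rule out spikes'' intuition, but buried it under envelope control instead of using it up front to produce both the sup-norm bound and the $L_2$ rate.

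Second, once that sup-norm bound is in hand, the envelope for $\{f-f_0:f\in H_n\}$ is a \emph{constant} multiple of $\|f_0\|_\infty$, so (c$^*$)(ii) is immediate. Your paragraph on tails and envelope moments is unnecessary, and the claim that the hypotheses $P_0[(L/f_0)^{2+\epsilon/\alpha}]<\infty$ and $P_0[(|D^kf_0|/f_0)^{(2\alpha+\epsilon)/k}]<\infty$ are ``exactly what is needed'' for the envelope is a misreading: those are precisely the standing assumptions of \citet{Shen13} for the Hellinger contraction and sieve, and play no further role. With the constant envelope and the $L_2(P_0)$ diameter $\delta_n\asymp\|f_0\|_\infty^{1/2}\tilde\epsilon_n$, the paper applies Lemma~19.36 of \citet{vanderVaart98} and bounds the bracketing integral by $J_{[],n}\lesssim\sqrt{J_n}\,\tilde\epsilon_n\log n\asymp n^{-(\alpha-1/2)/(2\alpha+1)}(\log n)^{r_1}\to 0$, matching your entropy sketch.
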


From \citet{Shen13}, under the above conditions the posterior distribution concentrates in Hellinger or $L_1$ distance at the rate $n^{-\alpha/(2\alpha+1)} (\log n)^r$ for some $r>0$, which is minimax adaptive up to a $\log n$ term for all $\alpha>0$  \citep{maugis:michel}. The condition $\alpha>1$ is used to verify part (c*) of Assumption \ref{ass::emp_proc_alt}. It can be weakened to $\alpha >1/2$ by considering the truncated prior: $f \sim \Pi_T(df) \propto \Pi(df)\Ind_{\|f\|_\infty< K}$ for some arbitrarily large but fixed constant $K$. 

Note that there are different definitions of the class $C(\alpha, L, \tau_0)$ in the literature, and all these variants can be used in Theorem \ref{th:f2}. As in \cite{donnet18}, this condition can be replaced by assuming the existence of a finite mixture of Gaussian densities $f_{G^*,\sigma}$ with variance $\sigma^2=o(1)$ with at most $N=O(\sigma^{-1}\log 1/\sigma)$ support points such that the Kullback-Leibler divergence between $f_0$ and $f_{G^*,\sigma}$ is of order $O(\sigma^{2\alpha})$.


\section{The mean of an outcome that is missing-at-random} \label{sec::miss}

\subsection{General methodology}

In this section, we revisit the missing data example in Example \ref{exa::miss}. Recall that the data takes the form $Z = (X,A,AY)$, where $X$ is a vector of covariates, $A$ is a binary missingness indicator, $Y$ is the outcome variable of interest that is unobserved when $A=0$, and the target estimand is 
\begin{equation*}
    \chi(P) =  E_{P}[E_{P}[Y \mid A=1, X]].
\end{equation*}

The density for a single observation $z = (x,a,ay)$ can be written as
\begin{equation*}
    p(z) = \pi(x)^{a}(1-\pi(x))^{1-a}p_{Y \mid X}(y \mid x)^{a}q(x),
\end{equation*}
which factorizes over the parameters. Consequently, if the three parameters are given independent priors, they will remain independent given $Z^{(n)}$ in the posterior. An important implication is that the model for $\pi$ will not feature in the marginal posterior of $\chi$ since the functional is only defined by the other two parameters. This well-known phenomenon is discussed in \citet{RobinsRitov97,Robins15} and \citet{Ray20}.

As stated earlier, the efficient influence function of $\chi$ is
\begin{equation*}
    \dot{\chi}_{P}(Z) = \frac{A}{\pi(X)}\{Y - m(X)\} + m(X) - \chi(P),
\end{equation*}
so that the one-step corrected parameter
\begin{equation*}
    \Tilde{\chi}(P, \Tilde{P}) = \Tilde{P}\left[\frac{A}{\pi(X)}\{Y - m(X)\} + m(X)\right]
\end{equation*}
depends on $P$ only through $\pi$ and $m$. 

In the following, we state assumptions for the one-step posterior to satisfy the semiparametric BvM theorem, specializing the conditions in Assumption \ref{ass::1step_bvm}. Let $\pi_{0}$ and $m_{0}$ respectively denote the true values of $\pi$ and $m$ associated with $P_{0}$. 

\begin{assumption} \label{ass::miss_bvm}
There exists a sequence of measurable subsets $(H_{n})_{n}$ of $\mathcal{P}$ satisfying $\Pi(P \in H_{n} \mid Z^{(n)})  \xrightarrow[]{P_{0}} 1$ such that
\begin{itemize}
    \item[(a)]{($L_{2}$-convergence of $\pi$ and $m$)} There exist numbers $\rho_{n},\varepsilon_{n} \rightarrow 0$ such that
    \begin{align*}
        \sup_{P \in H_{n}} \left\Vert \frac{1}{\pi} - \frac{1}{\pi_{0}} \right\Vert_{P_{0}}  &\leq \rho_{n},  \quad \text{and} \quad 
        \sup_{P \in H_{n}} \left\Vert m - m_{0} \right\Vert_{P_{0}}  \leq \varepsilon_{n}
    \end{align*}
    and $\sqrt{n}\rho_{n}\varepsilon_{n} \rightarrow 0$.
    \item[(b)]{(Uniform bounding)} For all sufficiently large $n$, there exist fixed constants $\delta, C > 0$ such that for all $P \in H_{n}$,
    \begin{align*}
        \left\Vert\frac{1}{\pi}\right\Vert_{\infty}+ \left\Vert\frac{1}{\pi_{0}} \right\Vert_{\infty}&< \frac{1}{\delta} , \quad \text{and} \quad 
        P_{0}(|Y - m(X)|<C) = 1.
    \end{align*}
    \item[(c)]{(Donsker class)} The sequences of sets $\{\pi:\,P \in H_{n}\}$ and $\{m:\,P \in H_{n}\}$ are both eventually contained in fixed $P_{0}$-Donsker classes.
    
\end{itemize}
\end{assumption}

\begin{theorem} \label{theo::1step_miss}
Under Assumption \ref{ass::miss_bvm}, the one-step posterior for the outcome mean satisfies the semiparametric BvM theorem.
\end{theorem}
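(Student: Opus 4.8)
The plan is to derive Theorem \ref{theo::1step_miss} as a corollary of Theorem \ref{theo::1step_bvm} by verifying that Assumption \ref{ass::miss_bvm} implies Assumption \ref{ass::1step_bvm} (or its variant with (c$^*$)), specialized to this functional. The key structural observation is that the efficient influence function here has the natural form $\dot\chi_P = \psi_P - P[\psi_P]$ with $\psi_P(Z) = \frac{A}{\pi(X)}\{Y - m(X)\} + m(X)$, so by the remark following Theorem \ref{theo::1step_bvm} it suffices to check the $L_2$-convergence, Donsker, and envelope conditions for the collection $\{\psi_P : P \in H_n\}$ rather than for $\dot\chi_P$ itself. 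Since $\psi_P$ depends on $P$ only through $(\pi, m)$, everything reduces to controlling these two one-dimensional nuisance functions.

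First I would verify Assumption \ref{ass::1step_bvm}(a), the no-second-order-bias condition. Under ignorability and positivity a standard calculation (the classical doubly robust identity) gives
\begin{equation*}
r_2(P_0, P) = \chi(P_0) - \chi(P) - P_0[\dot\chi_P] = P_0\!\left[\left(\frac{\pi_0(X)}{\pi(X)} - 1\right)(m_0(X) - m(X))\right],
\end{equation*}
so by Cauchy--Schwarz and the lower bound $\pi \geq \delta$ from Assumption \ref{ass::miss_bvm}(b),
\begin{equation*}
|r_2(P_0,P)| \leq \left\Vert \tfrac{1}{\pi} - \tfrac{1}{\pi_0}\right\Vert_{P_0}\,\Vert \pi_0\Vert_\infty\,\Vert m - m_0\Vert_{P_0} \lesssim \rho_n \varepsilon_n,
\end{equation*}
uniformly over $P \in H_n$; multiplying by $\sqrt n$ and invoking $\sqrt n \rho_n \varepsilon_n \to 0$ gives (a). Next, for (b), I would bound $\Vert \psi_P - \psi_{P_0}\Vert_{P_0}$ by splitting into the term coming from $m - m_0$ and the term coming from $\tfrac{1}{\pi} - \tfrac{1}{\pi_0}$; using $\Vert 1/\pi\Vert_\infty \leq 1/\delta$, $|Y - m(X)| \leq C$, and the $L_2$ rates $\varepsilon_n, \rho_n$, this is $O(\varepsilon_n + \rho_n) \to 0$. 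For the empirical-process condition, since $\{\pi : P \in H_n\}$ and $\{m : P \in H_n\}$ sit in fixed Donsker classes and $\psi_P$ is built from these by Lipschitz operations involving the fixed bounded functions $A$, $Y\Ind_{A=1}$ and uniformly bounded multipliers $1/\pi$, preservation theorems for Donsker classes (products, sums, Lipschitz transformations with bounded envelopes --- see \citet{vanderVaart96}) show $\{\psi_P : P \in H_n\}$ lies in a fixed $P_0$-Donsker class, giving Assumption \ref{ass::1step_bvm}(c); the envelopes are uniformly bounded by a constant so the moment conditions in (c$^*$)(ii), if that route is preferred, are trivial. Finally, the posterior contraction $\Pi(P \in H_n \mid Z^{(n)}) \to 1$ is assumed directly. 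With all parts of Assumption \ref{ass::1step_bvm} verified, Theorem \ref{theo::1step_bvm} delivers the semiparametric BvM theorem for the one-step posterior.

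The main obstacle I anticipate is the bookkeeping in the Donsker-preservation step: one must be careful that the multiplier $1/\pi$ ranges over a class that is not only Donsker but has a uniform envelope (supplied by \ref{ass::miss_bvm}(b)), since products of Donsker classes need not be Donsker without such boundedness, and that the indicator--outcome factor $AY = Y\Ind_{A=1}$ is handled as a fixed function rather than as a varying one. A secondary subtlety is making the $r_2$ identity rigorous --- it relies on $\pi$ being bounded away from zero so that the ratio $\pi_0/\pi$ is well defined and bounded, which is exactly why the uniform positivity in \ref{ass::miss_bvm}(b) is imposed; the doubly robust cancellation is what produces the product structure $\rho_n\varepsilon_n$ rather than requiring either rate to be $o(n^{-1/2})$ individually, and this is the technically load-bearing computation of the proof.
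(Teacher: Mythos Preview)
Your proposal is correct and follows essentially the same route as the paper: reduce to Theorem \ref{theo::1step_bvm} via the $\psi_P$ formulation, verify (a) through the doubly robust identity \eqref{eqn::miss_2nd} and Cauchy--Schwarz, verify (b) by the same decomposition of $\psi_P - \psi_{P_0}$, and verify (c) by Donsker preservation under bounded Lipschitz operations. The only discrepancy is a harmless sign in your $r_2$ display (the paper writes it as $P_0[\pi_0(1/\pi_0 - 1/\pi)(m_0 - m)]$, which is the negative of your expression), but this does not affect the bound.
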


Assumption \ref{ass::miss_bvm}(a) states that $1/\pi$ and $m$ must both converge uniformly on $H_{n}$ to their respective truths in $L_{2}$, and their combined rate of convergence must be faster than  $1/\sqrt{n}$. To connect this to Assumption \ref{ass::1step_bvm}(a), we note that the second-order bias takes the form
\begin{equation} \label{eqn::miss_2nd}
    r_{2}(P_{0}, P) 
    = P_{0}\left[\pi_{0}(X)\left(\frac{1}{\pi_{0}(X)} - \frac{1}{\pi(X)}\right)\{m_{0}(X) - m(X)\}\right] \\
    \leq \left\Vert \frac{1}{\pi_{0}} - \frac{1}{\pi} \right\Vert_{P_0} \Vert m_{0} - m\Vert_{P_0},
\end{equation}
where the upper bound is due to Cauchy-Schwarz. Thus, Assumption \ref{ass::miss_bvm}(a) immediately implies Assumption \ref{ass::1step_bvm}(a).

The cross-term structure of the second-order bias enables the rate of convergence of one parameter to compensate for the other; this is a property known as ``rate double robustness'' \citep{Rotnitzky21}. For example, if $\pi$ were to be modelled parametrically, then the posterior for $m$ would be permitted to contract around $m_{0}$ arbitrarily slowly. Alternatively, $m$ and $\pi$ could both be modelled using flexible nonparametric methods, and the condition would be satisfied provided that both their posteriors contract faster than $n^{-1/4}$. As mentioned earlier, the ordinary marginal posterior of $\chi$ under independent priors will not depend on the model for $\pi$, so it cannot possess this double robust property \citep{Robins00, Robins15}.

The uniform bounds in Assumption \ref{ass::miss_bvm}(b) help to ensure that the $L_{2}$-convergence of $\dot{\chi}_{P}$ (i.e. Assumption \ref{ass::1step_bvm}(b)) follows from Assumption \ref{ass::miss_bvm}(a). The assumption that the propensity score is bounded away from zero is commonly referred to as ``positivity'' and is routine for any estimation approach that involves inverse probability weighting. As discussed in Section 4.2 of \citet{Kennedy22} in a classical setting, the boundedness of $|Y - m(X)|$ can be relaxed to bounded moment conditions by using H\"older's inequality. 

\subsection{Comparison with prior augmentations in the least favourable direction}

This example was studied by \citet{Ray20} under the constraint that the outcome variables are binary. With similar motivations to our correction method, \cite{Ray20} proposed constructing a preliminary estimator $\hat{\pi}_{n}$ of the propensity score based on a separate dataset that is independent of $Z^{(n)}$ and modelling the outcome regression function by
\begin{equation} \label{eqn::ray_aug}
    m(x) = \Psi\left\{w(x) + \frac{\lambda}{\hat{\pi}_{n}(x)} \right\},
\end{equation}
where $\Psi(t) = 1/(1+e^{-t})$ is the expit function, and $(w,\lambda)$ are parameters.

The prior on $w(x)$ should be specified such that the resulting prior on $\Psi\{w(x)\}$ captures the user's a priori beliefs on $m(x)$. In contrast, the augmentation term $\lambda/\hat{\pi}_{n}(x)$ plays a technical role, with the goal of satisfying the change-of-measure condition for the semiparametric BvM theorem as discussed in Section \ref{sec::1step_cor}. The fluctuation parameter $\lambda$ follows a $\mathcal{N}(0,\sigma_{n}^{2})$ prior independently of $w(x)$, where the variance $\sigma_{n}^{2}$ is allowed to vary with $n$. For the marginal covariate distribution, \citet{Ray20} specified a Dirichlet process prior independent of $(w(x),\lambda)$. We will assume throughout that $\pi_{0}$ and $\hat{\pi}_{n}$ satisfy the following assumption.

\begin{assumption} \label{ass::propensity}
    The true propensity score satisfies $\pi_{0} > \delta$ for some $\delta > 0$, and the preliminary estimators $\hat{\pi}_{n}$ satisfy $\Vert 1/\hat{\pi}_{n}\Vert_{\infty} = O_{P_0}(1)$ and
\begin{equation*}
    \left\Vert\frac{1}{\hat{\pi}_{n}} - \frac{1}{\pi_{0}}\right\Vert_{P_0} = O_{P_0}(\rho_{n}),
\end{equation*}
where $\rho_{n} \rightarrow 0$. 
\end{assumption}

\begin{theorem}[Theorem 2 of \citet{Ray20}] \label{theo::ray_vdV}
Assume that there exist measurable sets $\mathcal{H}_{n}$ of functions satisfying, for every $t \in \mathbb{R}$ and some numbers $u_{n}, \varepsilon_{n} \rightarrow 0$,
\begin{align}
    \Pi(\lambda: |\lambda| \leq u_{n}\sigma_{n}^{2}\sqrt{n} \mid Z^{(n)})&\xrightarrow[]{P_{0}} 1, \label{ass::ray_3.9}\\
    \Pi((w,\lambda): w + (\lambda + tn^{-1/2})\hat{\pi}_{n}^{-1} \in \mathcal{H}_{n} \mid  Z^{(n)})&\xrightarrow[]{P_{0}} 1,\label{ass::ray_3.10}\\
    \sup_{m = \Psi(\eta): \eta \in \mathcal{H}_{n}}\Vert m-m_{0}\Vert_{P_0} &\leq \varepsilon_{n},\label{ass::ray_3.11}\\
    \sup_{m = \Psi(\eta): \eta \in \mathcal{H}_{n}} |\mathbb{G}_{n}[m-m_{0}]|&\xrightarrow[]{P_{0}} 0.  \label{ass::ray_3.12} 
\end{align}
Under Assumption \ref{ass::propensity}, if $n\sigma_{n}^{2} \rightarrow \infty$ and $\sqrt{n}\rho_{n}\varepsilon_{n} \rightarrow 0$, then the marginal posterior of $\chi$ satisfies the semiparametric Bernstein-von Mises theorem.
\end{theorem}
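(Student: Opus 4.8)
The plan is to follow the classical route for semiparametric Bernstein--von Mises theorems in the style of \citet{Castillo12,Castillo15} --- a change-of-measure argument that shifts the parameter along the least favourable direction of $\chi$, together with a local asymptotic normality (LAN) expansion of the likelihood --- but organised so that the covariate-distribution part is dispatched separately using Dirichlet process conjugacy. Since any two estimators satisfying \eqref{eqn::asympeff} differ by $o_{P_{0}}(n^{-1/2})$, it suffices to prove the convergence with the oracle centering $\hat{\chi}_{n} = \chi(P_{0}) + \mathbb{P}_{n}[\dot{\chi}_{P_{0}}]$; and since all relevant influence functions are bounded (positivity, binary $Y$), it is enough, by the continuity theorem for Laplace transforms (as used in \citet{Castillo15} to pass to $d_{BL}$), to show that for each real $h$ the posterior Laplace transform $E_{\Pi}[e^{h\sqrt{n}(\chi(P) - \hat{\chi}_{n})}\mid Z^{(n)}]$ converges in $P_{0}$-probability to $e^{h^{2}\Vert\dot{\chi}_{P_{0}}\Vert^{2}_{P_{0}}/2}$. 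The least favourable direction of $\chi$ decomposes into a covariate-model component, the score direction $m_{0}(x) - \chi(P_{0})$ in the tangent space of $Q$, and an outcome-model component proportional to $\pi_{0}(x)^{-1}$; since the propensity-score component of $\dot{\chi}_{P_{0}}$ vanishes, the unaugmented posterior cannot see this second direction, which is precisely why the augmentation $\lambda\hat{\pi}_{n}^{-1}$ is introduced.

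First I would handle the $Q$-component via Dirichlet process conjugacy. Because the outcome likelihood factorises over $(\pi,m,Q)$ with independent priors, $Q$ and $(w,\lambda)$ are independent in the posterior and $Q\mid Z^{(n)}$ is again a Dirichlet process; writing $\chi(P) = \int m\,dQ$ with $m = \Psi(w + \lambda\hat{\pi}_{n}^{-1})$, the Bernstein--von Mises theorem for linear functionals of a Dirichlet process posterior \citep{Ghosal17} gives, conditionally on any $m$ with $\Vert m - m_{0}\Vert_{P_{0}}\to 0$,
\[
\sqrt{n}\Bigl(\textstyle\int m\,dQ - \mathbb{P}_{n}[m]\Bigr)\ \big|\ Z^{(n)}\ \rightsquigarrow\ \mathcal{N}\bigl(0,\Vert m_{0} - \chi(P_{0})\Vert^{2}_{P_{0}}\bigr),
\]
uniformly over $m$ in a shrinking $L_{2}(P_{0})$-ball, which carries posterior probability tending to one by \eqref{ass::ray_3.11}. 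Integrating out $Q$ first therefore factors the Laplace transform as an asymptotically independent product: a factor $e^{h^{2}\Vert m_{0} - \chi(P_{0})\Vert^{2}_{P_{0}}/2}$ from the $Q$-part (uniformly over the conditioning $(w,\lambda)$, since the limit does not depend on them) times the posterior Laplace transform of $\sqrt{n}(\mathbb{P}_{n}[m] - \hat{\chi}_{n})$, which remains to be analysed.

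For that second factor, condition \eqref{ass::ray_3.12} linearises the empirical average: $\sqrt{n}(\mathbb{P}_{n}[m] - \hat{\chi}_{n}) = \sqrt{n}P_{0}[m - m_{0}] - \mathbb{G}_{n}[\tfrac{A}{\pi_{0}}(Y-m_{0})] + o_{P_{0}}(1)$ uniformly over the posterior's support, so the only posterior-uncertain piece is the plug-in bias $\sqrt{n}P_{0}[m - m_{0}]$, and it is this bias that the change of measure is designed to absorb. I would compute the posterior Laplace transform of $\sqrt{n}P_{0}[m-m_{0}]$ by the substitution $\lambda\mapsto\lambda + h/\sqrt{n}$, equivalently $\eta = w + \lambda\hat{\pi}_{n}^{-1}\mapsto\eta + (h/\sqrt{n})\hat{\pi}_{n}^{-1}$: the Jacobian is one; the Gaussian prior ratio
\[
\frac{\varphi_{\sigma_{n}}(\lambda + h/\sqrt{n})}{\varphi_{\sigma_{n}}(\lambda)} = \exp\Bigl(-\tfrac{h\lambda}{\sqrt{n}\sigma_{n}^{2}} - \tfrac{h^{2}}{2n\sigma_{n}^{2}}\Bigr)
\]
converges to one because $n\sigma_{n}^{2}\to\infty$ and, by \eqref{ass::ray_3.9}, $|\lambda|\le u_{n}\sigma_{n}^{2}\sqrt{n}$ on a set of posterior probability tending to one; and \eqref{ass::ray_3.10} keeps the shifted $\eta$ inside the set $\mathcal{H}_{n}$ over which the uniform estimates hold. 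Since $\Psi' = \Psi(1-\Psi)$ and $Y$ is binary, so that $\Var_{P_{0}}(Y\mid X) = m_{0}(1-m_{0})$ and hence $P_{0}[m_{0}(1-m_{0})/\pi_{0}] = \Vert\tfrac{A}{\pi_{0}}(Y-m_{0})\Vert^{2}_{P_{0}}$, the direction $\hat{\pi}_{n}^{-1}$ is asymptotically least favourable for $\chi$, and the outcome log-likelihood satisfies the LAN expansion
\[
\log\frac{p^{n}\bigl(Z\mid\eta + (h/\sqrt{n})\hat{\pi}_{n}^{-1}\bigr)}{p^{n}(Z\mid\eta)} = h\,\mathbb{G}_{n}\Bigl[\tfrac{A(Y-m_{0})}{\pi_{0}}\Bigr] - h\sqrt{n}P_{0}[m - m_{0}] - \tfrac{h^{2}}{2}\Bigl\Vert\tfrac{A(Y-m_{0})}{\pi_{0}}\Bigr\Vert^{2}_{P_{0}} + o_{P_{0}}(1),
\]
uniformly over $\eta\in\mathcal{H}_{n}$, where the drift $-h\sqrt{n}P_{0}[m-m_{0}]$ comes from $P_{0}[\tfrac{A}{\hat{\pi}_{n}}(Y-m)] = P_{0}[\tfrac{\pi_{0}}{\hat{\pi}_{n}}(m_{0}-m)] = -P_{0}[m-m_{0}] + P_{0}[\pi_{0}(\tfrac{1}{\hat{\pi}_{n}} - \tfrac{1}{\pi_{0}})(m_{0}-m)]$, whose last term is exactly the rate doubly robust cross term of \eqref{eqn::miss_2nd}, of order $\rho_{n}\varepsilon_{n} = o(n^{-1/2})$ by Assumption~\ref{ass::propensity} and $\sqrt{n}\rho_{n}\varepsilon_{n}\to 0$. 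Inserting this into the ratio of integrals defining the Laplace transform, the drift cancels the bias factor $e^{h\sqrt{n}P_{0}[m-m_{0}]}$ up to $e^{O(h\sqrt{n}\rho_{n}\varepsilon_{n})}\to 1$, the quadratic terms collect to $e^{h^{2}\Vert\frac{A}{\pi_{0}}(Y-m_{0})\Vert^{2}_{P_{0}}/2}$, and together with the explicit factor $e^{-h\mathbb{G}_{n}[\frac{A}{\pi_{0}}(Y-m_{0})]}$ this gives that the posterior law of $\sqrt{n}(\mathbb{P}_{n}[m] - \hat{\chi}_{n})$ converges to $\mathcal{N}(0,\Vert\tfrac{A}{\pi_{0}}(Y-m_{0})\Vert^{2}_{P_{0}})$.

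It remains to combine the two pieces: the covariate direction $m_{0}(X) - \chi(P_{0})$ and the outcome direction $\tfrac{A}{\pi_{0}(X)}(Y - m_{0}(X))$ are orthogonal in $L_{2}(P_{0})$ (condition on $X$ and use $E_{P_{0}}[\tfrac{A}{\pi_{0}}(Y-m_{0})\mid X] = 0$), so their variances add, $\Vert m_{0} - \chi(P_{0})\Vert^{2}_{P_{0}} + \Vert\tfrac{A}{\pi_{0}}(Y-m_{0})\Vert^{2}_{P_{0}} = \Vert\dot{\chi}_{P_{0}}\Vert^{2}_{P_{0}}$, and the product of the two Laplace-transform factors is $e^{h^{2}\Vert\dot{\chi}_{P_{0}}\Vert^{2}_{P_{0}}/2}$, which yields the semiparametric Bernstein--von Mises theorem; the centering can be switched to any other estimator satisfying \eqref{eqn::asympeff}. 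I expect the main obstacle to be the uniform LAN expansion of the outcome likelihood in the $\lambda$-direction: one must control the remainder and the quadratic term uniformly over the posterior's effective support $\mathcal{H}_{n}$ --- which is what the Donsker-type content of \eqref{ass::ray_3.12} provides --- while carrying the preliminary estimator $\hat{\pi}_{n}$ rather than $\pi_{0}$, so that the discrepancy $\hat{\pi}_{n}^{-1} - \pi_{0}^{-1}$ enters only through the ``rate doubly robust'' cross term of \eqref{eqn::miss_2nd}, which is exactly why the product rate $\sqrt{n}\rho_{n}\varepsilon_{n}\to 0$ is the condition needed; a secondary technical point is showing that the diffuse $\mathcal{N}(0,\sigma_{n}^{2})$ prior on $\lambda$ contributes nothing asymptotically, which is handled by $n\sigma_{n}^{2}\to\infty$ together with \eqref{ass::ray_3.9}.
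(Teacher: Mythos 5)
Note first that the paper does not prove this statement: it is imported verbatim as Theorem 2 of \citet{Ray20}, so the only ``proof'' to compare against is Ray and van der Vaart's original one. Your proposal is essentially a correct reconstruction of that argument: the Laplace-transform criterion for weak convergence in probability, the factorisation of the posterior into the Dirichlet-process part for $Q$ (handled by the BvM for linear DP functionals, uniformly over a shrinking $L_2(P_0)$-ball of regression functions) and the $(w,\lambda)$ part, the change of variables $\lambda\mapsto\lambda+h/\sqrt{n}$ (i.e.\ a shift of $\eta$ along the approximately least favourable direction $\hat{\pi}_n^{-1}$), the Gaussian prior-ratio control via \eqref{ass::ray_3.9} and $n\sigma_n^2\to\infty$, the set-stability condition \eqref{ass::ray_3.10}, the LAN expansion whose drift cancels the plug-in bias $\sqrt{n}P_0[m-m_0]$ up to the doubly robust cross term of order $\sqrt{n}\rho_n\varepsilon_n$, and the $L_2(P_0)$-orthogonality of the two score directions to add the variances. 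Two points deserve more care than your sketch gives them: (i) condition \eqref{ass::ray_3.12} controls $\mathbb{G}_n[m-m_0]$ only, whereas the LAN remainder requires uniform control of the weighted term $\mathbb{G}_n[A\hat{\pi}_n^{-1}(m-m_0)]$; this needs the multiplier-type lemmas of \citet{Ray20} (their Lemmas 9--10), which rely on the boundedness of $\hat{\pi}_n^{-1}$ and its independence from $Z^{(n)}$ --- exactly the ingredients this paper reuses in its proof of Theorem \ref{theo::1step_ray}; (ii) the ``asymptotically independent product'' of the two Laplace-transform factors should be justified by conditioning (the DP limit does not depend on $m$ on the high-posterior-probability set), as in Ray and van der Vaart's conditional Laplace-transform lemma. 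It is also worth registering the contrast with the present paper's own route: its Theorem \ref{theo::1step_ray} obtains the analogous BvM for the \emph{one-step posterior} by a direct decomposition plus the Bayesian-bootstrap BvM, entirely avoiding the change-of-measure/LAN machinery and the prior augmentation that your argument (necessarily) leans on; that is precisely the methodological point of the comparison in Section \ref{sec::miss}.
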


The critical aspect of the above result is that it avoids the change-of-measure condition in equation (3.6) of Theorem 1 of \citet{Ray20} for establishing the semiparametric BvM theorem without the prior augmentation. We now compare this approach with the one-step posterior in this set-up. As discussed by \citet{Ray20}, the preliminary estimator $\hat{\pi}_{n}$ could be interpreted as a degenerate prior on $\pi$. This interpretation immediately suggests a corresponding one-step posterior by defining
\begin{equation} \label{eqn::1step_ray}
    \Tilde{\chi} = \Tilde{P}\left[\frac{A}{\hat{\pi}_{n}(X)}\{Y - m(X)\} + m(X)\right].
\end{equation}
To facilitate comparisons with the method of \citet{Ray20}, we will continue to describe the assumptions in terms of the parameter $w(x)$, where we have $m = \Psi(w)$ in this case.

\begin{theorem} \label{theo::1step_ray}
Assume that there exist measurable sets $\Tilde{\mathcal{H}}_{n}$ of functions satisfying, for some numbers $\varepsilon_{n} \rightarrow 0$,
\begin{align}
    \label{ass::1step_3.10} \Pi(w \in \Tilde{\mathcal{H}}_{n} \mid  Z^{(n)})&\xrightarrow[]{P_{0}} 1,\\
    \label{ass::1step_3.11} \sup_{m = \Psi(w): w \in \Tilde{\mathcal{H}}_{n}}\Vert m-m_{0}\Vert_{P_0} &\leq \varepsilon_{n},\\
    \label{ass::1step_3.12} \sup_{m = \Psi(w): w \in \Tilde{\mathcal{H}}_{n}} |\mathbb{G}_{n}[m-m_{0}]|&\xrightarrow[]{P_{0}} 0.
\end{align}
Under Assumption \ref{ass::propensity}, if $\sqrt{n}\rho_{n}\varepsilon_{n} \rightarrow 0$, then the one-step posterior satisfies the semiparametric Bernstein-von Mises theorem.
\end{theorem}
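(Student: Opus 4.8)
The plan is to verify the hypotheses of Theorem \ref{theo::1step_bvm} for the one-step posterior (\ref{eqn::1step_ray}), proceeding as in the proof of Theorem \ref{theo::1step_miss} but with the posterior draw of the propensity score frozen at the auxiliary estimator $\hat\pi_n$. Set $\psi_P(z) = \frac{a}{\hat\pi_n(x)}\{y - m(x)\} + m(x)$ with $m = \Psi(w)$, so that $\dot\chi_P = \psi_P - P[\psi_P]$, $P[\dot\chi_P] = 0$, and $\Tilde\chi(P,\Tilde P) = \Tilde P[\psi_P]$; take $H_n$ to be the set of $P$ whose $w$-component lies in $\Tilde{\mathcal H}_n$, which by (\ref{ass::1step_3.10}) has posterior mass tending to one. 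As in Section \ref{sec::theory} I would center at $\chi(P_0)$, write $\Delta_n := \mathbb{G}_n[\dot\chi_{P_0}]$, and recover the statement for the efficient centering $\hat\chi_n$ at the very end via (\ref{eqn::asympeff}); throughout, the randomness of $\hat\pi_n$ is absorbed by conditioning on the auxiliary sample, which is independent of $Z^{(n)}$, and invoking Assumption \ref{ass::propensity}.

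The workhorse is the decomposition, valid for $P \in H_n$,
\begin{equation*}
\sqrt n\{\Tilde\chi - \chi(P_0)\} = \underbrace{\mathbb{G}_n[\dot\chi_{P_0}] + \mathbb{G}_n[\psi_P - \psi_{P_0}] - \sqrt n\, r_2(P_0,P)}_{\text{non-bootstrap part}} + \underbrace{\sqrt n\,(\Tilde P - \mathbb{P}_n)[\psi_P]}_{\text{bootstrap part}},
\end{equation*}
obtained by adding and subtracting $\mathbb{P}_n[\psi_P]$ and then $P_0[\psi_P]$ and using $r_2(P_0,P) = \chi(P_0) - P_0[\psi_P] = P_0\big[\pi_0(\tfrac1{\hat\pi_n} - \tfrac1{\pi_0})(m - m_0)\big]$. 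This last expression is the exact analogue of (\ref{eqn::miss_2nd}), so Cauchy--Schwarz and $\pi_0 \le 1$ give $\sqrt n\,|r_2(P_0,P)| \le \sqrt n\,\Vert\tfrac1{\hat\pi_n} - \tfrac1{\pi_0}\Vert_{P_0}\Vert m - m_0\Vert_{P_0} = O_{P_0}(\sqrt n\,\rho_n\varepsilon_n) = o_{P_0}(1)$ uniformly over $P \in H_n$ by (\ref{ass::1step_3.11}), Assumption \ref{ass::propensity} and $\sqrt n\,\rho_n\varepsilon_n \to 0$; this discharges the no-bias condition, part (a) of Assumption \ref{ass::1step_bvm}.

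I would then handle the empirical-process term using the $\psi$-version of parts (b) and (c$^{*}$) of Assumption \ref{ass::1step_bvm} (the remark after Theorem \ref{theo::1step_bvm}). The $L_2$-convergence $\sup_{P\in H_n}\Vert\psi_P - \psi_{P_0}\Vert_{P_0} \to 0$ is routine from $\Vert m - m_0\Vert_{P_0}\le\varepsilon_n$, $\Vert\tfrac1{\hat\pi_n} - \tfrac1{\pi_0}\Vert_{P_0} = O_{P_0}(\rho_n)$ and the uniform bounds $\pi_0 > \delta$, $\Vert 1/\hat\pi_n\Vert_\infty = O_{P_0}(1)$, $|Y|\le 1$, $|m|\le 1$. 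For $\sup_{P\in H_n}|\mathbb{G}_n[\psi_P - \psi_{P_0}]|\xrightarrow[]{P_{0}}0$ I would use the splitting
\begin{equation*}
\psi_P - \psi_{P_0} = A(Y - m_0)\Big(\tfrac1{\hat\pi_n} - \tfrac1{\pi_0}\Big) + (m - m_0)\Big(1 - \tfrac{A}{\hat\pi_n}\Big).
\end{equation*}
The first summand does not involve the posterior parameter; conditionally on the auxiliary sample it is a fixed function with $P_0$-norm $O_{P_0}(\rho_n)$ and a bounded envelope, so a conditional second-moment bound forces $\mathbb{G}_n$ of it to be $O_{P_0}(\rho_n) = o_{P_0}(1)$. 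The second summand is $m - m_0$ times the fixed bounded multiplier $1 - A/\hat\pi_n$, and its supremum over $H_n$ under $\mathbb{G}_n$ is controlled by the empirical-process hypothesis (\ref{ass::1step_3.12}) together with the shrinking $L_2$-diameter from (\ref{ass::1step_3.11}) --- this plays the role of (c$^{*}$)(i), while (c$^{*}$)(ii) is immediate since the binary outcome and $\Vert 1/\hat\pi_n\Vert_\infty = O_{P_0}(1)$ furnish a uniform bounded envelope for $\{\psi_P : P\in H_n\}$. Hence the non-bootstrap part equals $\Delta_n + o_{P_0}(1)$, uniformly over $P\in H_n$.

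For the bootstrap part I would condition on $(Z^{(n)},P)$; since the Dirichlet weights are drawn independently of $P$, the Bayesian bootstrap central limit theorem (Section 4.7 of \citet{Ghosal17}) gives $\sqrt n\,(\Tilde P - \mathbb{P}_n)[\psi_P] = \sqrt n\sum_{i=1}^n(W_i - n^{-1})\psi_P(Z_i)$ converging weakly to $\mathcal N(0,\Var_{P_0}(\psi_{P_0}))$, with conditional variance $\mathbb{P}_n[\psi_P^2] - (\mathbb{P}_n[\psi_P])^2 \to \Var_{P_0}(\psi_{P_0}) = \Vert\dot\chi_{P_0}\Vert_{P_0}^2$ uniformly over $P\in H_n$ by the $L_2$-convergence above and the law of large numbers. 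Combining the two parts, for $P\in H_n$ the conditional law of $\sqrt n\{\Tilde\chi - \chi(P_0)\}$ given $(Z^{(n)},P)$ is within $o_{P_0}(1)$ in $d_{BL}$ of $\mathcal N(\Delta_n,\Vert\dot\chi_{P_0}\Vert_{P_0}^2)$; integrating over $P\sim\Pi(\cdot\mid Z^{(n)})$, using $\Pi(H_n\mid Z^{(n)})\to 1$ and $d_{BL}\le 2$, transfers this to the whole one-step posterior, and replacing the centering $\chi(P_0)$ by $\hat\chi_n$ via $\sqrt n\{\hat\chi_n - \chi(P_0)\} = \Delta_n + o_{P_0}(1)$ gives the claimed semiparametric BvM theorem. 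The step I expect to be the main obstacle is the empirical-process bound $\sup_{P\in H_n}|\mathbb{G}_n[\psi_P - \psi_{P_0}]|\xrightarrow[]{P_{0}}0$: one must pass from the stated control of $\mathbb{G}_n[m - m_0]$ in (\ref{ass::1step_3.12}) to control of $\mathbb{G}_n$ applied to $(m - m_0)$ multiplied by inverse-weight factors that do not have zero conditional mean given $X$, while simultaneously handling the randomness of $\hat\pi_n$; this is exactly where the binary-outcome restriction and the independence of the auxiliary sample come in.
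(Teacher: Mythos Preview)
Your proposal is correct and follows essentially the same route as the paper: condition on the auxiliary sample so that $\hat\pi_n$ can be treated as a deterministic sequence $a_n\in\{a:\Vert a\Vert_\infty\le M,\ \Vert a-a_0\Vert_{P_0}\le M\rho_n\}$, decompose $\sqrt n(\Tilde\chi-\hat\chi_n)$ into a Bayesian-bootstrap piece, an empirical-process piece, and a second-order bias, and handle each exactly as you describe. The obstacle you single out at the end --- passing from control of $\mathbb{G}_n[m-m_0]$ in (\ref{ass::1step_3.12}) to control of $\mathbb{G}_n$ applied to $(m-m_0)$ times the bounded multiplier $A/\hat\pi_n$ --- is precisely the one non-routine step, and the paper dispatches it by invoking Lemmas~9 and~10 of \citet{Ray20} (multiplier-preservation results) with $\varphi=Aa_n$ and $h=b-b_0$; your splitting $(m-m_0)(1-A/\hat\pi_n)$ merely groups two of the paper's three summands together, so the same lemmas apply.
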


A direct comparison of the assumptions for Theorems \ref{theo::ray_vdV} and \ref{theo::1step_ray} is complicated slightly by the different specifications for the outcome regression function, which means that the sets $\mathcal{H}_{n}$ and $\Tilde{\mathcal{H}}_{n}$ may not match. Nevertheless, we should keep in mind that the augmentation in (\ref{eqn::ray_aug}) is motivated purely by improving the inference for $\chi$ rather than the recovery of $m$. It is likely, therefore, that a prior on $w$ that satisfies the assumptions (\ref{ass::ray_3.9}) - (\ref{ass::ray_3.12}) (in combination with some sequence of priors for $\lambda$) will also satisfy the assumptions (\ref{ass::1step_3.10}) - (\ref{ass::1step_3.12}) for the unaugmented regression model. (In any case, it will be simpler to check (\ref{ass::1step_3.10}) - (\ref{ass::1step_3.12}) due to the absence of the augmentation term.) We verify this claim below for a Gaussian process prior studied by \citet{Ray20}. 

Following the set-up in \citet{Ray20}, we take the covariate space to be $[0,1]$. The \textit{Riemann-Liouville} process released at zero of regularity $\bar{\beta} > 0$ is defined by
\begin{equation} \label{eqn::riemann}
    w(x) = \sum_{k=0}^{\floor{\bar{\beta}} + 1}g_{k}x^{k} + \int_{0}^{x}(x-s)^{\bar{\beta}-1/2}dB_{s}, \quad x \in [0,1],
\end{equation}
where $\floor{\bar{\beta}} + 1$ is the smallest integer that is strictly larger than $\bar{\beta}$, the $\{g_{k}\}$ are i.i.d. $\mathcal{N}(0,1)$ variables, and $B$ is an independent Brownian motion (see Chapter 11 of \citet{Ghosal17}). We use this to define our prior for $m = \Psi(w)$. For covariate spaces equal to $[0,1]^{d}$ for integer $d \geq 1$, the subsequent corollary can be easily extended to the Gaussian series priors considered by \citet{Ray20}.

\begin{corollary} \label{cor::riemann}
    Suppose that $\pi_{0}^{-1} \in \mathfrak{C}^{\alpha}([0,1])$, $m_{0} \in \mathfrak{C}^{\beta}([0,1])$ for $\alpha,\beta > 0$, where $\mathfrak{C}^{s}([0,1])$ is the H\"older space of order $s$ on $[0,1]$ (see Chapter 4 of \citet{Gine16}). Suppose further that we specify our one-step posterior using (\ref{eqn::1step_ray}) and the Riemann-Liouville prior for $m$ defined by (\ref{eqn::riemann}). Under Assumption \ref{ass::propensity}, if $\beta \wedge \bar{\beta} > 1/2$ and $\sqrt{n}\rho_{n}\varepsilon_{n} \rightarrow 0$ for 
    \begin{equation*}
        \varepsilon_{n} = n^{-\frac{\beta \wedge \bar{\beta}}{2\bar{\beta} + 1}}\log n,
    \end{equation*}
    then the one-step posterior satisfies the semiparametric Bernstein-von Mises theorem.
\end{corollary}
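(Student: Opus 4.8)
The plan is to verify the hypotheses of Theorem \ref{theo::1step_ray} by exhibiting an explicit sieve $\Tilde{\mathcal{H}}_n$ drawn from the posterior support of the Riemann--Liouville prior, and then checking conditions \eqref{ass::1step_3.10}--\eqref{ass::1step_3.12} in turn. The natural choice for $\Tilde{\mathcal{H}}_n$ is a shrinking ball in a suitably strong norm around $w_0 = \Psi^{-1}(m_0)$, intersected with a smoothness ball of the RKHS type associated with the process \eqref{eqn::riemann}. Specifically, I would take $\Tilde{\mathcal{H}}_n = \{w: \|w - w_0\|_\infty \leq M\varepsilon_n,\ \|w\|_{\mathfrak{C}^{\gamma}} \leq M_n\}$ for a slowly growing $M_n$ and a regularity $\gamma$ slightly below $\bar\beta \wedge 1$, where $\varepsilon_n = n^{-(\beta\wedge\bar\beta)/(2\bar\beta+1)}\log n$ as in the statement.

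First I would establish \eqref{ass::1step_3.10}, the posterior concentration on $\Tilde{\mathcal{H}}_n$. This is where the bulk of the work lies and where I expect the main obstacle. The Riemann--Liouville process of regularity $\bar\beta$ has small-ball exponent and RKHS matching a Gaussian prior with contraction rate $n^{-(\bar\beta\wedge\beta)/(2\bar\beta+1)}$ up to logarithmic factors for estimating a $\mathfrak{C}^\beta$ truth (Chapter 11 of \citet{Ghosal17}); composing with the smooth, bounded, Lipschitz link $\Psi$ transfers a Hellinger/$L_2$ contraction rate for $m = \Psi(w)$ to an $L_2(P_0)$ contraction rate for $w$ itself near $w_0$, since $\Psi$ is bi-Lipschitz on the relevant bounded range (using the positivity-type bounds implicit in $m_0$ bounded away from $0$ and $1$, which follows because $m_0 \in \mathfrak{C}^\beta$ and the outcome is binary with overlap). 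The delicate part is upgrading this $L_2(P_0)$ statement to the $L_\infty$ control needed for the empirical-process condition: this requires an interpolation inequality of the form $\|w - w_0\|_\infty \lesssim \|w-w_0\|_{L_2}^{1-a}\|w-w_0\|_{\mathfrak{C}^\gamma}^{a}$ together with a posterior bound on a Hölder norm of $w$, which in turn comes from a standard sieve argument showing the posterior charges RKHS-norm balls of radius $O(\sqrt{n}\varepsilon_n)$ with probability tending to one.

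Given \eqref{ass::1step_3.10}, condition \eqref{ass::1step_3.11} is then immediate from the $L_2(P_0)$ diameter of the sieve and the Lipschitz property of $\Psi$, yielding $\sup_{w\in\Tilde{\mathcal{H}}_n}\|\Psi(w) - m_0\|_{P_0} \lesssim \varepsilon_n$. For \eqref{ass::1step_3.12}, I would argue that the class $\{\Psi(w) - m_0: w \in \Tilde{\mathcal{H}}_n\}$ is contained in a Hölder ball of radius $M_n$ in $\mathfrak{C}^{\gamma}([0,1])$ with $\gamma > 1/2$; such balls have bracketing/uniform entropy of order $\epsilon^{-1/\gamma}$, which is $o(n)$-compatible, so by the maximal inequality for the empirical process over a Donsker-type class of slowly growing size combined with the shrinking $L_2$-radius (an equicontinuity argument, e.g. Lemma 3.3.5-type bounds in \citet{vanderVaart96}), $\sup |\mathbb{G}_n[\Psi(w) - m_0]| \xrightarrow{P_0} 0$; the condition $\gamma > 1/2$, equivalently $\bar\beta \wedge \beta > 1/2$ as assumed, is exactly what makes this entropy integrable. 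Finally, Assumption \ref{ass::propensity} supplies the rate $\rho_n$ and the uniform bound on $1/\hat\pi_n$, and the stated product condition $\sqrt{n}\rho_n\varepsilon_n \to 0$ closes the argument; applying Theorem \ref{theo::1step_ray} then gives the semiparametric BvM for the one-step posterior defined by \eqref{eqn::1step_ray}.
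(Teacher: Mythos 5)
Your overall strategy---verifying \eqref{ass::1step_3.10}--\eqref{ass::1step_3.12} for an explicit sieve and then invoking Theorem \ref{theo::1step_ray}---is the right one, and your identification of the entropy of H\"older balls with exponent $>1/2$ as the source of the condition $\beta\wedge\bar\beta>1/2$ is correct in spirit. However, there is a concrete step that fails as written: your sieve $\Tilde{\mathcal{H}}_n=\{w:\Vert w-w_0\Vert_\infty\leq M\varepsilon_n,\ \Vert w\Vert_{\mathfrak{C}^\gamma}\leq M_n\}$ will not satisfy \eqref{ass::1step_3.10}. Posterior contraction for the Riemann--Liouville prior at the rate $\varepsilon_n=n^{-(\beta\wedge\bar\beta)/(2\bar\beta+1)}\log n$ holds in $L_2(P_0)$ (equivalently Hellinger), not in sup-norm; sup-norm rates for Gaussian process posteriors are strictly slower, and your own interpolation inequality $\Vert w-w_0\Vert_\infty\lesssim\Vert w-w_0\Vert_{L_2}^{1-a}\Vert w-w_0\Vert_{\mathfrak{C}^\gamma}^{a}$ only delivers a bound of order $\varepsilon_n^{1-a}M_n^{a}\gg\varepsilon_n$. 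So the posterior mass of your sieve does not tend to one. The fix is that sup-norm control at rate $\varepsilon_n$ is never needed: condition \eqref{ass::1step_3.12} follows from asymptotic equicontinuity (Corollary 2.3.12 of \citet{vanderVaart96}) once the class is Donsker and the $L_2(P_0)$ radius shrinks, so the sieve should be an $L_2(P_0)$-ball $\{\Vert\Psi(w)-m_0\Vert_{P_0}\leq M\varepsilon_n\}$ intersected with a set carrying the Donsker/equicontinuity property. A secondary slip: $m_0$ being bounded away from $0$ and $1$ does not ``follow'' from $m_0\in\mathfrak{C}^\beta([0,1])$ and overlap (overlap constrains $\pi_0$, not $m_0$); it is a standing feature of the $m=\Psi(w)$ parameterization with $w_0=\Psi^{-1}(m_0)$ finite. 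Your detour through $L_2$-contraction of $w$ itself via bi-Lipschitzness of $\Psi$ is also unnecessary, since the hypotheses of Theorem \ref{theo::1step_ray} are stated directly in terms of $m=\Psi(w)$.

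For comparison, the paper's proof is much shorter and delegates exactly the two hard ingredients you flagged to \citet{Ray20}: Lemma 5 of that paper (applied with $t=0$) produces sets $\mathcal{H}_n$ with posterior mass tending to one on which $\sup_{m=\Psi(w):w\in\mathcal{H}_n}|\mathbb{G}_n[m-m_0]|\to 0$ in probability, and Lemma 16 together with the proof of their Corollary 1 gives the $L_2(P_0)$ contraction of $m$ to $m_0$ at rate $\varepsilon_n$ (up to the choice of log power). The final sieve is then $\Tilde{\mathcal{H}}_n=\mathcal{H}_n\cap\{w:\Vert\Psi(w)-m_0\Vert_{P_0}\leq M\varepsilon_n\}$, and \eqref{ass::1step_3.10}--\eqref{ass::1step_3.12} hold by construction. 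If you replace your sup-norm ball with this $L_2$-ball and make the growing-radius maximal inequality in your last step quantitative, your from-first-principles route would essentially reprove those two lemmas of \citet{Ray20}.
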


The corresponding result for the prior augmentation method in Corollary 2 of \citet{Ray20} requires the following additional condition:
\begin{equation*}
    \sigma_{n} = O(1), \quad \text{and} \quad \varepsilon_{n} \ll \sigma_{n}.
\end{equation*}
For finite-sample inference, the need to tune $\sigma_{n}$ is a substantial practical drawback of the prior augmentation approach relative to the one-step posterior. This is because the prior for $\lambda$ will likely have a significant impact on the posterior for the target estimand, yet the augmentation term only plays a technical role, so there will be no subjective or substantive guidance to select $\sigma_{n}$.

From a computational perspective, the prior augmentation is also more difficult to implement than the one-step posterior correction---it is generally the case that existing software for posterior sampling will have to be rewritten in order to accommodate the new linear term. Furthermore, as \citet{Ray20} wrote in their discussion, the augmented prior ``will not perform better for any other functional''. Therefore, if multiple functionals are of interest, the user would have the time-consuming task of implementing distinct models for each functional. As we have mentioned previously, our approach handles this situation very efficiently because we can retain a single set of posterior samples from the initial model, with which we can compute all requisite corrected posteriors. 

We recall that the theory provided by \citet{Ray20} for this prior augmentation relies on an independent, auxiliary dataset to construct the propensity score estimate $\hat{\pi}_{n}$. A similar prior augmentation was studied by \citet{RaySzabo19} for continuous outcomes, and they evaluated the performance of their method empirically when $\hat{\pi}_{n}$ is trained on the same data as the posterior. Using a point estimate in this fashion is similar to classical correction approaches like \textit{targeted maximum likelihood estimation} \citep{vanderLaan06, vanderLaan11}. As we will discuss further in Section \ref{sec::acic}, we have found that it is beneficial to instead incorporate the Bayesian uncertainty in the nuisance parameters (in this case the propensity score) by propagating the variation from the initial posterior to the one-step posterior.

\section{The average treatment effect on the treated} \label{sec::att}

Consider the problem of inferring the causal effect of a binary treatment $A$ on an outcome $Y$ using observational data. Our data takes the form $Z = (X,A,Y)$, where $X$ is a vector of pre-treatment covariates. Causal frameworks often posit the existence of counterfactual variables and define causal effects as contrasts between functionals of counterfactual distributions \citep[e.g.][]{Pearl09, Hernan20}. In this case, we can define the pair of variables $(Y^{1}, Y^{0})$ to be the potential outcomes associated with treatment values $1$ (``treated'') and $0$ (``controls'') respectively. Under the \textit{consistency} assumption \citep{Hernan20}
\begin{equation*}
    Y = AY^{1} + (1-A)Y^{0},
\end{equation*}
we observe exactly one of $(Y^{1},Y^{0})$ for each unit in the dataset.

Our target estimand is the \textit{average treatment effect on the treated (ATT)} \citep{Rubin77, Heckman85}
\begin{equation*}
    \chi(P) = E_{P}[Y^{1}-Y^{0} \mid A = 1],
\end{equation*}
which quantifies the average causal effect of withholding treatment on the population currently being treated. This can also be formulated without counterfactuals \citep[e.g.][]{Geneletti11, Dawid21, Yiu22}.

An attractive aspect of the ATT is that it is identified under assumptions that are strictly weaker than those commonly specified for the more well-known average treatment effect $E_{P}[Y^{1}-Y^{0}]$. More specifically, the identification formula
\begin{equation} \label{eqn::att_iden}
    \chi(P) = E_{P}[Y-E_{P}[Y \mid A=0, X] \mid A=1]
\end{equation}
can be obtained under the assumptions of \textit{unconfoundedness for controls}
\begin{equation*}
    Y^{0} \independent A \mid X
\end{equation*}
and \textit{weak overlap}:
\begin{equation*}
    P(A=1 \mid X) < 1
\end{equation*}
with $P$-probability 1 \citep{Imbens04}. Under violations of consistency or unconfoundedness for controls, our results still apply for estimation of the functional defined by (\ref{eqn::att_iden}) provided that weak overlap holds.

Similar to Section \ref{sec::miss}, it is natural to parameterize $P$ in terms of the propensity score $\pi(x) = P(A=1 \mid X=x)$, the conditional outcome density $p_{Y \mid A,X}(y \mid a,x)$, and the marginal covariate density $q$. For a single observation $z=(x,a,y)$, the density
\begin{equation*}
    p(z) = \pi(x)^{a}(1-\pi(x))^{1-a}p_{Y \mid A,X}(y \mid a,x)q(x)
\end{equation*}
factorizes into the three parameters, so it is particularly convenient and intuitive to specify independent priors.

The functional in (\ref{eqn::att_iden}) has efficient influence function
\begin{equation} \label{eqn::att_dp}
    \dot{\chi}_{P} = \frac{A-\pi(X)}{P[A][1-\pi(X)]}\{Y - \mu^{(0)}(X)\} - \frac{A}{P[A]}\chi(P),
\end{equation}
where $\mu^{(0)}(x) = E_{P}[Y \mid A=0,X=x]$ is the outcome regression function for the controls \citep[Chapter 8 of][]{vanderLaan11}. We highlight the presence of the marginal probability of treatment assignment $P[A]$, which we can write as $P[A] = Q[\pi(X)]$ with respect to the natural parameterization in terms of $\pi$ and the marginal covariate distribution $Q$. 

The behaviour of a posterior on $Q[\pi(X)]$ will depend on the interplay between the models for $\pi$ and $Q$. In the construction of our one-step posterior, we have found it to be beneficial theoretically to replace $Q[\pi(X)]$ with $\Tilde{P}[A]$, which is partially due to the fact that the posterior for $\Tilde{P}[A]$ is guaranteed to contract to $P_{0}[A]$ at the rate of $n^{-1/2}$ without requiring additional assumptions on $\pi$ and $Q$. With a slight abuse of notation and terminology, we define the one-step corrected parameter to be
\begin{equation} \label{eqn::att_1step}
    \Tilde{\chi}(P,\Tilde{P}) = \Tilde{P}\left[\frac{A-\pi(X)}{\Tilde{P}[A][1-\pi(X)]}\{Y - \mu^{(0)}(X)\}\right].
\end{equation}
We clarify that the $\Tilde{P}$ in the denominator of the expression is the same as the one that appears outside of the large square brackets. From a computational perspective, this means that the user does not need to draw an additional set of uniform Dirichlet weights for each posterior sample of $P$. Moreover, (\ref{eqn::att_1step}) only depends on $P$ through $\pi$ and $\mu^{(0)}$, so the user can undertake an initial posterior analysis that is conditional on the covariates and avoid computing the posterior for $Q$.

This example helps to illustrate that our framework offers the flexibility to attain more attractive theoretical properties by making modifications to the correction. Similar modifications are used for classical one-step estimation \citep[e.g.][]{Rotnitzky21} with the empirical measure and the Bayesian bootstrap once again playing analogous roles. The corrected parameter defined in (\ref{eqn::att_1step}) also admits an alternative motivation by using the efficient influence function in (\ref{eqn::att_dp}) to define a Bayesian-bootstrap-weighted estimating equation
\begin{equation*}
    \Tilde{P}\left[\frac{A-\pi(X)}{P[A][1-\pi(X)]}\{Y - \mu^{(0)}(X)\} - \frac{A}{P[A]}\chi\right] = 0,
\end{equation*}
which is solved by $\chi = \Tilde{\chi}(P,\Tilde{P})$. The relationship between one-step and estimating equation estimators is discussed in \citet{Hines22} from a classical perspective.

\begin{assumption} \label{ass::att_bvm}
Let $\pi_{0}$ and $\mu_{0}^{(0)}$ denote the true parameters associated with $P_{0}$. There exists a sequence of measurable subsets $(H_{n})_{n}$ of $\mathcal{P}$ satisfying $\Pi(P \in H_{n} \mid Z^{(n)})  \xrightarrow[]{P_{0}} 1$ such that
\begin{itemize}
    \item[(a)]{($L_{2}$-convergence of $\pi$ and $\mu^{(0)}$)} There exist numbers $\rho_{n},\varepsilon_{n} \rightarrow 0$ such that
    \begin{align*}
        \sup_{P \in H_{n}} \left\Vert \pi - \pi_{0} \right\Vert_{P_{0}}  &\leq \rho_{n} \\
        \sup_{P \in H_{n}} \left\Vert \mu^{(0)} - \mu_{0}^{(0)} \right\Vert_{P_{0}}  &\leq \varepsilon_{n}
    \end{align*}
    and $\sqrt{n}\rho_{n}\varepsilon_{n} \rightarrow 0$.
    \item[(b)]{(Uniform bounding)} For all sufficiently large $n$, there exist fixed constants $\delta, C > 0$ such that for all $P \in H_{n}$,
    \begin{align*}
        \Vert\pi\Vert_{\infty}, \Vert\pi_{0}\Vert_{\infty} &< 1-\delta \\
        P_{0}(|Y - \mu^{(0)}(X)| < C) &= 1.
    \end{align*}
    \item[(c)]{(Donsker class)} The sequences of sets $\{\pi:\,P \in H_{n}\}$ and $\{\mu^{(0)}:\,P \in H_{n}\}$ are both eventually contained in fixed $P_{0}$-Donsker classes.
\end{itemize}
\end{assumption}

\begin{theorem} \label{theo::1step_att}
Under Assumption \ref{ass::att_bvm}, the one-step posterior for the ATT functional (\ref{eqn::att_iden}) satisfies the semiparametric BvM theorem.
\end{theorem}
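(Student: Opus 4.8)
The argument runs parallel to the proof of Theorem~\ref{theo::1step_bvm}, the only new feature being that the corrected parameter in (\ref{eqn::att_1step}) is a \emph{ratio} rather than the additive form $\chi(P)+\Tilde{P}[\dot{\chi}_{P}]$, so Theorem~\ref{theo::1step_bvm} cannot be quoted verbatim. Write $\psi_{P}(z)=\frac{a-\pi(x)}{1-\pi(x)}\{y-\mu^{(0)}(x)\}$, so that $\Tilde{\chi}=\Tilde{P}[\psi_{P}]/\Tilde{P}[A]$, and put $\beta_{0}=P_{0}[A]>0$ (strictly positive because the ATT is well defined). A direct computation from (\ref{eqn::att_dp}) gives $\dot{\chi}_{P_{0}}=\beta_{0}^{-1}\{\psi_{P_{0}}-\chi(P_{0})A\}$, and $P_{0}[\dot{\chi}_{P_{0}}]=0$ forces $P_{0}[\psi_{P_{0}}]=\chi(P_{0})\beta_{0}$. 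Since every estimator $\hat{\chi}_{n}$ satisfying (\ref{eqn::asympeff}) obeys $\sqrt{n}(\hat{\chi}_{n}-\chi(P_{0}))=\sqrt{n}(\mathbb{P}_{n}-P_{0})[\dot{\chi}_{P_{0}}]+o_{P_{0}}(1)$, it suffices to prove that, uniformly over $P\in H_{n}$,
\[
\sqrt{n}\,(\Tilde{\chi}-\chi(P_{0}))=\sqrt{n}\,(\Tilde{P}-P_{0})[\dot{\chi}_{P_{0}}]+o_{P_{0}}(1);
\]
the conclusion then follows by combining $\Pi(P\in H_{n}\mid Z^{(n)})\to1$ with the Bayesian bootstrap central limit theorem for the fixed square-integrable function $\dot{\chi}_{P_{0}}$, exactly as in the final step of the proof of Theorem~\ref{theo::1step_bvm}.

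The key algebraic input, valid for every $P$ and obtained by taking conditional expectations given $(A,X)$ (no causal assumptions are used here), is
\[
P_{0}[\psi_{P}]-P_{0}[\psi_{P_{0}}]=E_{P_{0}}\!\left[\{\mu_{0}^{(0)}(X)-\mu^{(0)}(X)\}\,\frac{\pi_{0}(X)-\pi(X)}{1-\pi(X)}\right].
\]
Its cross-term form is precisely what yields the rate double robustness: by Cauchy--Schwarz, Assumption~\ref{ass::att_bvm}(b) (which bounds $1-\pi$ below by $\delta$) and Assumption~\ref{ass::att_bvm}(a), we get $\sup_{P\in H_{n}}\sqrt{n}\,\bigl|P_{0}[\psi_{P}]-P_{0}[\psi_{P_{0}}]\bigr|\le\delta^{-1}\sqrt{n}\,\rho_{n}\varepsilon_{n}\to0$, the analogue of Assumption~\ref{ass::1step_bvm}(a).

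Next I would decompose $\Tilde{P}[\psi_{P}]-P_{0}[\psi_{P_{0}}]$ into $(\Tilde{P}-\mathbb{P}_{n})[\psi_{P_{0}}]$, the term $n^{-1/2}\mathbb{G}_{n}[\psi_{P_{0}}]$, the two ``drift'' terms $(\Tilde{P}-\mathbb{P}_{n})[\psi_{P}-\psi_{P_{0}}]$ and $n^{-1/2}\mathbb{G}_{n}[\psi_{P}-\psi_{P_{0}}]$, and the bias term just handled. Using the uniform bounds in Assumption~\ref{ass::att_bvm}(b) one obtains the pointwise inequality $|\psi_{P}-\psi_{P_{0}}|\le\delta^{-1}|\mu^{(0)}-\mu_{0}^{(0)}|+C\delta^{-2}|\pi-\pi_{0}|$, which together with Assumption~\ref{ass::att_bvm}(a) gives $\sup_{P\in H_{n}}\Vert\psi_{P}-\psi_{P_{0}}\Vert_{P_{0}}\to0$ and, together with Assumption~\ref{ass::att_bvm}(c) and the permanence properties of Donsker classes \citep{vanderVaart96} (Lipschitz images and products of uniformly bounded Donsker classes), shows that $\{\psi_{P}:P\in H_{n}\}$ lies in a fixed $P_{0}$-Donsker class. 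Asymptotic equicontinuity of $\mathbb{G}_{n}$ then yields $\sup_{P\in H_{n}}|\mathbb{G}_{n}[\psi_{P}-\psi_{P_{0}}]|\xrightarrow[]{P_{0}}0$, and the exchangeable-bootstrap (multiplier CLT) equicontinuity argument underlying the proof of Theorem~\ref{theo::1step_bvm} (where it is used to control $\Tilde{P}[\dot{\chi}_{P}]$) yields $\sup_{P\in H_{n}}|\sqrt{n}(\Tilde{P}-\mathbb{P}_{n})[\psi_{P}-\psi_{P_{0}}]|\xrightarrow[]{P_{0}}0$. Together with $\Tilde{P}[A]-\beta_{0}=O_{P_{0}}(n^{-1/2})$ from the Bayesian bootstrap central limit theorem, this shows that uniformly over $P\in H_{n}$ both $\Tilde{P}[\psi_{P}]-P_{0}[\psi_{P_{0}}]$ and $\Tilde{P}[A]-\beta_{0}$ are $O_{P_{0}}(n^{-1/2})$; hence the first-order Taylor expansion of $(u,v)\mapsto u/v$ about $(\chi(P_{0})\beta_{0},\beta_{0})$ has remainder of order $O_{P_{0}}(n^{-1})$, and collecting terms (using $P_{0}[\psi_{P_{0}}]=\chi(P_{0})\beta_{0}$ and the cancellations above) gives the displayed expansion.

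I expect the main obstacle to be the uniform, joint-in-$\Tilde{P}$ control required by the ratio form: concretely, establishing $\sup_{P\in H_{n}}|\sqrt{n}(\Tilde{P}-\mathbb{P}_{n})[\psi_{P}-\psi_{P_{0}}]|\xrightarrow[]{P_{0}}0$ through an exchangeable-bootstrap equicontinuity argument, and verifying carefully that $\{\psi_{P}:P\in H_{n}\}$ is genuinely a $P_{0}$-Donsker class. The bias identity and the Taylor-remainder bookkeeping are, by comparison, routine given Assumption~\ref{ass::att_bvm}.
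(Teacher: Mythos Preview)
Your proposal is correct and rests on the same three pillars as the paper's proof: the cross-term bias identity for $P_{0}[\psi_{P}]-P_{0}[\psi_{P_{0}}]$, uniform $L_{2}$-convergence of $\psi_{P}$ together with Donsker permanence under Lipschitz/product operations, and the exchangeable-bootstrap functional CLT (Theorem~3.6.13 of \citet{vanderVaart96}). The organizational difference is in how the ratio $\Tilde{P}[\psi_{P}]/\Tilde{P}[A]$ is linearized. You use a first-order Taylor expansion of $(u,v)\mapsto u/v$ and reduce everything to the Bayesian-bootstrap CLT for the \emph{fixed} function $\dot{\chi}_{P_{0}}$; the paper instead picks the ratio-form centring estimator $\hat{\chi}_{n}=\mathbb{P}_{n}[h_{0}]/\mathbb{P}_{n}[A]$, writes out an explicit algebraic decomposition into a leading term plus correction factors of the form $\bigl(P_{0}[A]/\Tilde{P}[A]-1\bigr)$ and $\bigl(1/\Tilde{P}[A]-1/\mathbb{P}_{n}[A]\bigr)$, and disposes of these via the bespoke conditional Slutsky lemma (Lemma~\ref{lem::slutsky}). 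Your route is slightly more streamlined; the paper's is more explicit about what ``$o_{P_{0}}(1)$ uniformly in $P\in H_{n}$'' means in the joint $(\Pi\times\Pi_{BB})$-conditional sense, which is the one point where your sketch is informal and would need the same $d_{BL}$/Lemma~\ref{lem::slutsky} machinery to be made rigorous. Either way, the substantive verification---the bias computation, the bound $\|\psi_{P}-\psi_{P_{0}}\|_{P_{0}}\lesssim\|\mu^{(0)}-\mu_{0}^{(0)}\|_{P_{0}}+\|\pi-\pi_{0}\|_{P_{0}}$, and the Donsker argument---matches the paper essentially line for line.
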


Assumption \ref{ass::att_bvm} has a similar form to Assumption \ref{ass::miss_bvm} for the example in Section \ref{sec::miss}. In particular, the second-order bias in this case possesses a cross-term structure in terms of $\pi$ and $\mu^{(0)}$, which induces the rate double robustness property in Assumption \ref{ass::att_bvm}(a). 

Given the conspicuous absence of $\mu^{(1)}(x) = E_{P}[Y \mid A = 1,X=x]$ from the definitions and assumptions, it is natural to ask whether one should model the parameters $p_{Y\mid A,X}(y \mid 1,x)$ and $p_{Y\mid A,X}(y \mid 0,x)$ with independent priors, such that the model for $p_{Y\mid A,X}(y \mid 1,x)$ drops out of the analysis for $\Tilde{\chi}$, and the posterior for $\mu^{(0)}$ is derived solely from the data on the controls. In the heterogeneous causal effects literature, this separation of the regression model into the two treatment arms is called the \textit{T-learner} approach (``two'' learners; \citet{Kunzel19}).

In general, we would recommend against doing this and instead advocate for specifying a single regression model that treats $A$ as just another covariate, which is called the \textit{S-learner} approach (``single'' learner; \citet{Kunzel19}). Provided that Assumption \ref{ass::att_bvm} is satisfied, there is no difference in terms of asymptotic performance, but we would expect to obtain finite-sample efficiency gains from the pooling of information across the entire dataset.

An advantage of using the T-learner approach, however, would be to prevent potential model misspecification for $p_{Y\mid A,X}(y \mid 1,x)$ from contaminating the posterior for $p_{Y\mid A,X}(y \mid 0,x)$, thereby reducing the risk of violating Assumption \ref{ass::att_bvm}(a). There could also be a finite sample impact if $\mu_{0}^{(0)}$ and $\mu_{0}^{(1)}$ are very different (e.g. $\mu_{0}^{(0)}$ is much smoother than $\mu_{0}^{(1)}$ due to high heterogeneity in the conditional treatment effect given $X$), because prior regularization may shrink a joint posterior for $p_{Y \mid A,X}(y \mid a,x)$ towards a compromise that is detrimental to precision for $\mu_{0}^{(0)}$. Ultimately, we would encourage the user to think carefully about their modelling specifications, both in terms of incorporating substantive knowledge and the implications concerning the theory we have provided.

\section{Simulation studies} \label{sec::sim}

\subsection{Estimating the integrated squared density under a Laplace distribution} \label{sec::laplace}

Suppose that the data $Z^{(n)}$ are generated i.i.d. from a $\textrm{Laplace}(0,1)$ distribution, which has density
\begin{equation*}
    f_{0}(z) = \frac{1}{2}\exp(-|z|).
\end{equation*}
The Laplace distribution was chosen because it is non-differentiable at $z=0$, and we would expect this to induce oversmoothing in the Dirichlet process Gaussian mixture model from Section \ref{sec::dens_squared}. We specified our (initial) Bayesian model according to (\ref{eqn::dpmm_model}) with the standard normal distribution for the base measure $G_{0}$ and the prior parameter values $\{M=1, a=1, b=1\}$.

Our goal is to estimate the integrated squared density as described in Section \ref{sec::dens_squared}. To evaluate the performance of the one-step posterior correction, we ran 1000 independent Monte Carlo trials for each of the two sample sizes $n=1000$ and $n=2000$. For both the uncorrected and corrected posteriors, point estimation was evaluated using the posterior mean, and intervals were constructed using the 95\% central credible region.

The numerical results can be found in Table \ref{tab::dpmm}. First, we can see that the one-step posterior improved on the initial posterior across all point estimation metrics. This was achieved through substantial bias reduction, which is illustrated visually by the histogram/density plots of the posterior mean in Figure \ref{fig::dpmm}. 

The improvements in coverage are even more striking. For the larger sample size of $n=2000$, the one-step posterior was able to attain nominal coverage despite the fact that the initial posterior only achieved 67.4\%. The poor performance of the initial posterior can be partially attributed to the aforementioned bias, which is of a similar order of magnitude to the average interval length. However, it also appears that the interval lengths are too short. This could have been caused by an overcurved likelihood or residual prior regularization, or both. In any case, it seems unlikely that the initial posterior satisfies the semiparametric BvM theorem for this Laplace example.

\begin{table}[h!]
\begin{center}
\resizebox{.8\textwidth}{!}{\begin{tabular}{|c|l |ccccc|} \hline
Sample size & Method & Bias & MAE & RMSE & Cov & Int. len. \\ \hline 
$n=1000$ & DPMM   & -0.0098 & 0.0108 & 0.0185 & 64.9\% & 0.0290  \\
& DPMM+1step & -0.0040 & 0.0072 & 0.0151 & 93.7\% & 0.0434 \\\hline
$n=2000$&DPMM   & -0.0068 & 0.0073 & 0.0138 & 67.4\% & 0.0210  \\
& DPMM+1step & -0.0025 & 0.0048 & 0.0116 & 95.8\% & 0.0324 \\\hline
\end{tabular}}
\end{center}
\caption{Monte Carlo numerical results comparing the integrated squared density posteriors with and without correction. MAE, median absolute error; RMSE, root mean squared error; Cov, coverage; Int. len., average 95\% central credible interval length.\label{tab::dpmm}}
\end{table}


\begin{figure}[]
\begin{center}
\includegraphics[width=6in]{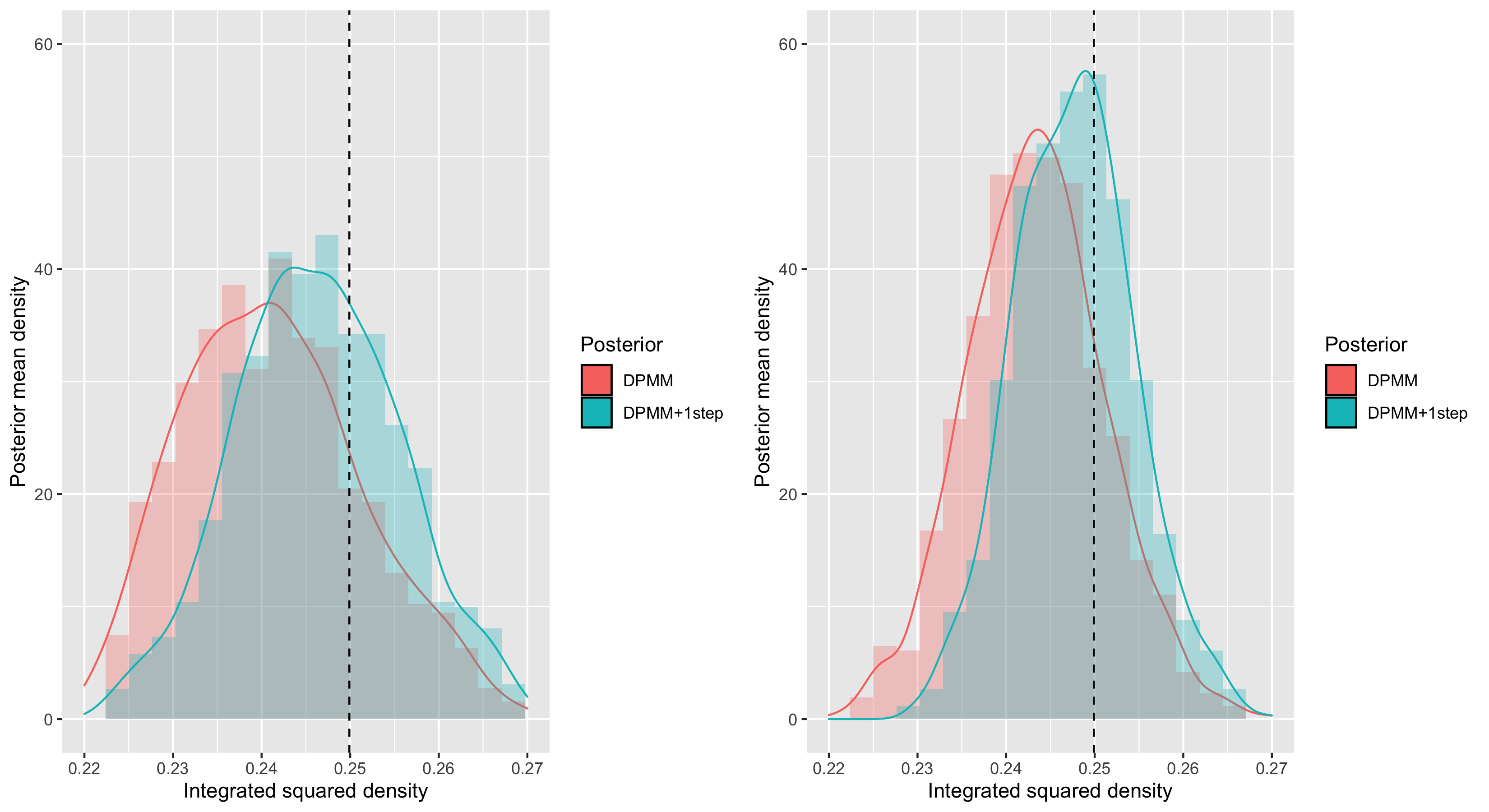}
\end{center}
\caption{Comparison of posterior mean density plots for the integrated squared density: $n=1000$ (left), and $n=2000$ (right). The true value of the functional is shown by the black vertical dashed line. \label{fig::dpmm}}
\end{figure}

For further context, we have included a ``before and after'' scatter plot in Figure \ref{fig::dpmm_scat} that compares the samples from both posteriors for a single iteration of the experiment ($n=2000$). This plot illustrates that, despite the randomness of the correction terms, the draws from the initial posterior tend to be pulled in the direction of the true functional value.

\begin{figure}[]
\begin{center}
\includegraphics[width=6in]{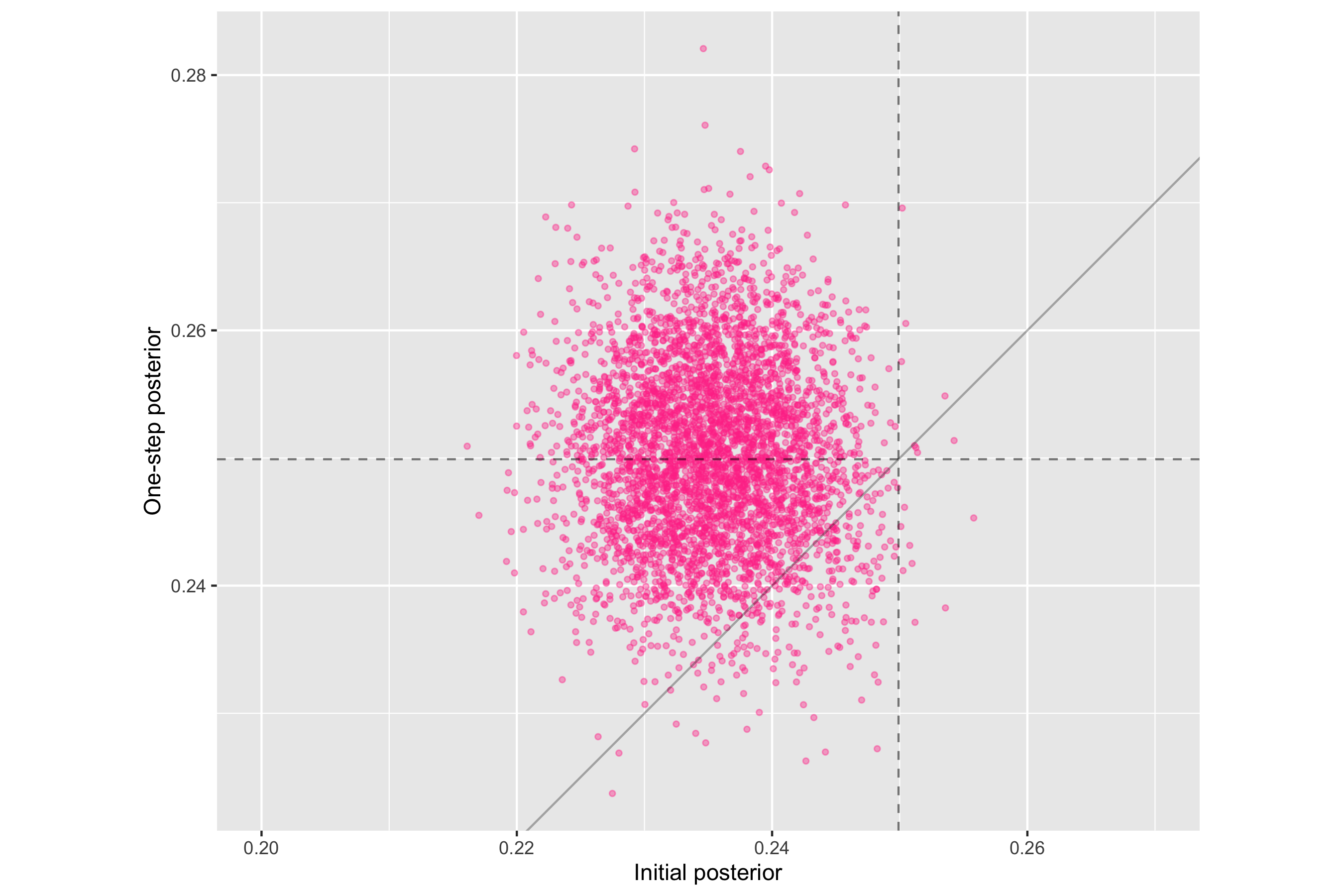}
\end{center}
\caption{Scatter plot comparing the initial posterior samples with their one-step corrected counterparts. We have also displayed the identity line (\fullgrey) and the true value of the estimand (\dashedgrey). \label{fig::dpmm_scat}}
\end{figure}

\subsection{The ACIC 2016 competition} \label{sec::acic}

The causal data analysis competition ``Is Your SATT Where It's At?'' was held at the 2016 Atlantic Causal Inference Conference with the goal of providing a comprehensive, objective comparison of the available methods for estimating causal effects. A total of 24 submissions were compared in the ``black box'' section of the competition, which assessed the automated performance of the methods across 77 different simulation settings. Detailed descriptions of the competition design, motivations and findings can be found in \citet{Dorie19}.

Each dataset $Z^{(n)} = (Z_{1},\ldots,Z_{n})$ had a sample size of $n=4802$ with the same data structure as Section \ref{sec::att}; that is, $Z_{i} = (X_{i},A_{i},Y_{i})$ for $i=1,\ldots,n$. The set of covariates $(X_{1},\ldots,X_{n})$ was fixed across datasets---they were taken from a real study that examined the impact of birth weight on a child's IQ and contained a mixture of categorical, binary, continuous and count variables (58 variables in total). Conditional on the covariates, the potential outcomes $(Y^{1}_{i}, Y^{0}_{i})$ and treatment assignments were generated i.i.d. according to the configuration of simulation parameters, which included the degree of nonlinearity of the regression functions, the treated percentage, and the treatment effect heterogeneity. One hundred independent datasets were generated from each of the 77 settings, resulting in a total of 7700 datasets.

The target estimand is the \textit{sample average treatment effect on the treated (SATT)}
\begin{equation} \label{eqn::satt}
    \beta = \frac{\mathbb{P}_{n}[A(Y^{1}-Y^{0})]}{\mathbb{P}_{n}[A]},
\end{equation}
which quantifies the average treatment effect among those treated in the dataset. But the SATT cannot be directly inferred in a semiparametric framework because we never observe $AY^{0}$ when $A=1$. An obvious alternative is to use the estimates for the (population) ATT as described in Section \ref{sec::att}, but this would likely produce substantially conservative intervals for the SATT \citep{Imbens04}. As a partial remedy, we will also consider estimates derived from the one-step posterior targeting a new estimand that we call the \textit{average covariate-conditional treatment effect on the treated (ACTT)}
\begin{equation*}
    \theta(P) = \frac{\mathbb{Q}_{n}[\pi(X)\{\mu^{(1)}(X)-\mu^{(0)}(X)\}]}{\mathbb{Q}_{n}[\pi(X)]},
\end{equation*}
which replaces the unknown population covariate distribution $Q$ in the ATT identification formula (\ref{eqn::att_iden}) with the empirical covariate distribution
\begin{equation*}
    \mathbb{Q}_{n} = \frac{1}{n}\sum_{i=1}^{n}\delta_{X_{i}}.
\end{equation*}
We refer the reader to Section \ref{sec::samp_treat} for further information about the ACTT, including details on implementing the one-step posterior correction.

For our initial posterior, we modelled both the conditional outcome distribution and propensity score with \textit{Bayesian additive regression trees (BART)} \citep{Chipman10}. As discussed at the end of Section \ref{sec::att}, we fitted the outcome regression model on both treatment arms together to pool information. In accordance with the ``black box'' nature of the competition, we used only the default hyperparameters in the R \texttt{BART} package implementations with no cross-validation (the propensity score model used the \texttt{pbart} procedure, which specifies a probit link). For each dataset, we obtained 4000 posterior samples aggregated from 4 independent chains after discarding 2000 burn-in samples. The posterior mean and the central 95\% credible interval were used for point and interval estimation respectively.

We highlight two top performers from the competition (descriptions of the remaining submissions can be found on p. 62 of \citet{Dorie19}). The ``BART on Pscore'' method \citep{Hahn20} fitted a BART outcome regression model that included an estimate of the propensity score (from a cross-validated probit BART model) as a splitting variable. A posterior for $\beta$ was then derived by replacing $(Y^{1}_{i}, Y^{0}_{i})$ with $(\mu^{(1)}(X_{i}), \mu^{(0)}(X_{i}))$ in (\ref{eqn::satt}) \citep{Hill11}. The argument put forth by \citet{Hahn20} was that their augmentation helps to mitigate regularization bias when estimating the outcome regression function because the propensity score is likely to be correlated with $\mu_{0}^{(0)}$ in practice, e.g. a physician may assign treatment more readily to patients with worse risk factors. Accordingly, the propensity score should be an informative transformation of the covariates for predicting the outcome, which makes it easier for the BART model to approximate the truth with finite samples.

The ``BART+TMLE'' method employed a correction procedure similar to that of the one-step estimator. \textit{Targeted maximum likelihood estimation (TMLE)} \citep{vanderLaan06, vanderLaan11} makes adjustments in the model space rather than the parameter space; an initial estimate $\hat{P}$ is mapped to a targeted estimate $\check{P}$ such that $\mathbb{P}_{n}[\dot{\chi}_{\check{P}}] \approx 0$ for the population ATT. Consequently, the TMLE plug-in $\chi(\check{P})$ is approximately equal to its own one-step estimator and its variance can be estimated by $\frac{1}{n}\mathbb{P}_{n}[(\dot{\chi}_{\check{P}}- \mathbb{P}_{n}[\dot{\chi}_{\check{P}}])^{2}]$ provided that $\check{P}$ satisfies the requisite conditions for asymptotic efficiency (see Section \ref{sec::grad_based}). In this case, the initial estimates of the outcome regression function and the propensity score were both derived from BART.

Figure \ref{fig::acic_rmse} compares the standardized point estimation errors across all settings. The top performer from the competition submissions was ``BART on Pscore''. In particular, it improved on ``BART MChains'', which was the same procedure aside from the propensity score augmentation (``MChains'' stands for multiple chains). We emphasize again, however, that the ``BART on Pscore'' augmentation does not target any particular low-dimensional functional, and the Bayesian update will instead use the estimated propensity score to help find a favourable bias-variance trade-off for the whole outcome regression function. In contrast, our one-step posteriors used the propensity score model to debias specifically in the direction of the ATT, which is perhaps why they were able to present further improvements in reducing the point estimation errors. However, as \citet{Hahn20} pointed out, there is the option of combining ``BART on Pscore'' with a targeted approach like TMLE, and by extension, we can also combine it with the one-step posterior correction to reap the benefits of both approaches. For additional context, we have included the results for the BART ACTT posterior without the one-step correction (``BART plug-in (ACTT)'') and it is clear that the one-step ACTT posterior performs substantially better.


\begin{figure}[]
\begin{center}
\includegraphics[width=5.5in]{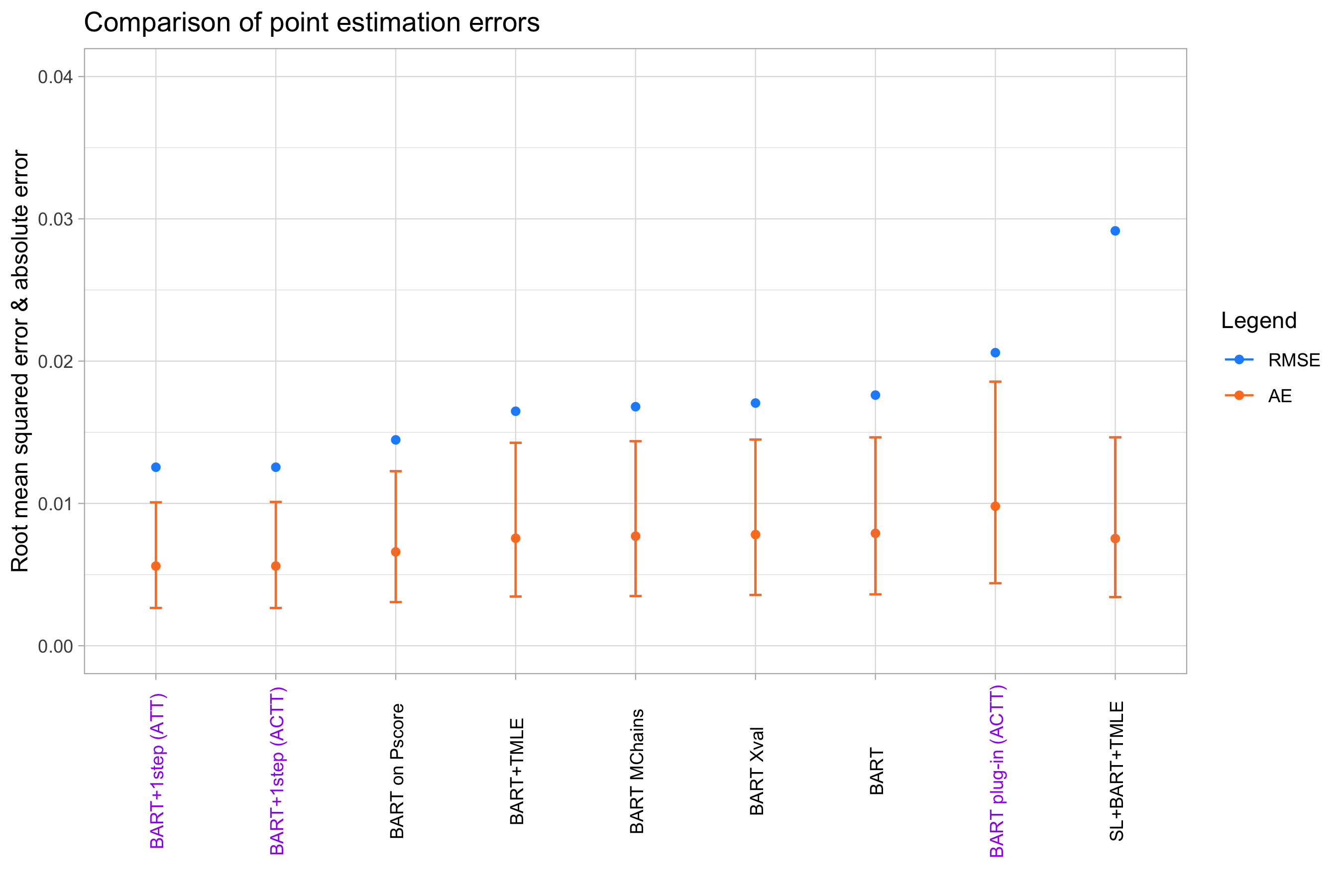}
\end{center}
\caption{Comparison of standardized point estimation errors for ACIC 2016. The competition submissions are shown in black. Root mean squared errors are shown by the blue dots; the median, lower quartile and upper quartile absolute errors across the 7700 datasets are shown in orange.\label{fig::acic_rmse}}
\end{figure}

The results for interval estimation can be found in Figure \ref{fig::acic_cov}. None of the competition submissions were able to attain the nominal coverage level of 95\% across all 7700 datasets, with ``BART+TMLE'' coming the closest. We recall that the intervals for ``BART+TMLE'' were constructed for the population ATT rather than the SATT, so we would expect them to be conservative. As the direct comparator, the one-step ATT posterior was able to attain a conservative level of coverage, albeit with slightly wider intervals. As expected, the one-step ACTT posterior had shorter interval lengths than both methods, but it was still able to attain nominal coverage. Taking both coverage and interval lengths into account, the one-step ACTT posterior was the clear winner in this category. As before, we have included the results for the uncorrected ACTT posterior, which performed slightly worse than the other plug-in BART methods like ``BART on Pscore'' and ``BART MChains''. 

\begin{figure}[]
\begin{center}
\includegraphics[width=5.5in]{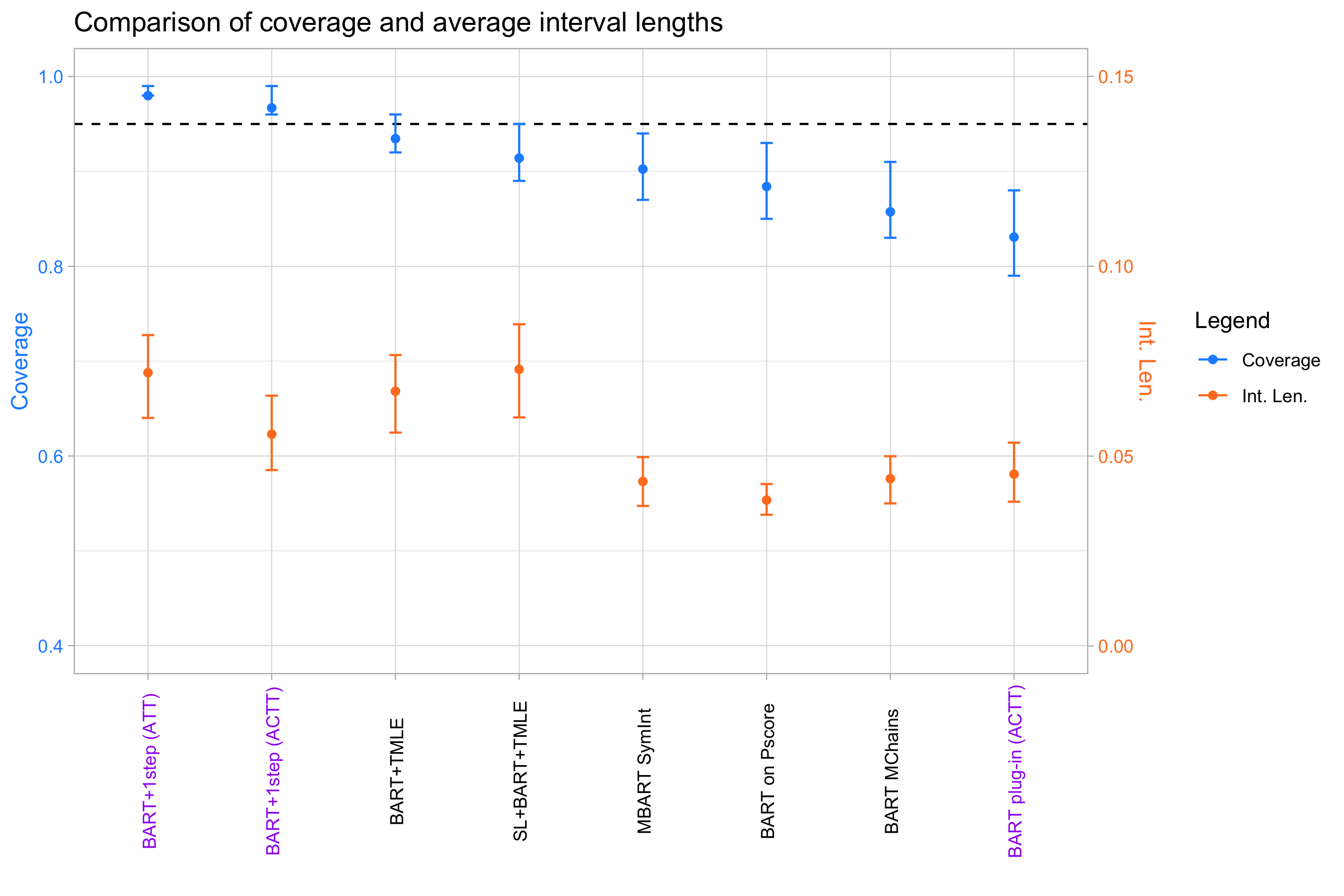}
\end{center}
\caption{Comparison of coverage and average standardized interval lengths for ACIC 2016. The competition submissions are shown in black. The dots are averages across all 7700 datasets, while the bars show the upper and lower quartiles for the 77 settings. The dashed horizontal line indicates the 95\% coverage level.\label{fig::acic_cov}}
\end{figure}

Earlier, we described how the variance of the TMLE estimator could be estimated by $\frac{1}{n}\mathbb{P}_{n}[(\dot{\chi}_{\check{P}}- \mathbb{P}_{n}[\dot{\chi}_{\check{P}}])^{2}]$, which is the sample variance of the estimated efficient influence function divided by the sample size. Other influence-function-based methods like one-step estimation take the same approach but with $\check{P}$ replaced by the original estimate $\hat{P}$ \citep{Hines22, Kennedy22}. This variance estimate is based on a first-order asymptotic approximation that assumes the second-order remainder terms to be negligible; in other words, the estimation uncertainty of $\check{P}$ (or $\hat{P}$) is effectively treated as being zero. Perhaps unsurprisingly, the intervals derived from these variance estimates tend to undercover in practice \citep{Gruber19}.

We can derive some further insight by formulating this in our one-step posterior correction framework. The classical one-step approach roughly corresponds to specifying a point-mass initial posterior on $\hat{P}$: $\Pi(\cdot \mid Z^{(n)}) = \delta_{\hat{P}}(\cdot)$, where the normal approximation used for constructing intervals is replaced by ``Bayesian bootstrapping'' the sampling distribution quantiles of $\mathbb{P}_{n}[\dot{\chi}_{\hat{P}}]$ akin to the \textit{percentile bootstrap} \citep{Efron16}. In the context of using a Bayesian procedure like BART, this makes little sense. Instead, our one-step posterior propagates the uncertainty from the whole BART posterior, which has the effect of extracting more information from the BART model. For example, is the posterior tightly concentrated around its mean or is it widely spread out? This gives an indication regarding the amount of information that is contained in the data, which serves to regularize our one-step posterior for better inference.

In conclusion, the one-step posterior is a much more natural and effective way to leverage Bayesian procedures for semiparametric inference than classical approaches like one-step estimation and TMLE. We expect there to be the potential for further performance gains for one-step BART posteriors if the user employs cross-validation to select hyperparameters or augments the outcome regression model with an estimated propensity score augmentation like ``BART on Pscore''. 

\section{Discussion}

The holistic nature of Bayesian inference---automatic uncertainty quantification, marginalizing nuisance parameters, inference for all estimands in one stroke---is often cited as one of its key strengths. This wisdom is subverted in the semiparametric setting. A Bayesian user has the unenviable choice of either tailoring their model towards a particular estimand at the expense of all others or run the risk of obtaining biased, uncalibrated inference. Our one-step posterior correction allows us to circumvent this problem by targeting estimands after the initial model has been fitted. 

The empirical evaluations provided suggest that our methodology can in fact outperform classical correction methods when using Bayesian algorithms. We have argued informally that this was achieved by drawing out more information from the initial posterior for regularization. It might be possible to offer more formal arguments by exploring higher-order properties of the one-step posterior; there are some precedents for this type of theory for semiparametric Bayes-like methods \citep{Cheng08a, Cheng08b}. 

A potential obstacle for our approach is the requirement that the efficient influence function can be computed across the whole model. This could be particularly challenging for separated semiparametric models, because the efficient influence function is defined in terms of orthogonal projections \citep{vanderVaart98} and may not admit a tractable expression. However, with recent developments in numerical procedures for approximating efficient influence functions \citep{Carone19, Jordan22}, we anticipate that it will be possible to fully automate the one-step posterior correction in future work. 

In this article, we have restricted our attention to conventional Bayesian models for deriving our initial posterior distribution. It is in fact possible to use other types of posteriors as long as they satisfy the requisite theoretical conditions, e.g. generalized posteriors \citep{Bissiri16}, martingale posteriors \citep{Fong21}, and posteriors arising from approximate computational algorithms like variational Bayes \citep{Jordan99}. These frameworks still abide by the plug-in principle and may therefore have the potential to benefit from our methodology.

\section{Acknowledgments}
The authors thank Edward Kennedy and Kolyan Ray for their helpful input. \textbf{AY} receives funding from Novo Nordisk. \textbf{CH} is supported by The Alan Turing Institute, the Li Ka Shing Foundation, the Medical Research Council, the EPSRC through the Bayes4Health
grant EP/R018561/1, AI for Science and Government UKRI, and the U.K. Engineering and Physical Sciences Research Council. \textbf{JR} has received funding from the European Research Council
(ERC) under the European Union’s Horizon 2020 research and innovation programme (grant agreement No 834175).

\bibliographystyle{abbrvnat}
\bibliography{reference}

\begin{thebibliography}{85}
\providecommand{\natexlab}[1]{#1}
\providecommand{\url}[1]{\texttt{#1}}
\expandafter\ifx\csname urlstyle\endcsname\relax
  \providecommand{\doi}[1]{doi: #1}\else
  \providecommand{\doi}{doi: \begingroup \urlstyle{rm}\Url}\fi

\bibitem[Berger et~al.(1999)Berger, Liseo, and Wolpert]{berger:liseo:wolpert99}
J.~O. Berger, B.~Liseo, and R.~L. Wolpert.
\newblock {Integrated likelihood methods for eliminating nuisance parameters}.
\newblock \emph{Statistical Science}, 14\penalty0 (1):\penalty0 1 -- 28, 1999.
\newblock \doi{10.1214/ss/1009211804}.
\newblock URL \url{https://doi.org/10.1214/ss/1009211804}.

\bibitem[Bickel and Ritov(1988)]{Bickel88}
P.~Bickel and Y.~Ritov.
\newblock {Estimating integrated squared density derivatives: sharp best order
  of convergence estimates.}
\newblock \emph{Sankhya}, 50:\penalty0 381--393, 1988.

\bibitem[Bickel and Ritov(2003)]{Bickel03}
P.~Bickel and Y.~Ritov.
\newblock Nonparametric estimators which can be ``plugged-in''.
\newblock \emph{Annals of Statistics}, 31:\penalty0 1033--1053, 2003.

\bibitem[Birg\'e and Massart(1995)]{Birge95}
L.~Birg\'e and P.~Massart.
\newblock Estimation of integral functionals of a density.
\newblock \emph{Annals of Statistics}, 23:\penalty0 11--29, 1995.

\bibitem[Bissiri et~al.(2016)Bissiri, Holmes, and Walker]{Bissiri16}
P.~Bissiri, C.~Holmes, and S.~Walker.
\newblock A general framework for updating belief distributions.
\newblock \emph{Journal of the Royal Statistical Society, Series B},
  78:\penalty0 1103--1130, 2016.

\bibitem[Breiman(2001)]{Breiman01}
L.~Breiman.
\newblock Random forests.
\newblock \emph{{Machine Learning}}, 45:\penalty0 5--32, 2001.

\bibitem[Buja et~al.(2019{\natexlab{a}})]{Buja19a}
A.~Buja et~al.
\newblock {Models as approximations I: consequences illustrated with linear
  regression}.
\newblock \emph{Statistical Science}, 34:\penalty0 523--544,
  2019{\natexlab{a}}.

\bibitem[Buja et~al.(2019{\natexlab{b}})]{Buja19b}
A.~Buja et~al.
\newblock {Models as approximations II: a model-free theory of parametric
  regression}.
\newblock \emph{Statistical Science}, 34:\penalty0 545--565,
  2019{\natexlab{b}}.

\bibitem[Carone et~al.(2019)Carone, Luedtke, and van~der Laan]{Carone19}
M.~Carone, A.~Luedtke, and M.~van~der Laan.
\newblock Toward computerized efficient estimation in infinite-dimensional
  models.
\newblock \emph{Journal of the American Statistical Association}, 114:\penalty0
  1174--1190, 2019.

\bibitem[Castillo(2008)]{Castillo08}
I.~Castillo.
\newblock {Lower bounds for posterior rates with Gaussian process priors}.
\newblock \emph{Electronic Journal of Statistics}, 2:\penalty0 1281--1299,
  2008.

\bibitem[Castillo(2012)]{Castillo12}
I.~Castillo.
\newblock {A semiparametric Bernstein-von Mises theorem for Gaussian process
  priors}.
\newblock \emph{Probability Theory and Related Fields}, 152:\penalty0 53--99,
  2012.

\bibitem[Castillo and Nickl(2014)]{castillo:nickl14}
I.~Castillo and R.~Nickl.
\newblock {On the Bernstein–von Mises phenomenon for nonparametric Bayes
  procedures}.
\newblock \emph{The Annals of Statistics}, 42\penalty0 (5):\penalty0 1941 --
  1969, 2014.
\newblock \doi{10.1214/14-AOS1246}.
\newblock URL \url{https://doi.org/10.1214/14-AOS1246}.

\bibitem[Castillo and Rousseau(2015)]{Castillo15}
I.~Castillo and J.~Rousseau.
\newblock {A Bernstein-von Mises theorem for smooth functionals in
  semiparametric models}.
\newblock \emph{Annals of Statistics}, 43:\penalty0 2353--2383, 2015.

\bibitem[Cheng and Kosorok(2008{\natexlab{a}})]{Cheng08a}
G.~Cheng and M.~Kosorok.
\newblock Higher order semiparametric frequentist inference with the profile
  sampler.
\newblock \emph{Annals of Statistics}, 36:\penalty0 1786--1818,
  2008{\natexlab{a}}.

\bibitem[Cheng and Kosorok(2008{\natexlab{b}})]{Cheng08b}
G.~Cheng and M.~Kosorok.
\newblock General frequentist properties of the posterior profile distribution.
\newblock \emph{Annals of Statistics}, 36:\penalty0 1819--1853,
  2008{\natexlab{b}}.

\bibitem[Chernozhukov et~al.(2018)Chernozhukov, Chetverikov, Demirer, Duflo,
  Hansen, Newey, and Robins]{Chernozhukov18}
V.~Chernozhukov, D.~Chetverikov, M.~Demirer, E.~Duflo, C.~Hansen, W.~Newey, and
  J.~Robins.
\newblock Double/debiased machine learning for treatment and structural
  parameters.
\newblock \emph{The Econometrics Journal}, 21:\penalty0 C1--C68, 2018.

\bibitem[Chipman et~al.(2007)Chipman, Geroge, and McCulloch]{Chipman07}
H.~Chipman, E.~Geroge, and R.~McCulloch.
\newblock Bayesian ensemble learning.
\newblock \emph{Neural Information Processing Systems}, 19:\penalty0 265--272,
  2007.

\bibitem[Chipman et~al.(2010)Chipman, Geroge, and McCulloch]{Chipman10}
H.~Chipman, E.~Geroge, and R.~McCulloch.
\newblock {BART: Bayesian Additive Regression Trees}.
\newblock \emph{Annals of Applied Statistics}, 4:\penalty0 266--298, 2010.

\bibitem[Dawid(2021)]{Dawid21}
A.~Dawid.
\newblock Decision-theoretic foundations for statistical causality.
\newblock \emph{Journal of Causal Inference}, 9:\penalty0 39--77, 2021.

\bibitem[Diaconis and Freedman(1986)]{diaconis:freedman86}
P.~Diaconis and D.~Freedman.
\newblock {On the Consistency of Bayes Estimates}.
\newblock \emph{The Annals of Statistics}, 14\penalty0 (1):\penalty0 1 -- 26,
  1986.
\newblock \doi{10.1214/aos/1176349830}.
\newblock URL \url{https://doi.org/10.1214/aos/1176349830}.

\bibitem[Donnet et~al.(2018)Donnet, Rivoirard, Rousseau, and
  Scricciolo]{donnet18}
S.~Donnet, V.~Rivoirard, J.~Rousseau, and C.~Scricciolo.
\newblock {Posterior concentration rates for empirical Bayes procedures with
  applications to Dirichlet process mixtures}.
\newblock \emph{Bernoulli}, 24\penalty0 (1):\penalty0 231 -- 256, 2018.
\newblock \doi{10.3150/16-BEJ872}.

\bibitem[Dorie et~al.(2019)Dorie, Hill, Shalit, Scott, and Cervone]{Dorie19}
V.~Dorie, J.~Hill, U.~Shalit, M.~Scott, and D.~Cervone.
\newblock {Automated versus do-it-yourself methods for causal inference:
  lessons learned from a data analysis competition}.
\newblock \emph{Statistical Science}, 34:\penalty0 43--68, 2019.

\bibitem[Dudley(2002)]{Dudley02}
R.~Dudley.
\newblock \emph{{Real Analysis and Probability}}.
\newblock Cambridge University Press, Cambridge, 2002.

\bibitem[Efron and Hastie(2016)]{Efron16}
B.~Efron and T.~Hastie.
\newblock \emph{{Computer Age Statistical Inference}}.
\newblock Cambridge University Press, Cambridge, 2016.

\bibitem[Ferguson(1973)]{Ferguson73}
T.~Ferguson.
\newblock {A Bayesian analysis of some nonparametric problems}.
\newblock \emph{Annals of Statistics}, 2:\penalty0 209--230, 1973.

\bibitem[Fisher and Kennedy(2021)]{Fisher21}
A.~Fisher and E.~Kennedy.
\newblock Visually communicating and teaching intuition for influence
  functions.
\newblock \emph{The American Statistician}, 70:\penalty0 162--172, 2021.

\bibitem[Fong et~al.(2021)Fong, Holmes, and Walker]{Fong21}
E.~Fong, C.~Holmes, and S.~G. Walker.
\newblock Martingale posterior distributions.
\newblock \emph{arXiv preprint arXiv:2103.15671}, 2021.

\bibitem[Geneletti and Dawid(2011)]{Geneletti11}
S.~Geneletti and A.~Dawid.
\newblock Defining and identifying the effect of treatment on the treated.
\newblock In P.~Illari, F.~Russo, and J.~Williamson, editors, \emph{Causality
  in the sciences}, pages 728--749. Oxford University Press, 2011.

\bibitem[Ghosal and van~der Vaart(2017)]{Ghosal17}
S.~Ghosal and A.~van~der Vaart.
\newblock \emph{{Fundamentals of Nonparametric Bayesian Inference}}.
\newblock Cambridge University Press, Cambridge, 2017.

\bibitem[Ghosh and Ramamoorthi(2003)]{Ghosh03}
J.~Ghosh and R.~Ramamoorthi.
\newblock \emph{{{B}ayesian Nonparametrics}}.
\newblock Springer, New York, 2003.

\bibitem[Gin\'e and Nickl(2008)]{Gine08}
E.~Gin\'e and R.~Nickl.
\newblock {A simple adaptive estimator of the integrated square of a density.}
\newblock \emph{Bernoulli}, 14:\penalty0 47--61, 2008.

\bibitem[Gin\'e and Nickl(2016)]{Gine16}
E.~Gin\'e and R.~Nickl.
\newblock \emph{{Mathematical foundations of infinite-dimensional statistical
  models}}.
\newblock Cambridge University Press, Cambridge, 2016.

\bibitem[Goldstein and Messer(1992)]{Goldstein92}
L.~Goldstein and K.~Messer.
\newblock Optimal plug-in estimators for nonparametric functional estimation.
\newblock \emph{Annals of Statistics}, 20:\penalty0 1306--1328, 1992.

\bibitem[Gruber and van~der Laan(2019)]{Gruber19}
S.~Gruber and M.~van~der Laan.
\newblock {Comment on ``Automated versus do-it-yourself methods for causal
  inference: lessons learned from a data analysis competition''}.
\newblock \emph{Statistical Science}, 34:\penalty0 82--85, 2019.

\bibitem[Hahn et~al.(2020)Hahn, Murray, and Carvalho]{Hahn20}
P.~Hahn, J.~Murray, and C.~Carvalho.
\newblock {Bayesian regression tree models for causal inference:
  regularization, confounding, and heterogeneous effects}.
\newblock \emph{{B}ayesian {A}nalysis}, 15:\penalty0 965--1056, 2020.

\bibitem[Heckman and Robb(1985)]{Heckman85}
J.~Heckman and R.~Robb.
\newblock Alternative methods for evaluating the impact of interventions: An
  overview.
\newblock \emph{Journal of Econometrics}, 30:\penalty0 239--267, 1985.

\bibitem[Hern{\'a}n and Robins(2020)]{Hernan20}
M.~A. Hern{\'a}n and J.~Robins.
\newblock Causal inference: What if.
\newblock 2020.

\bibitem[Hill(2011)]{Hill11}
J.~Hill.
\newblock Bayesian nonparametric modeling for causal inference.
\newblock \emph{Journal of Computational and Graphical Statistics},
  20:\penalty0 217--240, 2011.

\bibitem[Hines et~al.(2022)Hines, Dukes, Diaz-Ordaz, and Vansteelandt]{Hines22}
O.~Hines, O.~Dukes, K.~Diaz-Ordaz, and S.~Vansteelandt.
\newblock Demystifying statistical learning based on efficient influence
  functions.
\newblock \emph{The American Statistician}, 76:\penalty0 292--304, 2022.

\bibitem[Imbens(2004)]{Imbens04}
G.~Imbens.
\newblock Nonparametric estimation of average treatment effects under
  exogeneity: a review.
\newblock \emph{Review of Economics and Statistics}, 86:\penalty0 4--29, 2004.

\bibitem[Jordan et~al.(1999)Jordan, Ghahramani, Jaakkola, and Saul]{Jordan99}
M.~Jordan, Z.~Ghahramani, Z.~Jaakkola, and L.~Saul.
\newblock An introduction to variational methods for graphical models.
\newblock \emph{{Machine Learning}}, 37:\penalty0 183--233, 1999.

\bibitem[Jordan et~al.(2022)Jordan, Wang, and Zhou]{Jordan22}
M.~Jordan, Y.~Wang, and A.~Zhou.
\newblock Empirical {G}ateaux derivative for causal inference.
\newblock \emph{Neural Information Processing Systems}, 35, 2022.

\bibitem[Kang and Schafer(2007)]{Kang07}
J.~Kang and J.~Schafer.
\newblock Demystifying double robustness: a comparison of alternative
  strategies for estimating a population mean from incomplete data (with
  discussion).
\newblock \emph{Statistical Science}, 22:\penalty0 523--39, 2007.

\bibitem[Kennedy(2022)]{Kennedy22}
E.~Kennedy.
\newblock Semiparametric doubly robust targeted double machine learning: A
  review.
\newblock \emph{arXiv}, page 2203.06469, 2022.

\bibitem[Kennedy et~al.(2021)Kennedy, Bonvini, and Mishler]{Kennedy21}
E.~Kennedy, M.~Bonvini, and A.~Mishler.
\newblock {Comment on ``Statistical Modeling: The Two Cultures'' by Leo
  Breiman}.
\newblock \emph{Observational Studies}, 7:\penalty0 145--156, 2021.

\bibitem[Kruijer et~al.(2010)Kruijer, Rousseau, and van~der Vaart]{Kruijer10}
W.~Kruijer, J.~Rousseau, and A.~van~der Vaart.
\newblock {Adaptive Bayesian density estimation with location-scale mixtures}.
\newblock \emph{Electronic Journal of Statistics}, 4:\penalty0 1225--1257,
  2010.

\bibitem[K\"unzel et~al.(2019)K\"unzel, Sekhon, Bickel, and Yu]{Kunzel19}
S.~K\"unzel, J.~Sekhon, P.~Bickel, and B.~Yu.
\newblock Metalearners for estimating heterogeneous treatment effects using
  machine learning.
\newblock \emph{Proceedings of the National Academy of Sciences}, 116:\penalty0
  4156--4165, 2019.

\bibitem[Laurent(1996)]{Laurent96}
B.~Laurent.
\newblock Efficient estimation of integral functionals of a density.
\newblock \emph{Annals of Statistics}, 24:\penalty0 659--681, 1996.

\bibitem[Le~Cam(1986)]{LeCam86}
L.~Le~Cam.
\newblock \emph{{Asymptotic Methods in Statistical Decision Theory}}.
\newblock Springer-Verlag, New York, 1986.

\bibitem[Maugis-Rabusseau and Michel(2013)]{maugis:michel}
C.~Maugis-Rabusseau and B.~Michel.
\newblock Adaptive density estimation using finite gaussian mixtures.
\newblock \emph{ESAIM : P\&S}, 17:\penalty0 698--724, 2013.
\newblock \doi{10.1051/ps/2012018}.

\bibitem[Morgan and Winship(2007)]{Morgan07}
S.~Morgan and C.~Winship.
\newblock \emph{{Counterfactuals and Causal Inference: Methods and Principles
  for Social Research}}.
\newblock Cambridge University Press, Cambridge, 2007.

\bibitem[Murphy and van~der Vaart(2000)]{Murphy00}
S.~Murphy and A.~van~der Vaart.
\newblock On {P}rofile {L}ikelihood.
\newblock \emph{Journal of the American Statistical Association}, 95:\penalty0
  449--465, 2000.

\bibitem[Newey et~al.(1998)Newey, Hsieh, and Robins]{Newey98}
W.~Newey, F.~Hsieh, and J.~Robins.
\newblock Undersmoothing and bias corrected functional estimation.
\newblock \emph{MIT. Working Paper}, 1998.

\bibitem[Ogburn and Shpitser(2021)]{Ogburn21}
E.~Ogburn and I.~Shpitser.
\newblock {Causal Modelling: the Two Cultures}.
\newblock \emph{Observational Studies}, 7:\penalty0 179--183, 2021.

\bibitem[Pearl(2009)]{Pearl09}
J.~Pearl.
\newblock Causal inference in statistics: An overview.
\newblock \emph{Statistics surveys}, 3:\penalty0 96--146, 2009.

\bibitem[Pfanzagl(1982)]{Pfanzagl82}
J.~Pfanzagl.
\newblock \emph{{Contributions to a General Asymptotic Statistical Theory}}.
\newblock Springer, New York, 1982.

\bibitem[Prakasa~Rao(1983)]{PrakasaRao83}
B.~Prakasa~Rao.
\newblock \emph{{Nonparametric Functional Estimation}}.
\newblock Academic Press, New York, 1983.

\bibitem[Ray and Szab\'o(2019)]{RaySzabo19}
K.~Ray and B.~Szab\'o.
\newblock {Debiased Bayesian inference for average treatment effects}.
\newblock \emph{Advances in Neural Information Processing Systems}, 33, 2019.

\bibitem[Ray and van~der Vaart(2020)]{Ray20}
K.~Ray and A.~van~der Vaart.
\newblock Semiparametric {B}ayesian causal inference.
\newblock \emph{Annals of Statistics}, 48:\penalty0 2999--3020, 2020.

\bibitem[Ray and van~der Vaart(2021)]{Ray21}
K.~Ray and A.~van~der Vaart.
\newblock {On the Bernstein-von Mises theorem for the Dirichlet process}.
\newblock \emph{Electronic Journal of Statistics}, 15:\penalty0 2224--2246,
  2021.

\bibitem[Rivoirard and Rousseau(2012)]{Rivoirard12}
V.~Rivoirard and J.~Rousseau.
\newblock Bernstein-von {M}ises theorem for linear functionals of the density.
\newblock \emph{Annals of Statistics}, 40:\penalty0 1489--1523, 2012.

\bibitem[Robins and Ritov(1997)]{RobinsRitov97}
J.~Robins and Y.~Ritov.
\newblock {Toward a curse of dimensionality appropriate (CODA) asymptotic
  theory for semi-parametric models}.
\newblock \emph{Statistics in Medicine}, 16:\penalty0 285--319, 1997.

\bibitem[Robins and Rotnitzky(1992)]{Robins92}
J.~Robins and A.~Rotnitzky.
\newblock Recovery of information and adjustment for dependent censoring using
  surrogate markers.
\newblock \emph{AIDS Epidemiology-Methodological Issues}, pages 297--331, 1992.

\bibitem[Robins et~al.(2000)Robins, Rotnitzky, and van~der Laan]{Robins00}
J.~Robins, A.~Rotnitzky, and M.~van~der Laan.
\newblock Comment: {O}n profile likelihood.
\newblock \emph{Journal of the American Statistical Association}, 95:\penalty0
  477--482, 2000.

\bibitem[Robins et~al.(2015)Robins, Hern\'an, and Wasserman]{Robins15}
J.~Robins, M.~Hern\'an, and L.~Wasserman.
\newblock Discussion of: {O}n {B}ayesian estimation of marginal structural
  models.
\newblock \emph{Biometrics}, 71:\penalty0 296--299, 2015.

\bibitem[Robins et~al.(2017)Robins, Li, Tchetgen~Tchetgen, and van~der
  Vaart]{Robins17}
J.~Robins, L.~Li, E.~Tchetgen~Tchetgen, and A.~van~der Vaart.
\newblock Minimax estimation of a functional on a structured high-dimensional
  model.
\newblock \emph{Annals of Statistics}, 45:\penalty0 1951--1987, 2017.

\bibitem[Rotnitzky et~al.(2021)Rotnitzky, Smucler, and Robins]{Rotnitzky21}
A.~Rotnitzky, E.~Smucler, and J.~Robins.
\newblock Characterization of parameters with a mixed bias property.
\newblock \emph{Biometrika}, 108:\penalty0 231--238, 2021.

\bibitem[Rousseau(2016)]{Rousseau16}
J.~Rousseau.
\newblock On the frequentist properties of {B}ayesian nonparametric methods.
\newblock \emph{Annual Review of Statistics and Its Application}, 3:\penalty0
  211--231, 2016.

\bibitem[Rubin(1976)]{Rubin76}
D.~Rubin.
\newblock Inference and missing data.
\newblock \emph{Biometrika}, 63:\penalty0 581--592, 1976.

\bibitem[Rubin(1977)]{Rubin77}
D.~Rubin.
\newblock Assignment to treatment group on the basis of a covariate.
\newblock \emph{Journal of Educational Behavioral Statistics}, 2:\penalty0
  1--26, 1977.

\bibitem[Rubin(1981)]{Rubin81}
D.~Rubin.
\newblock The {B}ayesian bootstrap.
\newblock \emph{Annals of Statistics}, 9:\penalty0 130--134, 1981.

\bibitem[Scricciolo(2014)]{Scricciolo14}
C.~Scricciolo.
\newblock {Adaptive Bayesian Density Estimation in $L_{p}$-metrics with
  Pitman-Yor or Normalized Inverse-Gaussian Process Kernel Mixtures}.
\newblock \emph{{B}ayesian {A}nalysis}, 9:\penalty0 475--520, 2014.

\bibitem[Shen et~al.(2013)Shen, Tokdar, and Ghosal]{Shen13}
W.~Shen, S.~Tokdar, and S.~Ghosal.
\newblock {Adaptive Bayesian multivariate density estimation with Dirichlet
  mixtures}.
\newblock \emph{Biometrika}, 100:\penalty0 623--640, 2013.

\bibitem[van~der Laan and Robins(2003)]{vanderLaan03}
M.~van~der Laan and J.~Robins.
\newblock \emph{Unified Methods for Censored Longitudinal Data and Causality.}
\newblock Springer-Verlag, New York, 2003.

\bibitem[van~der Laan and Rose(2011)]{vanderLaan11}
M.~van~der Laan and S.~Rose.
\newblock \emph{Targeted Learning.}
\newblock Springer-Verlag, New York, 2011.

\bibitem[van~der Laan and Rubin(2006)]{vanderLaan06}
M.~van~der Laan and D.~Rubin.
\newblock Targeted maximum likelihood learning.
\newblock \emph{International Journal of Biostatistics}, 2, 2006.

\bibitem[van~der Vaart(1991)]{vanderVaart91}
A.~van~der Vaart.
\newblock On differentiable functionals.
\newblock \emph{Annals of Statistics}, 19:\penalty0 178--204, 1991.

\bibitem[van~der Vaart(1992)]{vanderVaart92}
A.~van~der Vaart.
\newblock Asymptotic linearity of minimax estimators.
\newblock \emph{Statistica Neerlandica}, 2--3:\penalty0 179--194, 1992.

\bibitem[van~der Vaart(1998)]{vanderVaart98}
A.~van~der Vaart.
\newblock \emph{{Asymptotic Statistics}}.
\newblock Cambridge University Press, Cambridge, 1998.

\bibitem[van~der Vaart(2002)]{vanderVaart02}
A.~van~der Vaart.
\newblock Semiparametric statistics.
\newblock In P.~Bernard, editor, \emph{Lectures on {P}robability {T}heory and
  {S}tatistics}, pages 331--457. Springer Verlag, Berlin, 2002.

\bibitem[van~der Vaart(2014)]{vanderVaart14}
A.~van~der Vaart.
\newblock Higher order tangent spaces and influence functions.
\newblock \emph{Statistical Science}, 29:\penalty0 679--686, 2014.

\bibitem[van~der Vaart and Wellner(1996)]{vanderVaart96}
A.~van~der Vaart and J.~Wellner.
\newblock \emph{{Weak Convergence and Empirical Processes}}.
\newblock Springer, New York, 1996.

\bibitem[Vansteelandt(2021)]{Vansteelandt21}
S.~Vansteelandt.
\newblock {Statistical modelling in the age of data science}.
\newblock \emph{Observational Studies}, 7:\penalty0 217--228, 2021.

\bibitem[Vansteelandt and Dukes(2022)]{Vansteelandt22}
S.~Vansteelandt and O.~Dukes.
\newblock {Assumption-lean Inference for Generalised Linear Model Parameters
  (with discussion)}.
\newblock \emph{Journal of the Royal Statistical Society, Series B},
  84:\penalty0 657--685, 2022.

\bibitem[Yiu et~al.(2022)Yiu, Fong, Walker, and Holmes]{Yiu22}
A.~Yiu, E.~Fong, S.~Walker, and C.~Holmes.
\newblock Causal predictive inference and target trial emulation.
\newblock \emph{arXiv}, page 2207.12479, 2022.

\end{thebibliography}

\appendix

\section{An overview of semiparametric theory} \label{supp::overview}

\subsection{Differentiable functionals and asymptotic efficiency} \label{supp::semi_eff}

Semiparametric efficiency theory is concerned with estimators that are insensitive to small local changes in the data-generating distribution in any direction. The notion of ``direction'' is formalized by considering ``smooth'', one-dimensional paths contained in the model $\mathcal{P}$ that pass through a distribution $P \in \mathcal{P}$, and the set of permitted directions is called the \textit{tangent space} $\dot{\mathcal{P}}_{P} \subseteq L_{2}(P)$ of the model $\mathcal{P}$ at $P$. The precise notion of ``smoothness'' is defined as follows. 
\begin{definition}
A \textit{parametric submodel} $\{P_{t,g}:  t \in [0, \varepsilon), \varepsilon > 0\} \subset \mathcal{P}$ with $P_{t,g}\rvert_{t=0} = P$ is \textit{differentiable in quadratic mean} with \textit{score function} $g \in \dot{\mathcal{P}}_{P}$ if
\begin{equation} \label{eqn::dqm}
    \lim_{t\downarrow 0}\int \left[\frac{dP^{1/2}_{t,g}-dP^{1/2}}{t}-\frac{1}{2}g\, dP^{1/2}\right]^{2} = 0.
\end{equation}
In words, a parametric submodel is a one-dimensional model parameterized by $t$ that is contained in $\mathcal{P}$ and passes through $P$ at $t=0$ with score function $g$, which parameterizes the direction in which the submodel approaches $P$ as $t \downarrow 0$. The resulting high-level interpretation of the tangent space is the set of directions in which we can move an infinitesimal distance away from $P$ and still remain in the model $\mathcal{P}$.
\end{definition}

The definition of the score function is closely related to the more familiar notion of
\begin{equation} \label{eqn::scoredef}
    g(z) = \frac{\partial}{\partial t} \log dP_{t,g}(z)\rvert_{t=0},
\end{equation}
although neither definition is strictly stronger than the other. The quadratic mean version is more appropriate because it exactly ensures a type of \textit{local asymptotic normality} (see Lemma 1.9 in \citet{vanderVaart02}). Despite that, explicit constructions of parametric submodels will usually make use of (\ref{eqn::scoredef}) instead (see the examples in Lecture 1 of \citet{vanderVaart02}).

A fundamental requirement for semiparametric efficiency theory is (pathwise) differentiability of the target functional. 

\begin{definition}
A functional $\chi$ \textit{(pathwise) differentiable} at $P$ with respect to $\dot{\mathcal{P}}_{P}$ if:
\begin{itemize}
    \item[(a)] the mapping $t \mapsto \chi(P_{t,g})$ is differentiable, and
    \item[(b)] there exists a fixed, measurable function $\psi_{P}: \mathcal{Z} \rightarrow \mathbb{R}$ such that
\begin{equation} \label{eqn::graddef}
    \frac{\partial\chi(P_{t,g})}{\partial t}\rvert_{t=0} = P[\psi_{P} g]
\end{equation}
for every $g \in \dot{\mathcal{P}}_{P}$ and any parametric submodel $\{P_{t,g}\}$ with score function $g$. We call $\psi_{P}$ a \textit{gradient} of $\chi$ at $P$.
\end{itemize}
\end{definition}

The definition of pathwise differentiability above can be motivated as follows. Suppose we try to form a distributional Taylor expansion
\begin{equation} \label{eqn::vonmises}
    \chi(P_{t,g})-\chi(P) = (P_{t,g}-P)[\psi_{P}] + r_{2}(P_{t,g},P),
\end{equation}
where $r_{2}(P_{t,g},P)$ is simply defined as the left-hand side minus the first term on the right. Now suppose that we divided both sides by $t$ and took the limit as $t \rightarrow 0$. It is clear that the left-hand side converges to $\partial\chi(P_{t,g})/\partial t\rvert_{t=0}$. Assuming that (\ref{eqn::scoredef}) holds and that the order of differentiation and integration can be exchanged, the first term on the right converges to
\begin{equation*} 
    \frac{\partial}{\partial t} P_{t,g}[\psi_{P}] \rvert_{t=0} = \int \psi_{P} \frac{\partial}{\partial t} dP_{t,g}\rvert_{t=0} = \int \psi_{P} \left(\frac{\partial}{\partial t} \log dP_{t,g}\right)dP_{t,g}\rvert_{t=0}= P[\psi_{P} g].
\end{equation*}
By the definition of $\psi_{P}$ in (\ref{eqn::graddef}), this must mean that $\lim_{t \rightarrow 0}r_{2}(P_{t,g},P)/t = 0$.

This suggests that a gradient can be viewed as a type of first-order distributional derivative, which takes as input the direction of change in the data-generating distribution and outputs the corresponding slope in $\chi$. Furthermore, the remainder term $r_{2}(P_{t,g},P)$ must depend on the difference between $P_{t,g}$ and $P$ in some higher-order way that allows it to vanish at a faster-than-linear rate. This is a crucial aspect that enables the use of flexible algorithms to estimate the data-generating distribution while retaining $\sqrt{n}$-consistent estimation of the target estimand. The ``second-order'' nature of the remainder is also a key ingredient of the so-called \textit{doubly robust} property enjoyed by some estimators, which will be described later.

Gradients are only unique up to summation with elements that are orthogonal to the tangent space. Among all gradients with mean zero, the one with the smallest variance is called the \textit{canonical gradient} or \textit{efficient influence function}, which we denote by $\dot{
\chi}_{P}$. This is the unique projection of any mean-zero gradient onto the closure of the linear span of the tangent space. As stated in the main paper, we say that an estimator $\hat{\chi}_{n} = \hat{\chi}_{n}(Z^{(n)})$ is \textit{asymptotically efficient} at $P_{0}$ if it satisfies \eqref{eqn::asympeff}:
\begin{equation*} 
    \sqrt{n}(\hat{\chi}_{n}-\chi(P_{0})) = \sqrt{n}\mathbb{P}_{n}[\dot{\chi}_{P_{0}}] + o_{P_{0}}(1).
\end{equation*}

\subsection{Gradient-based estimation} \label{sec::grad_based}

In this subsection, we give an account of an influence-function-based approach for obtaining semiparametric efficient estimators that has now become standard in the literature and widely applied \citep[e.g.][]{vanderLaan06, vanderVaart14, Kennedy22, Hines22}.

The most intuitive approach for estimating $\chi$ proceeds by constructing an estimate $\hat{P}$ of the entire data-generating distribution and ``plugging'' this into the functional mapping to obtain $\chi(\hat{P})$. For example, if $\mathcal{P}$ is a parametric model and $\hat{P}$ is the maximum likelihood estimate (MLE), then $\chi(\hat{P})$ is the MLE of $\chi$ and therefore enjoys the familiar optimal asymptotic properties from classical theory.

Since we are interested in nonparametric models, it appears natural to consider plugging in the nonparametric MLE: the empirical distribution $\mathbb{P}_{n}$. This works for certain estimands---e.g. means and quantiles---in the sense that the \textit{functional delta method} \citep[e.g. Chapter 20 of][]{vanderVaart98} will guarantee $\sqrt{n}$-consistency, asymptotic normality and asymptotic efficiency. The discrete nature of the empirical distribution, however, precludes estimands that depend on densities or regression functions on continuous variables. Plug-in estimation for such functionals must employ a smoothed estimate of the data-generating distribution. If we go beyond parametric models, we have to be careful that the smoothing and increased flexibility does not induce a non-negligible bias that prevents $\sqrt{n}$-consistency.

A crude heuristic for interpreting this bias can be derived from expanding $\chi(P_{0})$ around $\chi(\hat{P})$ in a similar fashion to (\ref{eqn::vonmises}):
\begin{equation} \label{eqn::plugexp}
    \chi(P_{0})-\chi(\hat{P}) = (P_{0}-\hat{P})[\dot{\chi}_{\hat{P}}] + r_{2}(P_{0},\hat{P}).
\end{equation}
In general, we would expect the linear term on the right to vanish on the order of $d(P_{0}, \hat{P})$ for some appropriate distance $d$. Furthermore, as discussed previously, we expect the remainder $r_{2}(P_{0},\hat{P})$ to vanish at a higher-order rate (perhaps quadratic in $d(P_{0}, \hat{P})$). Thus, the asymptotic behaviour of our plug-in estimator $\chi(\hat{P})$ will tend to be dominated by the linear ``first-order bias'' term $(P_{0}-\hat{P})[\dot{\chi}_{\hat{P}}]$. 

If $\hat{P}$ is $\sqrt{n}$-consistent, this heuristic suggests that the plug-in estimator should also be $\sqrt{n}$-consistent. But if we choose to use flexible nonparametric methods, there is a risk that the slower rate of convergence from estimating $P$ will bleed into the estimation of $\chi$. In order to achieve $\sqrt{n}$-consistency and asymptotic efficiency, it may be necessary to ``undersmooth'' the nonparametric estimator/prior to reduce bias \citep{Goldstein92,  Newey98}, which is difficult to implement in practice.

The expansion in (\ref{eqn::plugexp}) suggests that it might be possible to improve on the plug-in estimator by estimating the first-order bias and removing it from the plug-in estimator. The canonical gradient at any $P \in \mathcal{P}$ has mean zero under $P$ by definition, so the first term on the right of (\ref{eqn::plugexp}) equals $P_{0}[\dot{\chi}_{\hat{P}}]$. Since we do not know $P_{0}$, the obvious approach is replace it with the empirical measure $\mathbb{P}_{n}$. Taking $\mathbb{P}_{n}[\dot{\chi}_{\hat{P}}]$ from both sides of (\ref{eqn::plugexp}) leads to a new expansion:
\begin{equation} \label{eqn::1stepexp}
\begin{split}
    \chi(P_{0})-\chi(\hat{P}) - \mathbb{P}_{n}[\dot{\chi}_{\hat{P}}] &= -(\mathbb{P}_{n} -P_{0})[\dot{\chi}_{\hat{P}}] + R_{2}(P_{0},\hat{P})\\
    &= \underbrace{-(\mathbb{P}_{n}-P_{0})[\dot{\chi}_{P_{0}}]}_{\circled{1}}+\underbrace{(\mathbb{P}_{n}-P_{0})[\dot{\chi}_{P_{0}} - \dot{\chi}_{\hat{P}}]}_{\circled{2}}+ \underbrace{r_{2}(P_{0},\hat{P})}_{\circled{3}},
\end{split}
\end{equation}
where the second equality follows simply from adding and subtracting $(\mathbb{P}_{n}-P_{0})[\dot{\chi}_{P_{0}}]$. Thus, under the following assumptions to control terms $\circled{2}$ and $\circled{3}$ \citep[e.g.][]{vanderVaart14}:
\begin{assumption} \label{ass::emp_proc}
$\mathbb{G}_{n}[\dot{\chi}_{\hat{P}} - \dot{\chi}_{P_{0}}] = o_{P_0}(1)$ 
\end{assumption}
\begin{assumption} \label{ass::sec_order}
$\sqrt{n}r_{2}(P_{0},\hat{P}) = o_{P_0}(1)$
\end{assumption}
\noindent we have
\begin{equation*}
    \sqrt{n}(\chi(\hat{P})+\mathbb{P}_{n}[\dot{\chi}_{\hat{P}}]-\chi(P_{0})) = \sqrt{n}\mathbb{P}_{n}[\dot{\chi}_{P_{0}}]+o_{P_{0}}(1),
\end{equation*}
which is precisely the condition displayed in (\ref{eqn::asympeff}) for asymptotic efficiency. This motivates replacing the plug-in estimator with the new \textit{one-step estimator}
\begin{equation*}
    \hat{\chi}_{\text{1-step}} = \chi(\hat{P})+\mathbb{P}_{n}[\dot{\chi}_{\hat{P}}].
\end{equation*}

Assumption \ref{ass::emp_proc} is implied by the following pair of conditions (Lemma 19.24 of \citet{vanderVaart98}):
\begin{assumption} \label{ass::emp_proc_alt} \leavevmode 
\begin{itemize}
    \item[(a)] $\Vert \dot{\chi}_{\hat{P}} - \dot{\chi}_{P_{0}} \Vert_{P_0} = o_{P_0}(1)$
    \item[(b)] $\dot{\chi}_{\hat{P}}$ is contained in a fixed $P_{0}$-Donsker class with probability tending to 1,
\end{itemize}
\end{assumption}
\noindent where the Donsker class assumption \citep{vanderVaart96} restricts the complexity of the estimators. Assumption \ref{ass::emp_proc_alt}(b) can be removed by \textit{sample-splitting} \citep{Chernozhukov18}.

The way to control term $\circled{3}$ depends on the problem, but as stated earlier, we would expect it to be quadratic in $d(P_{0}, \hat{P})$ for some appropriate distance $d$. If this is indeed true, it would be sufficient for the distance to shrink like $o_{P_0}(n^{-1/4})$, which is a vast improvement on the $\sqrt{n}$-convergence that is potentially required for a na\"ive plug-in estimator. For problems in causal inference, this second-order remainder often takes the beneficial form of a product of errors (a ``cross-term'') of different components of the data-generating distribution, i.e.
\begin{equation*}
    r_{2}(P_{0}, \hat{P}) = O_{P_0}\left(\Vert \eta(P_{0})-\eta(\hat{P}) \Vert_{P_{0}}\Vert \xi(P_{0})-\xi(\hat{P}) \Vert_{P_{0}}\right),
\end{equation*}
where $\eta$ and $\xi$ are nuisance components of the data-generating distribution. In this case, one of the errors can be allowed to decay slower than $o_{P_0}(n^{-1/4})$ as long as it is compensated by the other error. This is referred to as \textit{(rate) double-robustness} \citep{Rotnitzky21}.


\section{Weak convergence and the bounded Lipschitz distance} \label{supp::weak_conv}

We collect some definitions and results about weak convergence and the bounded Lipschitz metric. For a real-valued function $f:\mathbb{R} \rightarrow \mathbb{R}$, the \textit{bounded Lipschitz norm} is defined as
\begin{equation*}
    \lVert f \lVert_{BL} = \lVert f \lVert_{\infty} + \lVert f \lVert_{L},
\end{equation*}
where
\begin{align*}
    \lVert f \lVert_{\infty} &= \sup_{x} |f(x)|,\text{ and} \\
    \lVert f \lVert_{L} &= \sup_{x \neq y} \frac{|f(x)-f(y)|}{|x-y|}.
\end{align*}
For two laws $P$ and $Q$ on $\mathbb{R}$, the \textit{bounded Lipschitz metric} is defined as
\begin{equation*}
    d_{BL}(P,Q) = \sup\left\{\left|\int f(dP-dQ)\right|:\,\lVert f \lVert_{BL} \leq 1 \right\}.
\end{equation*}
\begin{theorem}[Theorem 11.3.3, p.395 of \citet{Dudley02}]
For a sequence of laws $(P_{n})_{n}$ and $P_{0}$ defined on $\mathbb{R}$, the following are equivalent:
\begin{itemize}
    \item[(a)] $P_{n}$ converges weakly to $P_{0}$.
    \item[(b)] $d_{BL}(P_{n},P) \rightarrow 0$.
\end{itemize}
\end{theorem}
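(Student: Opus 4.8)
The plan is to prove the two implications separately, using two elementary facts: every function of finite bounded Lipschitz norm is bounded and uniformly continuous; and conversely, every bounded uniformly continuous $f:\mathbb{R}\to\mathbb{R}$ is a uniform limit of bounded Lipschitz functions --- for instance the inf-convolutions $f_\lambda(x)=\inf_y\{f(y)+\lambda|x-y|\}$ are bounded and $\lambda$-Lipschitz with $\|f_\lambda-f\|_\infty\to 0$ as $\lambda\to\infty$. Together with the portmanteau characterisation of weak convergence (convergence of $\int f\,dP_n$ for every $f\in C_b(\mathbb{R})$, equivalently for every bounded uniformly continuous $f$), these reduce the problem to controlling integrals against bounded Lipschitz test functions.

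For $\mathrm{(b)}\Rightarrow\mathrm{(a)}$ I would argue directly: if $d_{BL}(P_n,P_0)\to 0$ then for any nonzero $f$ with $\|f\|_{BL}<\infty$, homogeneity of $\|\cdot\|_{BL}$ together with the definition of $d_{BL}$ gives $|\int f\,d(P_n-P_0)|\le\|f\|_{BL}\,d_{BL}(P_n,P_0)\to 0$. Hence $\int f\,dP_n\to\int f\,dP_0$ for all bounded Lipschitz $f$, and for a general bounded uniformly continuous $f$ one approximates by $f_\lambda$ and passes to the limit in $\limsup_n|\int f\,d(P_n-P_0)|\le 2\|f-f_\lambda\|_\infty$, which yields weak convergence by portmanteau.

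The substantive direction is $\mathrm{(a)}\Rightarrow\mathrm{(b)}$, where the pointwise-in-$f$ convergence supplied by weak convergence must be made uniform over the infinite-dimensional unit ball $\mathcal{F}=\{f:\|f\|_{BL}\le 1\}$. First I would invoke tightness: a weakly convergent sequence on $\mathbb{R}$ is uniformly tight, so fixing $\varepsilon>0$ I can choose a compact interval $K$ with $P_0(K^c)<\varepsilon$ and $\sup_nP_n(K^c)<\varepsilon$. On $K$ the class $\mathcal{F}$ is uniformly bounded and equi-Lipschitz, hence relatively compact in $C(K)$ by Arzel\`a--Ascoli, so I may pick $f_1,\dots,f_N\in\mathcal{F}$ whose restrictions to $K$ form an $\varepsilon$-net. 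Given any $f\in\mathcal{F}$, choose $f_j$ with $\sup_K|f-f_j|\le\varepsilon$ and split $\int f\,d(P_n-P_0)$ into its parts over $K$ and over $K^c$: the $K$-part is within $2\varepsilon$ of $\int_K f_j\,d(P_n-P_0)$, and the $K^c$-part has absolute value at most $P_n(K^c)+P_0(K^c)<2\varepsilon$ since $|f|\le 1$. Taking the supremum over $f\in\mathcal{F}$ and then $\limsup_n$ gives $\limsup_n d_{BL}(P_n,P_0)\le 6\varepsilon+\max_{j\le N}\limsup_n|\int f_j\,d(P_n-P_0)|=6\varepsilon$, because each $f_j\in C_b(\mathbb{R})$; letting $\varepsilon\downarrow 0$ finishes the proof.

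The main obstacle is precisely this last step --- turning ``$\int f\,d(P_n-P_0)\to 0$ for each fixed $f$'' into convergence uniform over $\mathcal{F}$. Everything rests on combining uniform tightness of the measures (to discard the tails) with Arzel\`a--Ascoli compactness of $\mathcal{F}$ restricted to a compact set (to reduce the supremum over $\mathcal{F}$ to a finite maximum of genuinely pointwise statements). The one subtlety worth flagging is that a bounded Lipschitz function need not be small near $\partial K$, so the tail contribution must be bounded through the masses $P_n(K^c)+P_0(K^c)$ alone and not through any decay of $f$.
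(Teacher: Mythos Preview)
The paper does not supply its own proof of this statement: it is quoted verbatim as Theorem~11.3.3 of Dudley (2002) and used as a black box, so there is no in-paper argument to compare against. Your proposal is therefore being assessed on its own merits.

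Your argument is correct and is essentially the standard textbook proof (and, in particular, the one Dudley gives). The direction $(b)\Rightarrow(a)$ via homogeneity of $\|\cdot\|_{BL}$ and approximation of bounded uniformly continuous functions by inf-convolutions is fine. For $(a)\Rightarrow(b)$, your tightness step is valid on $\mathbb{R}$ (indeed on any Polish space): since $P_0$ is tight, a portmanteau argument with a continuous cut-off between nested compacta upgrades this to uniform tightness of $\{P_n\}$. The Arzel\`a--Ascoli reduction of the BL unit ball on a compact $K$ to a finite $\varepsilon$-net, followed by the split into $K$ and $K^c$ and the passage from $\int_K f_j\,d(P_n-P_0)$ to $\int f_j\,d(P_n-P_0)$ at the cost of an extra $2\varepsilon$, is exactly right; the constant $6\varepsilon$ is correct. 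The only point worth making explicit is that the $\varepsilon$-net elements $f_1,\dots,f_N$ can indeed be chosen from $\mathcal{F}$ itself (not merely from its $C(K)$-closure), since a totally bounded set admits a finite net drawn from its own points.
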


The following lemma adapts Slutsky's lemma to the setting with random sequences of probability measures. This will be used extensively in the proofs of our main results.
\begin{lemma} \label{lem::slutsky}
Let $Z^{(n)} = (Z_{1},\ldots,Z_{n})$ be i.i.d. variables from a distribution $P_{0}$ on a Polish sample space $(\mathcal{Z},\mathcal{A})$. Suppose that $(P_{n})_{n}$ is a sequence of random probability measures on $(\mathbb{R}^{2}, \mathcal{B}(\mathbb{R}^{2}))$ such that $P_{n}$ is $\sigma(Z^{(n)})$-measurable for each $n$. Let $(X_{n},Y_{n})$ be variables each taking values in $\mathbb{R}$ with $(X_{n},Y_{n})\mid P_{n} \sim P_{n}$ and denote the marginals by $P^{X}_{n}$ and $P^{Y}_{n}$ for $X_{n}$ and $Y_{n}$ respectively. Suppose that
\begin{align*}
    d_{BL}(P^{X}_{n},P^{X}) \xrightarrow[]{P_{0}} 0\\
    d_{BL}(P^{Y}_{n},\delta_{\{c\}}) \xrightarrow[]{P_{0}} 0,
\end{align*}
where $P^{X}$ is a fixed probability measure on $(\mathbb{R},\mathcal{B}(\mathbb{R}))$, and $c$ is a fixed constant in $\mathbb{R}$. Then, 
\begin{align*}
    d_{BL}(\mathcal{L}(X_{n}+Y_{n} \mid P_{n}),\mathcal{L}(X_{n}+c \mid P_{n})) \xrightarrow[]{P_{0}} 0 \\
    d_{BL}(\mathcal{L}(X_{n}Y_{n} \mid P_{n}),\mathcal{L}(cX_{n} \mid P_{n})) \xrightarrow[]{P_{0}} 0.
\end{align*}
\end{lemma}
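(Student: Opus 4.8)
The plan is to prove both statements by the same mechanism: test the target distributions against an arbitrary $1$-bounded-Lipschitz function $f$, use that such an $f$ is simultaneously bounded by $1$ and $1$-Lipschitz to bound the resulting difference of conditional expectations by a simple nonnegative integrand, and then read off the convergence from the two hypotheses via the duality inequality $\left|\int g\,(dP-dQ)\right| \le \|g\|_{BL}\, d_{BL}(P,Q)$. Throughout, note that every conditional expectation appearing below is literally an integral of a bounded measurable function against $P_n$ (or its $X$-marginal $P_n^X$), hence a $\sigma(Z^{(n)})$-measurable function of the data, so the ``$\xrightarrow[]{P_0}$'' statements make sense.

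For the additive claim, fix $f$ with $\|f\|_{BL}\le 1$. Since $\mathcal L(X_n+c\mid P_n)$ is the push-forward of $P_n^X$ under $x\mapsto x+c$, the integrated difference of $f$ equals $\big|E[f(X_n+Y_n)-f(X_n+c)\mid P_n]\big|$, and using $|f(X_n+Y_n)-f(X_n+c)|\le |Y_n-c|$ together with $\le 2\|f\|_\infty\le 2$ gives the bound $E[(|Y_n-c|\wedge 2)\mid P_n]$. Taking the supremum over $f$,
\[ d_{BL}\big(\mathcal L(X_n+Y_n\mid P_n),\,\mathcal L(X_n+c\mid P_n)\big)\ \le\ \int (|y-c|\wedge 2)\,dP_n^Y(y). \]
The map $g(y)=|y-c|\wedge 2$ has $\|g\|_{BL}\le 3$ and $g(c)=0$, so the right-hand side equals $\int g\,(dP_n^Y-d\delta_{\{c\}})\le 3\,d_{BL}(P_n^Y,\delta_{\{c\}})\xrightarrow[]{P_0}0$, which settles the first statement.

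For the multiplicative claim the same reduction gives
\[ d_{BL}\big(\mathcal L(X_nY_n\mid P_n),\,\mathcal L(cX_n\mid P_n)\big)\ \le\ E\big[(|X_n|\,|Y_n-c|)\wedge 2\mid P_n\big], \]
and this is where the work lies: the integrand is no longer controlled by $|Y_n-c|$ alone because $X_n$ may be large, so a truncation is needed. For $M\ge 1$ I would use $(|X_n|\,|Y_n-c|)\wedge 2\ \le\ (M|Y_n-c|)\wedge 2\ +\ 2\,\mathbf{1}\{|X_n|>M\}$, so the bound splits as $\int (M|y-c|\wedge 2)\,dP_n^Y(y)+2\,P_n^X(|X_n|>M)$. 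The first term is handled exactly as before ($y\mapsto M|y-c|\wedge 2$ vanishes at $c$ and has $BL$-norm $\le M+2$), so for fixed $M$ it is $\le (M+2)\,d_{BL}(P_n^Y,\delta_{\{c\}})\to 0$ in $P_0$-probability. For the second term I would dominate $\mathbf{1}\{|x|>M\}$ from above by a bounded Lipschitz ``trapezoid'' $h_M$ equal to $1$ on $\{|x|\ge M\}$ and vanishing on $\{|x|\le M/2\}$, with $\|h_M\|_{BL}$ bounded uniformly over $M\ge 1$; then $P_n^X(|X_n|>M)\le\int h_M\,dP^X+\|h_M\|_{BL}\,d_{BL}(P_n^X,P^X)\le P^X(|X|>M/2)+O\!\big(d_{BL}(P_n^X,P^X)\big)$. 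Combining,
\[ d_{BL}\big(\mathcal L(X_nY_n\mid P_n),\,\mathcal L(cX_n\mid P_n)\big)\ \lesssim\ (M+2)\,d_{BL}(P_n^Y,\delta_{\{c\}})+d_{BL}(P_n^X,P^X)+P^X(|X|>M/2). \]
Given $\varepsilon>0$, since $P^X$ is a fixed probability measure (hence tight) I can pick $M$ with $P^X(|X|>M/2)$ as small as desired; with $M$ then frozen, the first two terms tend to $0$ in $P_0$-probability, so $\limsup_n P_0(\text{l.h.s.}>\varepsilon)=0$, giving the second statement.

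The only genuinely nonroutine step is the product case: one cannot pass from ``$Y_n\approx c$'' to ``$X_nY_n\approx cX_n$'' without controlling the mass $X_n$ puts far from the origin, which is exactly what tightness of the limiting marginal $P^X$ supplies, via the trapezoidal bounded Lipschitz approximation of indicator functions. Everything else is a mechanical application of the $BL$-duality inequality to explicitly chosen test functions.
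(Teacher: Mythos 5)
Your proof is correct and uses essentially the same ingredients as the paper's: the pointwise bound $|f(u+v)-f(u+w)|\le |v-w|\wedge 2\|f\|_\infty$, a truncation of $X_n$ for the product case, and bounded-Lipschitz majorants of indicator functions combined with the tightness of the fixed limit $P^X$. The only difference is organizational — you package everything as a direct quantitative bound on the target $d_{BL}$ distance in terms of $d_{BL}(P^Y_n,\delta_{\{c\}})$ and $d_{BL}(P^X_n,P^X)$ via the duality inequality, whereas the paper first proves that the conditional probabilities $P_n(|Y_n-c|>\varepsilon/2)$ and $P_n(|X_n(Y_n-c)|>\varepsilon)$ vanish in probability and then feeds these into an $\varepsilon$-split; your version is slightly more streamlined but not a genuinely different route.
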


\begin{proof}
We denote $\mathbb E_{P_n}[\cdot ]$ for the expectation under the conditional distribution of $(X_n, Y_n)$ given $P_n$.
All equalities and inequalities between random quantities are to be understood in the almost sure sense. For any $\varepsilon > 0$, and any $h$ such that $\lVert h \lVert_{BL} \leq 1$, 
\begin{align*}
   & \left|\mathbb{E}_{P_n}\left\{h(X_{n}+Y_{n})-h(X_{n}+c)\right\}\right| \leq  \mathbb{E}_{P_n}\left|h(X_{n}+Y_{n})-h(X_{n}+c)\right|\\
    &\leq  \mathbb{E}_{P_n}\left\{\left|h(X_{n}+Y_{n})-h(X_{n}+c)\right|1\left(|Y_{n}-c| \leq \frac{\varepsilon}{2}\right)\right\} +  
    \mathbb{E}_{P_n}\left\{\left|h(X_{n}+Y_{n})-h(X_{n}+c)\right|1\left(|Y_{n}-c| > \frac{\varepsilon}{2}\right)\right\} \\
    &\leq  \,\, \mathbb{E}_{P_n}\left\{|Y_{n}-c|1\left(|Y_{n}-c| \leq \frac{\varepsilon}{2}\right)\right\} + 2 P_n\left(|Y_{n}-c| > \frac{\varepsilon}{2}\right) \\
    &\leq  \,\, \frac{\varepsilon}{2} +  2 P_n\left(|Y_{n}-c| > \frac{\varepsilon}{2}\right).
\end{align*}
The third inequality follows from applying $\lVert h \lVert_{L} \leq 1$ and $\lVert h \lVert_{\infty} \leq 1$ to the first and second terms respectively.

We now show that 
\begin{equation*}
    P_n\left(|Y_{n}-c| > \frac{\varepsilon}{2}\right) \xrightarrow[]{P_{0}} 0,
\end{equation*}
which would establish the first claim of Lemma \ref{lem::slutsky}, since 
\begin{equation*}
    \mathbb{P}\left(\sup_{\lVert h \lVert_{BL} \leq 1}\left|\mathbb{E}_{P_n}\left\{h(X_{n}+Y_{n})-h(X_{n}+c)\right\}\right| > \varepsilon \right) \leq \mathbb{P}\left(2P_n\left(|Y_{n}-c| > \frac{\varepsilon}{2}\right) > \frac{\varepsilon}{2}\right) \rightarrow 0.
\end{equation*}
Consider the function
\begin{equation*}
    g_{\gamma}(y) = \left[1- \gamma\,\min_{|y^{\prime}-c| \geq \frac{\varepsilon}{2}}|y-y^{\prime}|\right]\vee 0,
\end{equation*}
where $c > 0$. This function dominates $1(|y-c| > \varepsilon/2)$ and satisfies $\lVert g_{\gamma} \lVert_{L} = \gamma$. If we set $\gamma > 2/\varepsilon$, then $g_{\gamma}(c) = 0$. For sufficiently large $M(\gamma) > 0$, we have $\lVert g_{\gamma}/M(\gamma) \lVert_{BL} \leq 1$, so 
\begin{equation*}
    P_n\left(|Y_{n}-c| > \frac{\varepsilon}{2}\right) \leq \mathbb{E}_{P_n}[g_{\gamma}(Y_{n})]\xrightarrow[]{P_{0}} g_{\gamma}(c) = 0.
\end{equation*}

For the second claim of Lemma \ref{lem::slutsky}, let $\varepsilon,R > 0$ . Consider
\begin{align*}
    P_n[|X_{n}(Y_{n}-c)| > \varepsilon] =&\,\, P_n( \{|X_{n}(Y_{n}-c)| > \varepsilon\} \cap \{|X_{n}| > R\}) +  P_n( \{|X_{n}(Y_{n}-c)| > \varepsilon\} \cap \{|X_{n}| \leq R\}) \\
    \leq &\,\, P_n[|X_{n}| > R] + P_n[R|Y_{n}-c| > \varepsilon].
\end{align*}
We already showed earlier that the second term on the last line tends to zero in probability for fixed $R$. Consider the function
\begin{align*}
    g_{\gamma}(x) &= \left[1- \gamma\,\min_{|x^{\prime}| \geq R}|x-x^{\prime}|\right]\vee 0
\end{align*}
for $\gamma > 0$. For any fixed $\gamma$, we can find $M(\gamma) > 0$ such that $\lVert g_{\gamma}/M(\gamma) \lVert_{BL} \leq 1$. Thus, we have that $\mathbb{E}_{P_n}[g_{\gamma}(X_{n})]$ converges in probability to $\mathbb{E}_{P^X}[g_{\gamma}(X)]$. Also, $g_{\gamma}$ converges pointwise to $1(|x| \geq R)$ as $\gamma \rightarrow \infty$ and is dominated by 1, so $\mathbb{E}_{P^X}[g_{\gamma}(X)]$ converges from above to $P^{X}(|X| \geq R)$ by the dominated convergence theorem. We have the upper bound:
\begin{equation*}
   P_n(|X_{n}| > R) \leq \mathbb{E}_{P_n}[g_{\gamma}(X_{n})].
\end{equation*}
Let $\delta > 0$ and fix $R,\gamma$ large enough such that
\begin{align*}
    P^{X}(|X| \geq R) &< \frac{\delta}{8} , \quad 
    \mathbb{E}_{P^{X}}[g_{\gamma}(X)] - P^{X}(|X| \geq R) < \frac{\delta}{8}
\end{align*}
so that $\mathbb{E}_{P^{X}}[g_{\gamma}(X)] < \delta/4$. Then
\begin{align*}
    \mathbb{P}\left(\mathbb{E}_{P_n}[g_{\gamma}(X_{n})] > \frac{\delta}{2}\right) &\leq \mathbb{P}\left(\left\{\left|\mathbb{E}_{P_n}[g_{\gamma}(X_{n})]-\mathbb{E}_{P^X}[g_{\gamma}(X)]\right| > \frac{\delta}{4}\right\}\right)\rightarrow 0.
\end{align*}
Therefore,
\begin{align*}
    \mathbb{P}[P_n(|X_{n}(Y_{n}-c)| > \varepsilon)> \delta] &\leq \mathbb{P}\left(\left\{\mathbb{E}_{P_n}[g_{\gamma}(X_{n})] > \frac{\delta}{2}\right\}\cup \left\{P_n(R|Y_{n}-c| > \varepsilon)] > \frac{\delta}{2}\right\} \right) \\
    & \leq \mathbb{P}\left(\mathbb{E}_{P_n}[g_{\gamma}(X_{n})] > \frac{\delta}{2}\right) + \mathbb{P}\left( P_n(R|Y_{n}-c| > \varepsilon) > \frac{\delta}{2} \right)  \rightarrow 0.
\end{align*}
Thus, $P_n[|X_{n}(Y_{n}-c)| > \varepsilon]$ converges in probability to zero.

Finally, we have, using the same arguments as for $X_n+Y_n$; for all $h$ such that $\lVert h \lVert_{BL} \leq 1$
\begin{align*}
    \left|\mathbb{E}_{P_n}\left\{h(X_{n}Y_{n})-h(cX_{n})\right\}\right| \leq & E_{P_n}\left|h(X_{n}Y_{n})-h(cX_{n})\right|\\
    \leq & \,\, \mathbb{E}_{P_n}\left\{|X_{n}(Y_{n}-c)|1\left(|X_{n}(Y_{n}-c)| \leq \varepsilon\right)\right\} + 2 P_n\left(|X_{n}(Y_{n}-c)| > \varepsilon\right) \\
    \leq & \,\, \varepsilon + 2 P_n\left(|X_{n}(Y_{n}-c)| > \varepsilon\right),
\end{align*}
which shows that the left-hand side converges to zero in probability since $\varepsilon$ was arbitrary.
\end{proof}

\section{The Bernstein-von Mises theorem for the Bayesian bootstrap} \label{supp::bvm_bb}

Under Assumption \ref{ass::1step_bvm}(c$^{*}$), we can prove the semiparametric Bernstein-von Mises theorem for the one-step corrected posterior without assuming a Donsker-type class condition. In this section, we adapt Lemma 2 from \citet{Ray21} to provide the crucial ingredient: a nonparametric Bernstein-von Mises theorem for the Bayesian bootstrap posterior with only a Glivenko-Cantelli-type condition. According to personal communications with Kolyan Ray, the conditions stated in Remark 2 of \citet{Ray21} are in fact insufficient to establish their Lemmas 2 and 5 when we consider a sequence of classes rather than a single fixed class. We rectify this in the case of proving convergence in probability by adding conditions on the moments of the envelope functions.

\begin{lemma} \label{lem::bvm_bb}
We work in the same set-up as Section \ref{sec::setup} in the main paper. Suppose $(\mathcal{G}_{n})_{n}$ is a sequence of separable classes of measurable functions $g:\mathcal{Z}\rightarrow \mathbb{R}$ with envelope functions $(G_{n})_{n}$ satisfying $\lim_{C \rightarrow \infty} \limsup_{n \rightarrow \infty}P_{0}G_{n}^{2}1_{G_{n}^{2} > C} = 0$, $P_{0}G_{n}^{4} = o(n)$, and 
\begin{equation*}
    \sup_{g \in \mathcal{G}_{n}} |(\mathbb{P}_{n}-P_{0})[g]| \xrightarrow[]{P_{0}} 0.
\end{equation*}
Then
\begin{equation*}
    \sup_{g \in \mathcal{G}_{n}}d_{BL}\left(\mathcal{L}(\sqrt{n}(\Tilde{P}-\mathbb{P}_{n})[g] \mid Z^{(n)}),\mathcal{N}(0,P_{0}[(g-P_{0}[g])^{2}])\right)\xrightarrow[]{P_{0}} 0.
\end{equation*}
\end{lemma}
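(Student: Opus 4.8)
The plan is to condition on the data $Z^{(n)}$ throughout and exploit the exponential representation of the Dirichlet weights: realise $\Tilde{P} = \sum_{i=1}^{n} W_i \delta_{Z_i}$ with $W_i = E_i / \sum_{j} E_j$ for i.i.d.\ $E_i \sim \mathrm{Exp}(1)$ independent of $Z^{(n)}$, which leaves the conditional law of $\sqrt{n}(\Tilde{P}-\mathbb{P}_{n})[g]$ unchanged. A short computation then gives $\sqrt{n}(\Tilde{P}-\mathbb{P}_{n})[g] = \bar{E}_{n}^{-1} T_{n}(g)$, where $\bar{E}_{n} = n^{-1}\sum_{i} E_i$ and $T_{n}(g) = n^{-1/2}\sum_{i=1}^{n}(E_i - 1)\{g(Z_i) - \mathbb{P}_{n}[g]\}$. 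Conditionally on $Z^{(n)}$, $T_{n}(g)$ is a sum of independent mean-zero terms with variance $v_{n}(g) := \mathbb{P}_{n}[(g-\mathbb{P}_{n}[g])^{2}]$; write $\sigma_{0}^{2}(g) := P_{0}[(g-P_{0}[g])^{2}]$ for the target variance.

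First I would strip off the $\bar{E}_{n}$ factor. Since $\bar{E}_{n} \to 1$ almost surely and does not depend on $Z^{(n)}$, on an event of probability tending to one $|\bar{E}_{n}^{-1}-1|$ is as small as we like; combined with the uniform bound $\sup_{g \in \mathcal{G}_{n}} v_{n}(g) \le \mathbb{P}_{n}[G_{n}^{2}] = O_{P_{0}}(1)$ (using $\mathbb{E}[\mathbb{P}_{n}[G_{n}^{2}]] = P_{0}G_{n}^{2} = O(1)$, which follows from the uniform-integrability hypothesis), a direct estimate of $|\mathbb{E}[h(\bar{E}_{n}^{-1}T_{n}(g)) - h(T_{n}(g)) \mid Z^{(n)}]|$ over $\|h\|_{BL}\le 1$ yields $\sup_{g \in \mathcal{G}_{n}} d_{BL}(\mathcal{L}(\sqrt{n}(\Tilde{P}-\mathbb{P}_{n})[g] \mid Z^{(n)}), \mathcal{L}(T_{n}(g) \mid Z^{(n)})) \xrightarrow[]{P_{0}} 0$; this is the uniform analogue of the product claim in Lemma \ref{lem::slutsky}. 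It then suffices to show $\sup_{g \in \mathcal{G}_{n}} d_{BL}(\mathcal{L}(T_{n}(g) \mid Z^{(n)}), \mathcal{N}(0,\sigma_{0}^{2}(g))) \xrightarrow[]{P_{0}} 0$, which I would split by the triangle inequality into a \emph{conditional CLT} term $d_{BL}(\mathcal{L}(T_{n}(g) \mid Z^{(n)}), \mathcal{N}(0, v_{n}(g)))$ and a \emph{variance} term, the latter being $\lesssim |v_{n}(g) - \sigma_{0}^{2}(g)|^{1/2}$.

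For the conditional CLT term I would apply a quantitative Lindeberg-type central limit theorem to the triangular array $(E_i-1)\{g(Z_i)-\mathbb{P}_{n}[g]\}/\sqrt{n}$, $i \le n$, conditionally on $Z^{(n)}$. Its Lindeberg quantity is bounded by $\sup_{g \in \mathcal{G}_{n}} v_{n}(g) \cdot \sup_{j \le n} \mathbb{E}[(E_1-1)^{2}\Ind\{|E_1-1| > \varepsilon\sqrt{n}/(2\max_{k\le n}G_{n}(Z_k))\}]$; here $\sup_{g} v_{n}(g) = O_{P_{0}}(1)$ and, crucially, $\max_{k\le n} G_{n}(Z_k) = o_{P_{0}}(\sqrt{n})$ by a union bound together with $P_{0}G_{n}^{4} = o(n)$, so the indicator expectation vanishes in $P_{0}$-probability uniformly in $j$ and $g$, and the conditional CLT term tends to zero uniformly over $\mathcal{G}_{n}$. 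The degenerate case $v_{n}(g)$ near zero is handled separately, since then both $\mathcal{L}(T_{n}(g) \mid Z^{(n)})$ and $\mathcal{N}(0,v_{n}(g))$ are $d_{BL}$-close to $\delta_{0}$ by Chebyshev; separability of $\mathcal{G}_{n}$ ensures the suprema above are measurable.

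The hard part will be the variance term, i.e.\ showing $\sup_{g \in \mathcal{G}_{n}}|v_{n}(g) - \sigma_{0}^{2}(g)| \xrightarrow[]{P_{0}} 0$; this is exactly where the envelope moment hypotheses are needed, and where the conditions of Remark 2 of \citet{Ray21} are insufficient for a sequence of classes rather than a single fixed class. Writing $v_{n}(g) - \sigma_{0}^{2}(g) = (\mathbb{P}_{n}-P_{0})[g^{2}] - \{(\mathbb{P}_{n}[g])^{2} - (P_{0}[g])^{2}\}$, the second bracket is $o_{P_{0}}(1)$ uniformly because it is bounded by $\sup_{g}|(\mathbb{P}_{n}-P_{0})[g]| \cdot (\mathbb{P}_{n}+P_{0})[G_{n}] = o_{P_{0}}(1)\cdot O_{P_{0}}(1)$, using the Glivenko--Cantelli hypothesis and $P_{0}G_{n} = O(1)$. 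For $\sup_{g}|(\mathbb{P}_{n}-P_{0})[g^{2}]|$ I would truncate the envelope at a level $C$: the tail contribution is dominated by $(\mathbb{P}_{n}+P_{0})[G_{n}^{2}\Ind\{G_{n}^{2} > C\}]$, which is made uniformly small (in $n$, for $C$ large) by the uniform-integrability hypothesis plus a Markov bound for the empirical part, whereas the bounded truncated square must be controlled through the Glivenko--Cantelli hypothesis on $\mathcal{G}_{n}$ itself. This last point is the delicate step, because the Glivenko--Cantelli property is not automatically inherited by the class of truncated squares when $\mathcal{G}_{n}$ changes with $n$, and it is precisely here that the interplay between $P_{0}G_{n}^{4} = o(n)$ and the uniform integrability of $G_{n}^{2}$ does the work. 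Assembling the three pieces gives the lemma.
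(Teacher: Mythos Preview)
Your proposal follows the same overall strategy as the paper's proof: represent the Dirichlet weights via i.i.d.\ exponentials, factorise $\sqrt{n}(\Tilde P-\mathbb P_n)[g]$ as $\bar E_n^{-1}T_n(g)$, prove a uniform conditional Lindeberg CLT for $T_n(g)$ with variance $v_n(g)$, compare $v_n(g)$ with $\sigma_0^2(g)$ via $d_{BL}(\mathcal N(0,\sigma^2),\mathcal N(0,\tau^2))\le|\sigma^2-\tau^2|^{1/2}$, and remove the $\bar E_n^{-1}$ factor by a Slutsky argument. The order differs (you strip $\bar E_n$ first, the paper last), and your union-bound/Markov argument for $\max_{k\le n}G_n(Z_k)=o_{P_0}(\sqrt n)$ from $P_0G_n^4=o(n)$ is actually a bit cleaner than the paper's truncation-plus-Chebyshev route. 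For the Lindeberg step the paper cites an explicit quantitative bound (Lemma~1 of \citet{Ray21}) where you simply invoke ``a quantitative Lindeberg-type CLT''; these are the same in substance.

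The one genuine gap is the variance term $\sup_{g\in\mathcal G_n}|(\mathbb P_n-P_0)[g^2]|\to 0$. You reduce correctly to the bounded truncated squares $\{g^2\Ind_{G_n^2\le C}:g\in\mathcal G_n\}$ but then only assert that ``the interplay between $P_0G_n^4=o(n)$ and the uniform integrability of $G_n^2$ does the work'' without giving an argument. In the paper this step is handled in one line by invoking Lemma~11 of \citet{Ray20}, which says that the Glivenko--Cantelli property passes to a continuous transformation of a \emph{sequence} of separable classes provided the transformed envelopes are asymptotically uniformly integrable; here $\phi(h)=h^2$ and the envelope $G_n^2$ is asymptotically uniformly integrable by hypothesis, so the conclusion follows immediately. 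Note that the fourth-moment condition $P_0G_n^4=o(n)$ plays \emph{no role} in this variance step in the paper's argument---it is used only to control $\mathbb P_nG_n^2$ via Chebyshev and $\max_iG_n(Z_i)$---so your diagnosis of which hypothesis resolves the delicate point is slightly off. The fix is simply to cite that permanence lemma rather than attempt to re-derive it by hand.
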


\begin{proof}
Let $W_{1},W_{2},\ldots$ be i.i.d. exponential variables with mean 1, and let $\bar{W}_{n} = n^{-1}\sum_{i=1}^{n}W_{i}$, which converges almost surely to 1 by the strong law of large numbers. We can write
\begin{equation*}
    \sqrt{n}(\Tilde{P}-\mathbb{P}_{n})[g] = \frac{1}{\bar{W}_{n}}\frac{1}{\sqrt{n}}\sum_{i=1}^{n}(W_{i}-1)(g(Z_{i})-\mathbb{P}_{n}[g]).
\end{equation*}
Let $\sigma_{n,g}^{2} = \mathbb{P}_{n}[(g-\mathbb{P}_{n}[g])^{2}]$ and $\sigma_{g}^{2} = P_{0}[(g-P_{0}[g])^{2}]$. We apply Lemma 11 of \citet{Ray20} with the classes of functions $\mathcal{H}_{n,1} = \mathcal{G}_{n}$ and continuous map $\phi(h_{1}) = h_{1}^{2}$. By assumption, $\mathcal{G}_{n}$ is separable and the envelope functions $(G_{n}^{2})_{n}$ of $\phi(\mathcal{H}_{n,1})$ are asymptotically uniformly integrable. Thus,
\begin{equation*}
    \sup_{g \in \mathcal{G}_{n}} |(\mathbb{P}_{n}-P_{0})[g^{2}]| \xrightarrow[]{P_{0}} 0,
\end{equation*}
which further implies that
\begin{equation*}
    \sup_{g \in \mathcal{G}_{n}} |\sigma_{n,g}^{2}-\sigma_{g}^{2}| \xrightarrow[]{P_{0}} 0.
\end{equation*}
Let $H(u) = u + u^{1/3}$ and $H_{0}(u) = u^{1/4}(1+|\log u |^{1/2})$. By Lemma 1 of \citet{Ray21}, we have
\begin{align} \label{eqn::dbl_bvm1}
    &d_{BL}\left(\mathcal{L}\left(\frac{1}{\sqrt{n}}\sum_{i=1}^{n}(W_{i}-1)(g(Z_{i})-\mathbb{P}_{n}[g]) \mid Z^{(n)}\right), \mathcal{N}(0,\sigma_{n,g}^{2})\right)\\ \label{eqn::dbl_bvm2}
    &\lesssim \varepsilon + \frac{1}{\varepsilon^{2}}H\left(\sup_{g \in \mathcal{G}_{n}}\sigma_{n,g}^{2}E(W_{1}-1)^{2}1_{|W_{1}-1|2 \max_{1 \leq i \leq n}G_{n}(Z_{i}) > \varepsilon \sqrt{n}}\right)+H_{0}\left(\varepsilon \sup_{g \in \mathcal{G}_{n}}\sigma_{n,g}^{2}\right)
\end{align}
for all $0 < \varepsilon < 1$. Since $P_{0}G_{n}^{4} = o(n)$ by assumption, Chebyshev's inequality implies that 
\begin{equation*}
    |\mathbb{P}_{n}G_{n}^{2} - P_{0}G_{n}^{2}| \xrightarrow[]{P_{0}} 0.
\end{equation*}
With $\sup_{g \in \mathcal{G}_{n}}\sigma_{n,g}^{2} \leq \mathbb{P}_{n}G_{n}^{2}$ and $P_{0}G_{n}^{2} = O(1)$ by assumption, we have $\sup_{g \in \mathcal{G}_{n}}\sigma_{n,g}^{2} = O_{P_0}(1)$. For any $C > 0$,
\begin{align*}
    \frac{\left[\max_{1 \leq i \leq n} G_{n}(Z_{i})\right]^{2}}{n} &= \max_{1 \leq i \leq n}\frac{G_{n}(Z_{i})^{2}}{n}\\
    &\leq \frac{C^{2}}{n} + \frac{1}{n}\sum_{i=1}^{n}G_{n}(Z_{i})^{2}1_{\{G_{n}(Z_{i})>C\}} \\
    &\leq \frac{C^{2}}{n} + P_{0}G_{n}^{2}1_{\{G_{n}>C\}} + |\mathbb{P}_{n}G_{n}^{2}1_{\{G_{n}>C\}} - P_{0}G_{n}^{2}1_{\{G_{n}>C\}}|.
\end{align*}

Let $\eta > 0$. By the assumption of asymptotic uniform integrability of $(G_{n}^{2})_{n}$, we can find some large enough $C > 0$ such that 
\begin{equation*}
    \frac{C^{2}}{n} + P_{0}G_{n}^{2}1_{\{G_{n}>C\}} < \eta /2
\end{equation*}
for all sufficiently large $n$. Using Chebyshev's inequality again, we also have
\begin{equation*}
    |\mathbb{P}_{n}G_{n}^{2}1_{\{G_{n}>C\}} - P_{0}G_{n}^{2}1_{\{G_{n}>C\}}| \xrightarrow[]{P_{0}} 0,
\end{equation*}
so for all sufficiently large $n$,
\begin{equation*}
    P_{0}\left(\max_{1 \leq i \leq n}\frac{G_{n}(Z_{i})^{2}}{n} > \eta \right) \leq P_{0} \left(|\mathbb{P}_{n}G_{n}^{2}1_{\{G_{n}>C\}} - P_{0}G_{n}^{2}1_{\{G_{n}>C\}}| > \eta/2\right) \rightarrow 0.
\end{equation*}
We deduce that $\max_{1 \leq i \leq n} G_{n}(Z_{i}) = o_{P_{0}}(\sqrt{n})$ and (\ref{eqn::dbl_bvm2}) converges to zero in probability. The identity $d_{BL}(\mathcal{N}(0, \sigma^{2}), \mathcal{N}(0, \tau^{2})) \leq \sqrt{|\sigma^{2}-\tau^{2}|}$ implies that we can replace $\sigma_{n,g}^{2}$ with $\sigma_{g}^{2}$ in (\ref{eqn::dbl_bvm1}). Finally, we apply the multiplicative part of Lemma \ref{lem::slutsky} with $X_{n} = n^{-1/2} \sum_{i=1}^{n}(W_{i}-1)(g(Z_{i}) - \mathbb{P}_{n}[g])$ and $Y_{n} = 1/\bar{W}_{n}$ and $c=1$ to complete the proof.
\end{proof}

\section{Conditional treatment effects} \label{sec::samp_treat}

In Section \ref{sec::att}, we studied the average treatment effect on the treated (ATT), which is identified by the formula
\begin{equation*} 
    \chi(P) = E_{P}[Y-E_{P}[Y \mid A=0, X] \mid A=1] = \frac{Q[\pi(X)\{\mu^{(1)}(X)-\mu^{(0)}(X)\}]}{Q[\pi(X)]}
\end{equation*}
under the unconfoundedness for controls and weak overlap assumptions. For the purpose of inferring the sample average treatment effect on the treated (SATT) in Section \ref{sec::acic}, we remarked that we could use the ATT as a proxy for the SATT; that is, we use the point and interval estimates from the one-step ATT posterior to infer the SATT. However, this would likely give us substantially conservative interval estimation. As a partial remedy, we will also consider estimates from the one-step posterior for the \textit{average covariate-conditional treatment effect on the treated (ACTT)}
\begin{equation*}
    \theta(P) = \frac{\mathbb{Q}_{n}[\pi(X)\{\mu^{(1)}(X)-\mu^{(0)}(X)\}]}{\mathbb{Q}_{n}[\pi(X)]},
\end{equation*}
which replaces the unknown population covariate distribution $Q$ with the empirical covariate distribution
\begin{equation*}
    \mathbb{Q}_{n} = \frac{1}{n}\sum_{i=1}^{n}\delta_{X_{i}}.
\end{equation*}

Following Chapter 8 of \citet{vanderLaan11}, the efficient influence function for the ATT can be decomposed into
\begin{equation*}
    \dot{\chi}_{P} = \dot{\chi}^{Y}_{P} + \dot{\chi}^{A}_{P} + \dot{\chi}^{X}_{P},
\end{equation*}
where 
\begin{align*}
    \begin{split}
        \dot{\chi}^{Y}_{P} &= \left(\frac{A}{P[A]} - \frac{(1-A)\pi(X)}{P[A]\{1-\pi(X)\}}\right)[Y-\mu^{(A)}(X)] \\
        \dot{\chi}^{A}_{P} &= \frac{A- \pi(X)}{P[A]}[\mu^{(1)}(X) - \mu^{(0)}(X) - \chi(P)] \\
        \dot{\chi}^{X}_{P} &= \frac{ \pi(X)}{P[A]}[\mu^{(1)}(X) - \mu^{(0)}(X) - \chi(P)].
    \end{split}
\end{align*}
Roughly speaking, each component in the decomposition governs a different source of uncertainty in $P$: $\dot{\chi}^{Y}_{P}$---the conditional distribution of $Y$ given $(A,X)$; $\dot{\chi}^{A}_{P}$---the conditional distribution of $A$ given $X$; and $\dot{\chi}^{X}_{P}$---the marginal distribution of $X$.

For the ACTT, the marginal covariate distribution is fixed at $\mathbb{Q}_{n}$, so the idea is to ``switch off'' the Bayesian bootstrap fluctuations in $\dot{\chi}^{X}_{P}$, which will give us a less variable (i.e. tighter) posterior. The one-step corrected parameter for the ACTT is defined by
\begin{equation*}
    \Tilde{\theta} = \theta(P) + \Tilde{P}\left[\left(\frac{A}{\Tilde{P}[A]} - \frac{(1-A)\pi(X)}{\Tilde{P}[A]\{1-\pi(X)\}}\right)\{Y-\mu^{(A)}(X)\}\right]
    + \Tilde{P}\left[\left(\frac{A-\pi(X)}{\Tilde{P}[A]}\right)\{\mu^{(1)}(X) - \mu^{(0)}(X) - \theta(P)\}\right].
\end{equation*}
This is almost the same as $\theta(P) + \Tilde{P}[\dot{\chi}^{Y}_{P} +\dot{\chi}^{A}_{P}]$, except that we have replaced all instances of $P[A]$ with $\Tilde{P}[A]$, just like we did for our ATT methodology, and $\chi(P)$ has been replaced by $\theta(P)$.

For the semiparametric Bernstein-von Mises theorem, we will use the same centring estimator as the ATT:
\begin{equation*}
    \hat{\chi}_{n} = \chi(P_{0}) + \mathbb{P}_{n}[\dot{\chi}_{P_{0}}].
\end{equation*}
We need to first show that this is a valid estimator for the ACTT with the desired asymptotic variance.

\begin{proposition} \label{prop::ACTTlim}
As $n \rightarrow \infty$,
\begin{equation*}
    \sqrt{n}(\hat{\chi}_{n} - \theta(P_{0})) \rightsquigarrow \mathcal{N}(0, \Vert \dot{\chi}^{Y}_{P_0} +\dot{\chi}^{A}_{P_0}\Vert_{P_0}^{2}).
\end{equation*}
\end{proposition}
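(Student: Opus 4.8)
The plan is to write $\sqrt{n}(\hat{\chi}_{n} - \theta(P_{0}))$ as a single empirical process term plus a $o_{P_{0}}(1)$ remainder and then invoke the central limit theorem. Decompose
\[
\sqrt{n}(\hat{\chi}_{n} - \theta(P_{0})) = \sqrt{n}(\hat{\chi}_{n} - \chi(P_{0})) - \sqrt{n}(\theta(P_{0}) - \chi(P_{0})).
\]
Since $\hat{\chi}_{n} = \chi(P_{0}) + \mathbb{P}_{n}[\dot{\chi}_{P_{0}}]$ and $P_{0}[\dot{\chi}_{P_{0}}] = 0$, the first term equals $\mathbb{G}_{n}[\dot{\chi}_{P_{0}}]$ exactly.

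For the second term I would linearize the data-dependent estimand. Writing $\tau_{0}(x) = \mu_{0}^{(1)}(x) - \mu_{0}^{(0)}(x)$ and $p_{0} = P_{0}[A] = Q_{0}[\pi_{0}] > 0$, we have $\theta(P_{0}) = \mathbb{Q}_{n}[\pi_{0}\tau_{0}]/\mathbb{Q}_{n}[\pi_{0}]$ and $\chi(P_{0}) = Q_{0}[\pi_{0}\tau_{0}]/p_{0}$. Because $\mathbb{Q}_{n}[\pi_{0}] \to p_{0}$ almost surely and $\pi_{0}\tau_{0}, \pi_{0} \in L_{2}(P_{0})$, a first-order expansion of the ratio gives
\[
\sqrt{n}(\theta(P_{0}) - \chi(P_{0})) = \frac{1}{p_{0}}\mathbb{G}_{n}[\pi_{0}\tau_{0}] - \frac{Q_{0}[\pi_{0}\tau_{0}]}{p_{0}^{2}}\mathbb{G}_{n}[\pi_{0}] + o_{P_{0}}(1) = \mathbb{G}_{n}\!\left[\frac{\pi_{0}(X)}{p_{0}}\{\tau_{0}(X) - \chi(P_{0})\}\right] + o_{P_{0}}(1),
\]
where the remainder is quadratic in $\mathbb{Q}_{n} - Q_{0}$, hence $O_{P_{0}}(n^{-1/2})$, and the bracketed function is precisely $\dot{\chi}^{X}_{P_{0}}$.

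Combining the two displays with the decomposition $\dot{\chi}_{P_{0}} = \dot{\chi}^{Y}_{P_{0}} + \dot{\chi}^{A}_{P_{0}} + \dot{\chi}^{X}_{P_{0}}$ stated above, the $\dot{\chi}^{X}_{P_{0}}$ contributions cancel and
\[
\sqrt{n}(\hat{\chi}_{n} - \theta(P_{0})) = \mathbb{G}_{n}[\dot{\chi}^{Y}_{P_{0}} + \dot{\chi}^{A}_{P_{0}}] + o_{P_{0}}(1).
\]
Finally, $\dot{\chi}^{Y}_{P_{0}}$ has conditional mean zero given $(A,X)$ (the factor $Y - \mu^{(A)}(X)$) and $\dot{\chi}^{A}_{P_{0}}$ has conditional mean zero given $X$ (the factor $A - \pi(X)$), so $P_{0}[\dot{\chi}^{Y}_{P_{0}} + \dot{\chi}^{A}_{P_{0}}] = 0$; under the standing positivity and moment conditions (as in Assumption \ref{ass::att_bvm}(b)) this function lies in $L_{2}(P_{0})$, so the ordinary central limit theorem yields $\mathbb{G}_{n}[\dot{\chi}^{Y}_{P_{0}} + \dot{\chi}^{A}_{P_{0}}] \rightsquigarrow \mathcal{N}(0, \Vert \dot{\chi}^{Y}_{P_{0}} + \dot{\chi}^{A}_{P_{0}} \Vert_{P_{0}}^{2})$, and Slutsky's lemma completes the proof.

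The main obstacle is the second step: linearizing the data-dependent estimand $\theta(P_{0})$ correctly and recognizing that its leading stochastic term is exactly $\mathbb{G}_{n}[\dot{\chi}^{X}_{P_{0}}]$, so that the cancellation in the decomposition leaves the ``reduced'' influence function $\dot{\chi}^{Y}_{P_{0}} + \dot{\chi}^{A}_{P_{0}}$ with the claimed variance. The remaining steps are bookkeeping and a routine CLT.
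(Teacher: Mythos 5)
Your proof is correct and follows essentially the same route as the paper: both arguments reduce the claim to showing $\sqrt{n}(\theta(P_{0}) - \chi(P_{0})) = \mathbb{G}_{n}[\dot{\chi}^{X}_{P_{0}}] + o_{P_{0}}(1)$, so that the $\dot{\chi}^{X}_{P_{0}}$ contribution cancels from $\mathbb{G}_{n}[\dot{\chi}_{P_{0}}]$ and the CLT applies to $\dot{\chi}^{Y}_{P_{0}} + \dot{\chi}^{A}_{P_{0}}$. The only (immaterial) difference is how that intermediate step is established: you linearize the ratio $\mathbb{Q}_{n}[\pi_{0}\tau_{0}]/\mathbb{Q}_{n}[\pi_{0}]$ via the delta method with a quadratic remainder, whereas the paper writes $\sqrt{n}(\chi(P_{0}) + \mathbb{P}_{n}[\dot{\chi}^{X}_{P_{0}}] - \theta(P_{0}))$ as an exact product of an $O_{P_{0}}(1)$ factor and an $o_{P_{0}}(1)$ factor and invokes Slutsky.
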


The proof is postponed to the end of this Section. 

We now give the condition under which the corrected posterior satisfies a Bernstein-von Mises theorem for the ACTT. 

\begin{assumption} \label{ass::ACTT_bvm}
There exists a sequence of measurable subsets $(H_{n})_{n}$ of $\mathcal{P}$ satisfying $\Pi(P \in H_{n} \mid Z^{(n)})  \xrightarrow[]{P_{0}} 1$ such that
\begin{itemize}
    \item[(a)]{($L_{2}$-convergence of $\pi$, $\mu^{(0)}$ and $\mu^{(1)}$)} There exist numbers $\rho_{n},\varepsilon_{n} \rightarrow 0$ and $\sqrt{n}\rho_{n}\varepsilon_{n} \rightarrow 0$ such that
    \begin{align*}
        \sup_{P \in H_{n}} \left\Vert \pi - \pi_{0} \right\Vert_{P_{0}}  \leq \rho_{n}, \quad 
        \sup_{P \in H_{n}} \left\Vert \mu^{(0)} - \mu_{0}^{(0)} \right\Vert_{P_{0}}  \leq \varepsilon_{n}, \quad
        \sup_{P \in H_{n}} \left\Vert \mu^{(1)} - \mu_{0}^{(1)} \right\Vert_{P_{0}}  = o(1).
    \end{align*}
    \item[(b)]{(Donsker class)} The sequences of sets $\{\pi:\,P \in H_{n}\}$ and $\{\mu^{(0)}:\,P \in H_{n}\}$ are both eventually contained in fixed $P_{0}$-Donsker classes.
    \item[(c)]{(Uniform bounding)} For all sufficiently large $n$, there exist fixed constants $\delta, C > 0$ such that for all $P \in H_{n}$,
    \begin{align*}
        \delta < \pi, \pi_{0} &< 1-\delta \\
        |Y - \mu^{(a)}(X)| &< C \qquad \text{for $a=0,1$}
    \end{align*}
    with $P_{0}$-probability 1.
    \item[(d)]{(Glivenko-Cantelli class condition for $\mu^{(1)}$)} 
    \begin{equation*}
        \sup_{P \in H_{n}} |(\mathbb{P}_{n} - P_{0})[\mu^{(1)}(X)]| \xrightarrow[]{P_{0}^{*}} 0.
    \end{equation*}
\end{itemize}
\end{assumption}

\begin{theorem} \label{theo::1step_ACTT}
Under Assumption \ref{ass::ACTT_bvm}, the one-step corrected posterior for the ACTT functional satisfies 
\begin{equation*}
    d_{BL}\left(\mathcal{L}_{\Pi \times \Pi_{BB}}(\sqrt{n}(\Tilde{\theta} - \hat{\chi}_{n}) \mid Z^{(n)}), \mathcal{N}(0, \Vert \dot{\chi}^{Y}_{P_0} +\dot{\chi}^{A}_{P_0} \Vert^{2}_{P_{0}})\right)\xrightarrow[]{P_{0}} 0.
\end{equation*}
\end{theorem}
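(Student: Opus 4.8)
The plan is to mimic the proof of Theorem~\ref{theo::1step_att} for the ATT, the one genuinely new ingredient being the control of the treated-arm regression $\mu^{(1)}$, which (together with $\dot\chi^X_P$) has been switched off from the correction. As there, I would first replace $\Pi(\cdot\mid Z^{(n)})$ by its restriction to $H_n$ (legitimate since $\Pi(P\in H_n\mid Z^{(n)})\xrightarrow{P_0}1$) and work conditionally on $Z^{(n)}$, aiming for convergence in $P_0$-probability. Writing $\phi_{P_0}:=\dot\chi^Y_{P_0}+\dot\chi^A_{P_0}$, the target is the expansion
\begin{equation*}
  \sqrt{n}\bigl(\Tilde\theta-\hat\chi_n\bigr)=\sqrt{n}(\Tilde P-\mathbb{P}_n)[\phi_{P_0}]+R_n ,
\end{equation*}
where $R_n$ is negligible in the sense that the posterior probability of $\{|R_n|>\varepsilon\}$ tends to $0$ in $P_0$-probability for every $\varepsilon>0$. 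Granting this, $\phi_{P_0}$ is a single fixed function with $\Vert\phi_{P_0}\Vert_{P_0}^2=\Vert\dot\chi^Y_{P_0}+\dot\chi^A_{P_0}\Vert_{P_0}^2<\infty$ (the limiting variance of Proposition~\ref{prop::ACTTlim}; note that Assumption~\ref{ass::ACTT_bvm}(c) applied at $P_0$ forces $|\mu^{(1)}_0-\mu^{(0)}_0|<2C$ a.s., so $\phi_{P_0}$ is in fact bounded). Hence Lemma~\ref{lem::bvm_bb}, applied with the constant sequence of singleton classes $\mathcal{G}_n=\{\phi_{P_0}\}$, gives $d_{BL}\bigl(\mathcal{L}(\sqrt{n}(\Tilde P-\mathbb{P}_n)[\phi_{P_0}]\mid Z^{(n)}),\mathcal{N}(0,\Vert\phi_{P_0}\Vert_{P_0}^2)\bigr)\xrightarrow{P_0}0$, and the additive part of Lemma~\ref{lem::slutsky} (with $c=0$) disposes of $R_n$, proving the theorem.

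To obtain the expansion I would proceed in three moves. First, rewrite the centring term: since $P_0[\dot\chi^Y_{P_0}]=P_0[\dot\chi^A_{P_0}]=0$, the proof of Proposition~\ref{prop::ACTTlim} establishes the stronger statement $\hat\chi_n=\theta(P_0)+\mathbb{P}_n[\phi_{P_0}]+o_{P_0}(n^{-1/2})$ — the key point being that the ratio delta-method expansion $\theta(P_0)-\chi(P_0)=(\mathbb{P}_n-P_0)[\dot\chi^X_{P_0}]+O_{P_0}(n^{-1})$ exactly cancels the $\dot\chi^X_{P_0}$ hidden in $\mathbb{P}_n[\dot\chi_{P_0}]$ — so the target reduces to $\sqrt{n}\bigl(\Tilde\theta-\theta(P_0)-\Tilde P[\phi_{P_0}]\bigr)\xrightarrow{P_0}0$ uniformly over $H_n$. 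Second, view $\Tilde\theta$ as a smooth function of the Bayesian-bootstrap averages $\Tilde P[A],\ \Tilde P[\pi],\ \Tilde P[\,\cdot\,]$ appearing in it and of the data-dependent $\theta(P)$, and expand to first order about $P_0$ (legitimate because $\Tilde P[A]-P_0[A]=O_{P_0}(n^{-1/2})$ and $P_0[A]\ge\delta$ by Assumption~\ref{ass::ACTT_bvm}(c)): the first-order part is precisely the Bayesian-bootstrap average $\Tilde P[\phi_{P_0}]$ plus the centring constant $\theta(P_0)$, while the remainders are quadratic in $\Tilde P-\mathbb{P}_n$ (so $O_{P_0}(n^{-1})$, negligible after $\sqrt n$) or involve parameter errors. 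Third, expand $\pi$, $\mu^{(0)}$ (and, as discussed below, $\mu^{(1)}$ and $\theta(P)$) around their truths inside these error terms; what remains has the same structure as in the ATT proof, with $R_n$ splitting into (i) a second-order bias term with the cross-structure $\lesssim\Vert\pi-\pi_0\Vert_{P_0}\Vert\mu^{(0)}-\mu^{(0)}_0\Vert_{P_0}\le\rho_n\varepsilon_n$, negligible after $\sqrt n$ by Assumption~\ref{ass::ACTT_bvm}(a); (ii) empirical-process terms $\sqrt{n}(\mathbb{P}_n-P_0)[h_P-h_{P_0}]$ for $h\in\{\pi,\mu^{(0)}\}$, which vanish by the Donsker property~\ref{ass::ACTT_bvm}(c) together with $\Vert h_P-h_{P_0}\Vert_{P_0}\to0$ (deduced from Assumption~\ref{ass::ACTT_bvm}(a) and the uniform bounds in (b),(c)); and (iii) Bayesian-bootstrap fluctuations $\sqrt{n}(\Tilde P-\mathbb{P}_n)[h_P-h_{P_0}]$, whose conditional variances are $\approx\mathbb{P}_n[(h_P-h_{P_0})^2]\to0$ since a $P_0$-Donsker class is Glivenko--Cantelli and the envelopes are uniformly bounded by Assumption~\ref{ass::ACTT_bvm}(c).

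The part I expect to be the main obstacle is the bookkeeping for $\mu^{(1)}$. Because $\mu^{(1)}$ is permitted to contract only at the merely $o(1)$ rate of Assumption~\ref{ass::ACTT_bvm}(a) — with no $n^{-1/4}$ bound and no Donsker class — one must verify that it never occurs multiplied by $\sqrt n$ except harmlessly. Concretely, $\mu^{(1)}$ enters $\Tilde\theta$ through the $A=1$ part of $\dot\chi^Y_P$, through the difference $\mu^{(1)}-\mu^{(0)}$ in $\dot\chi^A_P$, and through $\theta(P)$ itself; I would check that these combine so that, after substituting $\mu^{(1)}=\mu^{(1)}_0+(\mu^{(1)}-\mu^{(1)}_0)$, the factor $\pi(X)$ cancels and the $(\mu^{(1)}-\mu^{(1)}_0)$-pieces survive only as (a) Bayesian-bootstrap fluctuations of functions bounded by $2C$ (from $|Y-\mu^{(a)}(X)|<C$, $a=0,1$, in Assumption~\ref{ass::ACTT_bvm}(c)) with $L_2(P_0)$-norm $o(1)$, hence vanishing conditional variance, or (b) a non-normalised average $(\mathbb{P}_n-P_0)[\mu^{(1)}-\mu^{(1)}_0]$ times bounded factors, which is exactly what the Glivenko--Cantelli condition~\ref{ass::ACTT_bvm}(d) controls. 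This is precisely the mechanism that makes replacing $\chi(P)$ by the $\mathbb{Q}_n$-plug-in $\theta(P)$ (equivalently, switching off $\dot\chi^X_P$ and using $\mathbb{P}_n[\dot\chi^X_{P_0}]$ in the centring) legitimate, and it explains why only $\pi$ and $\mu^{(0)}$ — the components appearing in the second-order bias cross-term — are subject to rate and Donsker requirements, while $\mu^{(1)}$ needs only the weaker Assumptions~\ref{ass::ACTT_bvm}(a) ($o(1)$ rate), (c) (boundedness) and (d) (Glivenko--Cantelli).
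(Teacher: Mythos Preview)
Your plan is sound and tracks the paper's proof closely: both restrict to $H_n$, identify a Bayesian-bootstrap leading term whose BvM gives the target normal, and then kill three kinds of remainder --- a second-order bias with the cross-structure $\Vert\pi-\pi_0\Vert_{P_0}\Vert\mu^{(0)}-\mu_0^{(0)}\Vert_{P_0}$ (from Assumption~\ref{ass::ACTT_bvm}(a)), Donsker-controlled empirical-process terms in $(\pi,\mu^{(0)})$, and a Slutsky-type contribution from $\Tilde P[A]$ versus $P_0[A]$. Both proofs then isolate the $\mu^{(1)}$-dependence inside the factor $\theta(P)-\chi(P_0)$, which multiplies an $O_{P_0}(1)$ quantity and hence needs only to be $o(1)$; this is exactly where the Glivenko--Cantelli condition~\ref{ass::ACTT_bvm}(d) (combined with bounded envelopes) is spent.

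The one organizational difference is your choice of leading term. The paper keeps the $P$-dependent function $g(Z)=P_0[A]\{\dot\chi^Y_P+\dot\chi^A_P\}$ in the leading term $\sqrt n(\Tilde P-\mathbb P_n)[g]/\Tilde P[A]$ and invokes the Donsker BvM for the Bayesian bootstrap (Theorem~3.6.13 of van der Vaart--Wellner) together with the uniform $L_2$-convergence $g\to g_0$; you instead extract the \emph{fixed} $\phi_{P_0}=g_0/P_0[A]$ up front, apply Lemma~\ref{lem::bvm_bb} to the singleton class, and push all $P$-dependence into the remainder. Your route is slightly more elementary (no Donsker-indexed BvM needed), but it obliges you to control $\sqrt n(\Tilde P-\mathbb P_n)[g-g_0]$ uniformly over $H_n$ via a conditional-variance bound, which in turn needs $\sup_{P\in H_n}\mathbb P_n[(g-g_0)^2]\to 0$ in probability --- true here because the envelopes are bounded and the squared differences are Glivenko--Cantelli, but worth stating explicitly. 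One caution on bookkeeping: the phrase ``the factor $\pi(X)$ cancels'' is not quite right --- after combining the $\mu^{(1)}$-pieces of $\dot\chi^Y_P$ and $\dot\chi^A_P$ you get $-\Tilde P[\pi\mu^{(1)}]/\Tilde P[A]$, so $\pi$ survives as a bounded multiplier rather than disappearing; what actually makes things work is the product structure $\{\text{BB or empirical }O_{P_0}(1)\}\times\{\theta(P)-\chi(P_0)\}$, with the second factor handled by the product Glivenko--Cantelli lemma applied to $\pi(\mu^{(1)}-\mu^{(0)})$.
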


We highlight the matching asymptotic variances in Proposition \ref{prop::ACTTlim} and Theorem \ref{theo::1step_ACTT}. Thus, we deduce that central credible intervals for $\Tilde{\theta}$ will be approximate confidence intervals for the ACTT as desired. 

The proof of Theorem \ref{theo::1step_ACTT} is deferred to Section \ref{sec::proof_ACTT} because it builds on some components from the proofs of other results. By comparing with Assumption \ref{ass::att_bvm} for the ATT, we see that Assumption \ref{ass::ACTT_bvm} adds some conditions on $\mu^{(1)}$. This is unsurprising, given that the corrected ACTT parameter $\Tilde{\theta}$ involves $\mu^{(1)}$, whereas the corrected ATT parameter does not. Fortunately, the conditions on $\mu^{(1)}$ are relatively mild; the $L_{2}$-convergence in Assumption \ref{ass::ACTT_bvm}(a) does not require a rate, and we only need a Glivenko-Cantelli type condition in Assumption \ref{ass::ACTT_bvm}(d) rather than a Donsker class. In Assumption \ref{ass::ACTT_bvm}(c), we have also added a lower bound for $\pi$ and $\pi_{0}$ (similar to the analogous condition in Section \ref{sec::miss}) for technical purposes in the proof. 

A similar methodology can be established for the \textit{average treatment effect (ATE)}
\begin{equation*}
    \chi(P) = E_{P}[Y^{1}-Y^{0}],
\end{equation*}
which is identified by
\begin{equation} \label{eqn::ate_iden}
    \chi(P) = E_{P}[\mu^{1} - \mu^{0}] = Q[\mu^{1} - \mu^{0}]
\end{equation}
under consistency, \textit{unconfoundedness}
\begin{equation*}
    Y^{a} \independent A \mid X \qquad \text{for $a=0,1$}
\end{equation*}
and \textit{overlap/positivity}:
\begin{equation*}
    0 < P(A=1 \mid X) < 1
\end{equation*}
with $P$-probability 1 \citep{Hernan20}. With the formula in (\ref{eqn::ate_iden}), estimation of the ATE is often approached as two copies of the problem in Section \ref{sec::miss} \citep{Ray20}.

The efficient influence function of the ATE can be decomposed into
\begin{equation*}
    \dot{\chi}_{P} = \dot{\chi}^{Y}_{P} + \dot{\chi}^{A}_{P} + \dot{\chi}^{X}_{P},
\end{equation*}
where 
\begin{align*}
    \begin{split}
        \dot{\chi}^{Y}_{P} &= \left(\frac{A}{\pi(X)} - \frac{(1-A)}{1-\pi(X)}\right)[Y-\mu^{(A)}(X)] \\
        \dot{\chi}^{A}_{P} &= 0 \\
        \dot{\chi}^{X}_{P} &= \mu^{(1)}(X) - \mu^{(0)}(X) - \chi(P).
    \end{split}
\end{align*}
By again replacing $Q$ with $\mathbb{Q}_{n}$, we obtain the \textit{conditional average treatment effect (CATE)} \citep{Imbens04}
\begin{equation*}
    \theta(P) = \mathbb{Q}_{n}[\mu^{1} - \mu^{0}]
\end{equation*}
and the one-step corrected functional for the CATE is
\begin{equation*}
    \Tilde{\theta} = \theta(P) + \Tilde{P}[\dot{\chi}^{Y}_{P}].
\end{equation*}
As before, the CATE is estimated more efficiently than the ATE since we have removed the uncertainty in the marginal covariate distribution, reducing the asymptotic posterior variance from $\Vert \dot{\chi}_{P} \Vert_{P_0}^{2}$ to $\Vert \dot{\chi}^{Y}_{P} \Vert_{P_0}^{2}$. The analysis for the CATE is relatively straightforward so it is omitted; the details can be easily deduced from Theorems \ref{theo::1step_miss} and \ref{theo::1step_ACTT}.

\subsection{Proof of Proposition \ref{prop::ACTTlim}}
It is sufficient to show that 
\begin{equation*}
    \sqrt{n}(\chi(P_{0}) + \mathbb{P}_{n}[\dot{\chi}^{X}_{P_0}] - \theta(P_{0})) = o_{P_0}(1)
\end{equation*}
since this implies that
\begin{align*}
    \sqrt{n}(\hat{\chi}_{n} - \theta(P_{0})) &= \sqrt{n}(\chi(P_{0}) + \mathbb{P}_{n}[\dot{\chi}^{Y}_{P_0} +\dot{\chi}^{A}_{P_0}] + \mathbb{P}_{n}[\dot{\chi}^{X}_{P_0}] - \theta(P_{0}))\\
    &= \sqrt{n}\mathbb{P}_{n}[\dot{\chi}^{Y}_{P_0} +\dot{\chi}^{A}_{P_0}] + o_{P_0}(1)
\end{align*}
and $\sqrt{n}\mathbb{P}_{n}[\dot{\chi}^{Y}_{P_0} +\dot{\chi}^{A}_{P_0}]$ converges to the requisite normal distribution by the central limit theorem.

Some straightforward algebra establishes
\begin{equation*}
    \sqrt{n}(\chi(P_{0}) + \mathbb{P}_{n}[\dot{\chi}^{X}_{P_0}] - \theta(P_{0})) = \left[\frac{\sqrt{n}(P_{0}[A]-\mathbb{Q}_{n}[\pi_{0}(X)])}{P_{0}[A]}\right]\left\{\chi(P_{0}) - \frac{\mathbb{Q}_{n}[\pi_{0}(X)\{\mu_{0}^{(1)}(X)-\mu_{0}^{(0)}(X)\}]}{\mathbb{Q}_{n}[\pi_{0}(X)]}\right\}.
\end{equation*}
Noting that $P_{0}[A] = P_{0}[\pi_{0}(X)]$, the central limit theorem implies that $\sqrt{n}(P_{0}[A]-\mathbb{Q}_{n}[\pi_{0}(X)])$ converges in distribution to a normal distribution with finite variance. By the weak law of large numbers,
\begin{align*}
    \mathbb{Q}_{n}[\pi_{0}(X)\{\mu_{0}^{(1)}(X)-\mu_{0}^{(0)}(X)\}] & \xrightarrow[]{P_{0}} P_{0}[\pi_{0}(X)\{\mu_{0}^{(1)}(X)-\mu_{0}^{(0)}(X)\}] \\
    \mathbb{Q}_{n}[\pi_{0}(X)] &\xrightarrow[]{P_{0}} P_{0}[\pi_{0}(X)].
\end{align*}
Combining these with Slutsky's lemma, we deduce that the term in curly brackets is $o_{P_0}(1)$. Using Slutsky's lemma again, we have that $\sqrt{n}\{\chi(P_{0}) + \mathbb{P}_{n}[\dot{\chi}^{X}_{P_0}] - \theta(P_{0})\}$ converges in distribution to zero. Since zero is a constant, we also have convergence in probability. \qed

\section{Proofs of main results} \label{sec::proofs}

\subsection{Proof of Theorem \ref{theo::1step_bvm}}

By standard arguments, (e.g. p. 142 of \citet{vanderVaart98} or \citet{Castillo15}) it is sufficient to prove the result for the sequence of posterior laws conditioned on $P \in H_{n}$, i.e.
\begin{equation*}
    d_{BL}(\mathcal{L}_{\Pi \times \Pi_{BB}}(\sqrt{n}(\Tilde{\chi} - \hat{\chi}_{n}) \mid Z^{(n)}, P \in H_{n}), \mathcal{N}(0, \Vert \dot{\chi}_{P_{0}} \Vert^{2}_{P_0})) \xrightarrow[]{P_{0}} 0.
\end{equation*} 

We decompose $\{\chi(P) - \chi(P_{0})\}$ as follows:
\begin{equation*}
    \chi(P) - \chi(P_{0}) = -P_{0}[\dot{\chi}_{P}] - r_{2}(P_{0}, P),
\end{equation*}
so we have
\begin{equation*}
    \begin{split}
        \sqrt{n}(\Tilde{\chi} - \hat{\chi}_{n}) &= \sqrt{n}\{\chi(P)-\chi(P_{0}) + \Tilde{P}[\dot{\chi}_{P}] - \mathbb{P}_{n}[\dot{\chi}_{P_{0}}]\} \\
        &= \sqrt{n}\{-P_{0}[\dot{\chi}_{P}] - r_{2}(P_{0}, P)+ \Tilde{P}[\dot{\chi}_{P}] - \mathbb{P}_{n}[\dot{\chi}_{P_{0}}]\}.
    \end{split}
\end{equation*}
Next, we add and subtract $\sqrt{n}\mathbb{P}_{n}[\dot{\chi}_{P}]$:
\begin{equation*}
    \sqrt{n}(\Tilde{\chi} - \hat{\chi}_{n}) = \sqrt{n}\{(\Tilde{P} - \mathbb{P}_{n})[\dot{\chi}_{P}] + (\mathbb{P}_{n} - P_{0})[\dot{\chi}_{P} - \dot{\chi}_{P_{0}}] - r_{2}(P_{0}, P)\}.
\end{equation*}

Let $R_{n}(P) = \mathbb{G}_{n}[\dot{\chi}_{P} - \dot{\chi}_{P_{0}}] - \sqrt{n}r_{2}(P_{0},P)$. Then the decomposition above implies
\begin{align} \label{eqn::decom_ineq}
\begin{split}
    &d_{BL}(\mathcal{L}_{\Pi \times \Pi_{BB}}(\sqrt{n}(\Tilde{\chi} - \hat{\chi}_{n}) \mid Z^{(n)}, P \in H_{n}), \mathcal{N}(0, \Vert \dot{\chi}_{P_{0}} \Vert^{2}_{P_0})) \\
    &= d_{BL}(\mathcal{L}_{\Pi \times \Pi_{BB}}(\sqrt{n}(\Tilde{P} - \mathbb{P}_{n})[\dot{\chi}_{P}] +R_{n}(P) \mid Z^{(n)}, P \in H_{n}), \mathcal{N}(0, \Vert \dot{\chi}_{P_{0}} \Vert^{2}_{P_0})) \\
    &\leq \frac{\sup_{\lVert f \lVert_{BL} \leq 1} \left|\int 1_{P \in H_{n}}\left\{\int f(\sqrt{n}[\Tilde{P}-\mathbb{P}_{n}][\dot{\chi}_{P}])\,d\Pi_{BB}(\Tilde{P} \mid Z^{(n)})-\int f(w)d\mathcal{N}_{(0,\Vert \dot{\chi}_{P_{0}}\Vert_{P_0}^{2})}(w)\right\}d\Pi(P \mid Z^{(n)})\right|}{\int 1_{P \in H_{n}} d\Pi(P \mid Z^{(n)})} \\
    &+ \frac{\sup_{\lVert f \lVert_{BL} \leq 1} \left|\int \int 1_{P \in H_{n}}\left\{f(\sqrt{n}[\Tilde{P}-\mathbb{P}_{n}][\dot{\chi}_{P}] + R_{n}(P))-f(\sqrt{n}[\Tilde{P}-\mathbb{P}_{n}][\dot{\chi}_{P}])\right\}d\Pi_{BB}(\Tilde{P} \mid Z^{(n)})d\Pi(P \mid Z^{(n)})\right|}{\int 1_{P \in H_{n}} d\Pi(P \mid Z^{(n)})} \\
    &\leq d_{BL}(\mathcal{L}_{\Pi \times \Pi_{BB}}(\sqrt{n}(\Tilde{P} - \mathbb{P}_{n})[\dot{\chi}_{P}] \mid Z^{(n)}, P \in H_{n}), \mathcal{N}(0, \Vert \dot{\chi}_{P_{0}} \Vert^{2}_{P_0})) \\
    &+ \frac{ \int 1_{P \in H_{n}}\left|R_{n}(P) \right|\Pi(P \mid Z^{(n)})}{\int 1_{P \in H_{n}} d\Pi(P \mid Z^{(n)})} \\
    &\leq  d_{BL}(\mathcal{L}_{\Pi \times \Pi_{BB}}(\sqrt{n}(\Tilde{P} - \mathbb{P}_{n})[\dot{\chi}_{P}] \mid Z^{(n)}, P \in H_{n}), \mathcal{N}(0, \Vert \dot{\chi}_{P_{0}} \Vert^{2}_{P_0})) \\
    &+ \sup_{P \in H_{n}}|\mathbb{G}_{n}[\dot{\chi}_{P} - \dot{\chi}_{P_{0}}]| + \sup_{P \in H_{n}} |\sqrt{n}r_{2}(P_{0},P)|.
    \end{split}
\end{align}

First consider the empirical process term on the last line of (\ref{eqn::decom_ineq}), which is immediately seen to be $o_{P_0}(1)$ under Assumption \ref{ass::1step_bvm}(c$^{*}$)(i). With Assumption \ref{ass::1step_bvm}(c) instead, let $\mathcal{F}$ be a fixed $P_{0}$-Donsker class such that $\{\dot{\chi}_{P}:\,P\in H_{n}\}$ is contained in $\mathcal{F}$ for all sufficiently large $n$. Let $\rho_{P_0}(f) = \sqrt{P_{0}(f-P_{0}f)^{2}}$ and
\begin{equation*}
    \mathcal{F}_{\delta} = \{f-g:\,f,g\in\mathcal{F},\rho_{P_{0}}(f-g) < \delta\}
\end{equation*}
for any $\delta > 0$. Assumption \ref{ass::1step_bvm}(b) implies that
\begin{align*}
    \delta_{n}=\sup_{P \in H_{n}}\rho_{P_0}(\dot{\chi}_{P} - \dot{\chi}_{P_{0}}) \leq \sup_{P \in H_{n}} \Vert \dot{\chi}_{P} - \dot{\chi}_{P_{0}} \Vert_{P_{0}}  \rightarrow 0.
\end{align*}
Then for all sufficiently large $n$, $\{\dot{\chi}_{P}:\,P\in H_{n}\}$ is contained in $\mathcal{F}_{2\delta_{n}}$, so Corollary 2.3.12 of \citet{vanderVaart96} implies that
\begin{equation*}
    \sup_{P \in H_{n}}|\mathbb{G}_{n}[\dot{\chi}_{P} - \dot{\chi}_{P_{0}}]| \leq \sup_{f \in \mathcal{F}_{2\delta_{n}}}|\mathbb{G}_{n}[f]| \xrightarrow[]{P_{0}^{*}} 0,
\end{equation*}
where $P_{0}^{*}$ denotes convergence in outer probability.

The third term on the last line of (\ref{eqn::decom_ineq})---involving $r_{2}$---tends to zero in probability by Assumption \ref{ass::1step_bvm}(a). This leaves the first term:
\begin{align}
\begin{split}\label{eqn::BB_ineq}
    &d_{BL}(\mathcal{L}_{\Pi \times \Pi_{BB}}(\sqrt{n}(\Tilde{P} - \mathbb{P}_{n})[\dot{\chi}_{P}] \mid Z^{(n)}, P \in H_{n}), \mathcal{N}(0, \Vert \dot{\chi}_{P_{0}} \Vert^{2}_{P_0})) \\
    &= \frac{\sup_{\lVert f \lVert_{BL} \leq 1} \left|\int 1_{P \in H_{n}}\left\{\int f(\sqrt{n}[\Tilde{P}-\mathbb{P}_{n}][\dot{\chi}_{P}])\,d\Pi_{BB}(\Tilde{P} \mid Z^{(n)})-\int f(w)d\mathcal{N}_{(0,\Vert \dot{\chi}_{P_{0}}\Vert_{P_0}^{2})}(w)\right\}d\Pi(P \mid Z^{(n)})\right|}{\int 1_{P \in H_{n}} d\Pi(P \mid Z^{(n)})}\\
    &\leq \frac{\sup_{\lVert f \lVert_{BL} \leq 1} \left|\int 1_{P \in H_{n}}\left\{\int f(\sqrt{n}[\Tilde{P}-\mathbb{P}_{n}][\dot{\chi}_{P}])\,d\Pi_{BB}(\Tilde{P} \mid Z^{(n)})-\int f(v)d\mathcal{N}_{(0,\Vert \dot{\chi}_{P}-P_{0}[\dot{\chi}_{P}]\Vert_{P_0}^{2})}(v)\right\}d\Pi(P \mid Z^{(n)})\right|}{\int 1_{P \in H_{n}} d\Pi(P \mid Z^{(n)})}\\
    &+ \frac{\sup_{\lVert f \lVert_{BL} \leq 1} \left|\int 1_{P \in H_{n}}\left\{\int f(v)d\mathcal{N}_{(0,\Vert \dot{\chi}_{P}-P_{0}[\dot{\chi}_{P}]\Vert_{P_0}^{2})}(v)-\int f(w)d\mathcal{N}_{(0,\Vert \dot{\chi}_{P_{0}}\Vert_{P_0}^{2})}(w)\right\}d\Pi(P \mid Z^{(n)})\right|}{\int 1_{P \in H_{n}} d\Pi(P \mid Z^{(n)})}\\
    &\leq \sup_{\lVert f \lVert_{BL} \leq 1} \sup_{P \in H_{n}} \left| \int f(\sqrt{n}[\Tilde{P}-\mathbb{P}_{n}][\dot{\chi}_{P}])\,d\Pi_{BB}(\Tilde{P} \mid Z^{(n)})-\int f(v)d\mathcal{N}_{(0,\Vert \dot{\chi}_{P}-P_{0}[\dot{\chi}_{P}]\Vert_{P_0}^{2})}(v)\right| \\
    &+ \sup_{\lVert f \lVert_{BL} \leq 1} \sup_{P \in H_{n}} \left| \int f(v)d\mathcal{N}_{(0,\Vert \dot{\chi}_{P}-P_{0}[\dot{\chi}_{P}]\Vert_{P_0}^{2})}(v) - \int f(w)d\mathcal{N}_{(0,\Vert \dot{\chi}_{P_{0}}\Vert_{P_0}^{2})}(w)\right| \\
    &=  \sup_{P \in H_{n}} d_{BL}(\mathcal{L}_{\Pi_{BB}}(\sqrt{n}(\Tilde{P} - \mathbb{P}_{n})[\dot{\chi}_{P}] \mid Z^{(n)}), \mathcal{N}(0, \Vert \dot{\chi}_{P} - P_{0}[\dot{\chi}_{P}] \Vert^{2}_{P_0})) \\
    &+ \sup_{P \in H_{n}} d_{BL}(\mathcal{N}(0, \Vert \dot{\chi}_{P} - P_{0}[\dot{\chi}_{P}] \Vert^{2}_{P_0}),\mathcal{N}(0, \Vert \dot{\chi}_{P_{0}} \Vert^{2}_{P_0})).
    \end{split}
\end{align}
By Assumption \ref{ass::1step_bvm}(c), we can apply Theorem 3.6.13 of \citet{vanderVaart96} to show that the first term on the last line of (\ref{eqn::BB_ineq}) tends to zero in outer probability. Otherwise, we apply Lemma \ref{lem::bvm_bb} under Assumption \ref{ass::1step_bvm}(c$^{*}$). The second term on the last line of (\ref{eqn::BB_ineq}) tends to zero via the identity $d_{BL}(\mathcal{N}(0, \sigma^{2}), \mathcal{N}(0, \tau^{2})) \leq \sqrt{|\sigma^{2}-\tau^{2}|}$ and the assumed uniform $L_{2}$-convergence in Assumption \ref{ass::1step_bvm}(b), which completes the proof.

If the efficient influence function takes the form
\begin{equation*}
    \dot{\chi}_{P} = \psi_{P} - P[\psi_{P}],
\end{equation*}
we point out that
\begin{align*}
    \sqrt{n}(\Tilde{P} - \mathbb{P}_{n})[\dot{\chi}_{P}] &=  \sqrt{n}(\Tilde{P} - \mathbb{P}_{n})[\psi_{P}]\\
    \mathbb{G}_{n}[\dot{\chi}_{P} - \dot{\chi}_{P_{0}}] &= \mathbb{G}_{n}[\psi_{P} - \psi_{P_{0}}].
\end{align*}
The proof can then be easily adapted when $\dot{\chi}_{P}$ and $\dot{\chi}_{P_{0}}$ are replaced by $\psi_{P}$ and $\psi_{P_{0}}$ respectively in Assumptions \ref{ass::1step_bvm}(b), (c) and (c*). \qed

\subsection{Proof of Theorem \ref{th:f2}} \label{proof::dens_squared}

From \citet{Shen13}, let $\tilde \epsilon_n = M_n n^{-\alpha/(2\alpha+1)}$ with $M_n \geq (\log n)^r$ and 
$$\mathcal F_n = \left\{f (x)= \int \varphi_\sigma(x-\mu) dQ(\mu), Q \in \mathcal Q_n, \sigma \in (\sigma_n, S_n)\right\}\cap \{d_H(f_0, f) \leq \tilde \epsilon_n \}, \quad $$
for some $r>0$ with
\begin{equation*}
    \mathcal Q_n = \left\{ \sum_j p_j \delta_{(\mu_j)}, \sum_{j>J_n}p_j \leq \sigma_n n^{-2}, \max_{j\leq J_n}|\mu_j |\leq n^B\right\}, \quad J_n = J_0n \tilde{\epsilon_n}^2, \quad B,J_0>0
\end{equation*}
and $\sigma_n = \sigma_0 \tilde \epsilon_n^{1/\alpha}$ and $S_n = e^{s_0 n\tilde \epsilon_n^2} $ for some $s_0>0$. Then from \citet{Shen13} Proposition 2 and Theorem 1, 
$$ \Pi( \mathcal Q_n^c |Y^n) = o_{P_0}(1), \quad \Pi( d_H( f_0, f) > \tilde \epsilon_n |Y^n ) = o_{P_0}(1).$$
Thus if $H_n = \mathcal F_n \cap \{ \|f-f_0\|_2 > M_0\tilde \epsilon_{n}\}$, $ \Pi( H_n^c |Y^n) = o_{P_0}(1)$.
We write $\mathcal G_n =\{g= f_{J_n}-f_0, f \in H_n \}$,  where we denote for $f (x)= \sum_{j =1}^{\infty} p_j \varphi_\sigma(x-\mu_j)$, $f_{J_n} (x)= \sum_{j =1}^{J_n} p_j \varphi_\sigma(x-\mu_j)$.  To prove assumption \ref{ass::1step_bvm} we first prove that on $H_n$ the densities are bounded and that $\|g\|_2 = O(\tilde \epsilon_n)$. 
First we show that 
\begin{equation}\label{supnorm:1}
\Pi ( \|f\|_\infty > 4 \|f_0\|_\infty | Y^n) = o_{p_0}(1)
\end{equation}
First, on the Hellinger ball $d_H(f, f_0) \leq \tilde \epsilon_n$, we have 
$$ \int \mathbf 1_{f> 4 \|f_0\|_\infty } f(x)dx \leq 4 d_H(f_0, f)^2 \leq 4 \tilde \epsilon_n^2,$$
which in terms implies that $$\text{Leb}( \{ f\geq 4 \|f_0\|_\infty \} ) \leq  \frac{  \tilde \epsilon_n^2 }{\|f_0\|_\infty} .$$
Let $x_1\in A_n =\{ x; f(x)> 8 \|f_0\|_\infty\}$, noting that for all $f \in \mathcal F_n $, 
$ |f^{'}(y)| \lesssim \sigma^{-2} \lesssim \sigma_n^{-2},$
we have that 
$$ f(x) \geq f(x_1)- |x-x_1| C/\sigma_n^{2}\geq 4\|f_0\|_\infty$$
as soon as $|x-x_1|\leq 2\sigma_n^{2}/C \lesssim \tilde \epsilon_n^{2/\alpha} >>\tilde \epsilon_n^2$ when $\alpha>1$,  which contradicts $\text{Leb}(A_n)\lesssim \tilde{\epsilon_n}^2$. 

Interestingly \eqref{supnorm:1} allows us to refine the $L_2$ rate obtained by \cite{Scricciolo14} since
$$ \|f-f_0\|_2^2 \leq 2d_H(f,f_0)^2 (\|f\|_\infty +\|f_0\|_\infty) \lesssim \tilde \epsilon_n^2.$$
So that we obtain 
$$ \Pi( \|f-f_0\|_2 > M_0\tilde \epsilon_n |Y^n ) = o_{P_0}(1). $$
while the rate obtained in \cite{Scricciolo14} was $\tilde \epsilon_n(n \tilde \epsilon_n^2)^{1/4}$. 

Hereafter we write $\dot{\chi}_f$ in place of $\dot{\chi}_P$. 
 Then,  when $\alpha>1/2$ and uniformly over $H_n$, 
\begin{equation*}
\sqrt{n} r_2(f_0, f) = \chi(f_0) - \chi(f) - P_0(\dot{\chi}_f) = \sqrt{n}\|f-f_0\|_2^2 = o(1), 
    \end{equation*}
and 
\begin{equation*}
\|\dot{\chi}_f -\dot{\chi}_{f_0}\|_{f_0}= 4\|f-f_0\|_2^2 = o(1).
    \end{equation*} 
 We now prove Assumption \ref{ass::1step_bvm} $(c^*)$, using  Lemma 19.36 of \cite{vanderVaart98}. For all $g \in \mathcal G_n$
 $$F_0(g^2) \leq 2\|f_0\|_\infty \tilde\epsilon_{n}^2 + 2F_0((f-f_{J_n})^2)\leq 2\|f_0\|_\infty\tilde\epsilon_{n}^2 + 2 n^{-4}/\sigma_n^2\leq 3\|f_0\|_\infty\tilde\epsilon_{n}^2 := \delta_n^2,$$
 for $n$ large enough 
 and $\|g\|_\infty \leq 1/\sigma_n$. We thus only need to upper bound 
 $ J_{[],n} := J_{[]}(\delta_n, \mathcal G_n, L_2(f_0)) = o(  n^{1/4}\tilde \epsilon_n \wedge 1 ) $ so that using $\|g\|_\infty \leq 2\|f_0\|_\infty$ 
 $$\mathbb E_0^* \sup_{g \in \mathcal G_n} \mathbb G_n(g) \lesssim  J_{[],n} + \frac{ J_{[],n}^2}{ \delta_n^2 \sqrt{n} } =o(1)$$
 
 In Lemma \ref{lem:J} we show that 
 $$J_{[],n} \lesssim \sqrt{J_n}\tilde \epsilon_{n} \log n \lesssim  \sqrt{n}\tilde \epsilon_n^2 \log n \asymp n^{-(\alpha -1/2)/(2 \alpha+1)} (\log n)^{r_1} $$
 for some $r_1$, so that as soon as $\alpha >1/2$, $J_{[],n} = =o(1)$ and $J_{[],n}^2/\tilde \epsilon_n^2 \lesssim n \tilde \epsilon_n^2 \log n = o(\sqrt{n}) $ for $\alpha>1/2$. Therefore, 
 $$ \sup_{g \in \mathcal G_n} |\mathbb G_n(g)| = o_{p_0}(1).$$
 Finally on $\mathcal F_n$ $\|f\|_\infty \leq 4 \|f_0\|_\infty$, therefore an envelop function of $\mathcal F_n $
 is the constant $4\|f_0\|_\infty$ and Assumption  \ref{ass:onestep:post} (ii) is verified which terminates the proof of  Theorem \ref{th:f2}. \qed

\begin{lemma} \label{lem:J}
 Using the definitions above 
 $$J_{[],n} =O(\sqrt{J_n}\tilde \epsilon_{n}\log n ).$$
 \end{lemma}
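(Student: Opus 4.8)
My plan is to estimate the bracketing number $N_{[]}(\varepsilon,\mathcal G_n,L_2(f_0))$ by a $\|\cdot\|_\infty$-covering number of the finite-dimensional parameter set indexing the truncated mixtures, and then integrate in $\varepsilon$. First, since $f_0$ is fixed, translating a bracket $[\ell,u]$ for $f_{J_n}$ by $-f_0$ gives a bracket $[\ell-f_0,u-f_0]$ for $g=f_{J_n}-f_0$ of the same $L_2(f_0)$-width, so $N_{[]}(\varepsilon,\mathcal G_n,L_2(f_0))$ is the $L_2(f_0)$-bracketing number of $\{f_{J_n}:f\in H_n\}$; by the definitions of $\mathcal Q_n$ and $\mathcal F_n$ this family is contained in
\[
\Big\{\textstyle\sum_{j=1}^{J_n}p_j\varphi_\sigma(\cdot-\mu_j):\ \sigma\in[\sigma_n,S_n],\ |\mu_j|\le n^B,\ p_j\ge 0,\ \sum_{j}p_j\le 1\Big\}.
\]

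Next I will control this parametric family in the supremum norm. Writing $\theta=(\sigma,\mu_1,\dots,\mu_{J_n},p_1,\dots,p_{J_n})$ and $g_\theta$ for the associated mixture, a telescoping argument together with $\|\varphi_\sigma\|_\infty\lesssim\sigma^{-1}$, $\|\varphi_\sigma'\|_\infty\lesssim\sigma^{-2}$ and $\|\partial_\sigma\varphi_\sigma\|_\infty\lesssim\sigma^{-2}$ yields, for $\sigma,\sigma'\ge\sigma_n$,
\[
\|g_\theta-g_{\theta'}\|_\infty\ \lesssim\ \sigma_n^{-2}\Big(|\sigma-\sigma'|+\sum_{j=1}^{J_n}|\mu_j-\mu_j'|\Big)+\sigma_n^{-1}\sum_{j=1}^{J_n}|p_j-p_j'|.
\]
Gridding $[\sigma_n,S_n]$ with mesh $\sim\eta\sigma_n^2$, each $\mu_j\in[-n^B,n^B]$ with mesh $\sim\eta\sigma_n^2/J_n$, and each $p_j\in[0,1]$ with mesh $\sim\eta\sigma_n/J_n$ produces an $\eta$-net in $\|\cdot\|_\infty$ whose log-cardinality is $\lesssim\log S_n+J_n\log\!\big(n^B J_n/(\eta\sigma_n^2)\big)+J_n\log\!\big(J_n/(\eta\sigma_n)\big)$. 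Since $\sigma_n^{-1},J_n$ and $n^B$ are all bounded by fixed powers of $n$ and $\log S_n=s_0 n\tilde\epsilon_n^2=(s_0/J_0)J_n$, this is $\lesssim J_n\big(\log n+\log(1/\eta)\big)$ for $n$ large and $\eta\in(0,1)$. As $f_0$ is a probability density, $\|1\|_{L_2(f_0)}=1$, so an $\eta$-net in $\|\cdot\|_\infty$ gives the brackets $[g_{\theta_k}-\eta,\,g_{\theta_k}+\eta]$ of $L_2(f_0)$-width $2\eta$, whence $\log N_{[]}(\varepsilon,\mathcal G_n,L_2(f_0))\lesssim J_n\big(\log n+\log(1/\varepsilon)\big)$ for $\varepsilon\in(0,1)$.

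Finally I integrate. Recalling $\delta_n^2=3\|f_0\|_\infty\tilde\epsilon_n^2$, so that $\delta_n\asymp\tilde\epsilon_n$ is polynomially small and $\log(1/\delta_n)\asymp\log n$,
\[
J_{[],n}=\int_0^{\delta_n}\!\sqrt{\log N_{[]}(\varepsilon,\mathcal G_n,L_2(f_0))}\,d\varepsilon\ \lesssim\ \sqrt{J_n}\,\Big(\delta_n\sqrt{\log n}+\int_0^{\delta_n}\!\sqrt{\log(1/\varepsilon)}\,d\varepsilon\Big);
\]
combined with the elementary bound $\int_0^{\delta}\sqrt{\log(1/\varepsilon)}\,d\varepsilon\lesssim\delta\sqrt{\log(1/\delta)}$ this gives $J_{[],n}\lesssim\sqrt{J_n}\,\tilde\epsilon_n\sqrt{\log n}\le\sqrt{J_n}\,\tilde\epsilon_n\log n$, which is the asserted bound (and in fact slightly stronger).

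The main obstacle is the bookkeeping in the small-bandwidth regime. Near $\sigma=\sigma_n$ the Gaussian atoms are tall and narrow, so the sup-norm Lipschitz constants of the parametrization blow up like $\sigma_n^{-2}$ and the grids must be refined accordingly; the resolution is that $\sigma_n^{-1}$, $J_n$ and $n^B$ are only polynomially large in $n$, so these refinements inflate $\log N_{[]}$ by at most an additive $O(J_n\log n)$. The one genuinely large term is $\log S_n$, arising from the enormous upper end of the admissible bandwidths, but $\log S_n\asymp J_n$, which is why it does no harm: in the final empirical-process bound in the proof of Theorem~\ref{th:f2} it is effectively divided by $\delta_n^2\sqrt n\asymp\tilde\epsilon_n^2\sqrt n$, and one still recovers $J_{[],n}=o(1)$ whenever $\alpha>1/2$.
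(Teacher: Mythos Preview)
Your argument is correct and reaches the stated bound (indeed, the slightly stronger $\sqrt{J_n}\,\tilde\epsilon_n\sqrt{\log n}$). The route, however, differs from the paper's. The paper builds explicit upper/lower brackets for $f_{J_n}$ using multiplicative perturbations of the Gaussian kernels (factors $(1\pm u^2)$, $e^{\pm|x-\hat\mu_j|\zeta u}$, and rescaled bandwidths $\hat\sigma/(1\pm u^2)$) and, crucially, a \emph{geometric} grid $\hat\sigma_\ell=\sigma_n(1+\zeta u^2)^\ell$ for $\sigma$, so that the $\sigma$-direction contributes only $\log L_2(u)\lesssim \log\log(S_n/\sigma_n)+\log(1/u)$ to the log-bracketing number. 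You instead reduce to a $\|\cdot\|_\infty$-covering of the finite-dimensional parameter set via global Lipschitz bounds and take an \emph{additive} grid for $\sigma$ with mesh $\eta\sigma_n^2$; this produces a $\sigma$-grid of cardinality $\sim S_n/(\eta\sigma_n^2)$, whose logarithm is $\asymp\log S_n$ rather than $\log\log S_n$. Your key observation is that this apparently huge loss is harmless because $\log S_n=s_0 n\tilde\epsilon_n^2=(s_0/J_0)J_n$, so it is dominated by the $J_n\log n$ coming from the $\mu_j$- and $p_j$-grids anyway. The paper's bracket construction is more adapted to the scale structure of Gaussian mixtures and would matter if $\log S_n$ were much larger than $J_n$; your covering-number approach is more elementary and, in this particular regime, gives the same (actually slightly better) bound without needing to verify the bracket inequalities $\hat g^L\le g\le \hat g^U$.
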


 \begin{proof}[Proof of Lemma \ref{lem:J}]
     Recall that, writing $\delta_n =\|f_0\|_\infty \epsilon_{n,2} $ 
     $$ J_{[],n} = \int_0^{\delta_n} \sqrt{ \log N_{[]}(u, \mathcal G_n, L_2(f_0))}du$$
     where $N_{[]}(u, \mathcal G_n, L_2(f_0))$ is the bracketing covering number . 
Let $\zeta>0$,  let $\mathcal S_\mu = (\hat \mu_\ell, \ell \leq L_1(u))$ be a net of radius $\zeta u \sigma_n^2$ of $[-a_n,a_n]$ , $\mathcal S_{\sigma} = \hat \sigma_\ell = \sigma_n ( 1 + \zeta u^2)^\ell $, $ \ell \leq L_2(u) \lesssim \frac{ \log (S_n/\sigma_n) }{ u^2 }$ and $\mathcal S_p= (\hat p^{(\ell)}, \ell \leq L_3(u)) $  a net of $[0,1]$ of radius $\sigma_n u /J_n$. 
Consider the set of functions 
$$ \hat g^U (x) = \sum_{j=1}^{J_n} (\hat p_j+ \sigma_n u^2/J_n)(1 + u^2)\varphi_{\hat \sigma/(1 + u^2)} (x - \hat \mu_j)^2 e^{  |x-\hat \mu_j|\zeta u} -f_0$$
$$ \hat g^L (x)= \sum_{j=1}^{J_n} (\hat p_j- \sigma_n u^2/J_n)_+(1 + u^2)\varphi_{\hat \sigma/(1-u^2)} (x - \hat \mu_j)^2 e^{ - |x-\hat \mu_j|\zeta u} -f_0$$
Let $g\in \mathcal G_n$, we show that there exists a pair $\hat g^U, \hat g^L$ as above such that $\hat g^U\geq g \geq \hat g^L$. Then since the number of such pairs is bounded by 
$ L_1(u)^{J_n}L_2(u)L_3(u)^{J_n}$, 
\begin{align*}
     \log N_{[]}(u, \mathcal G_n, L_2(f_0)) &\leq J_n [\log L_1(u) + \log L_3(u)] + \log L_2(u) \\
     &\lesssim J_n ( \log (a_n/\sigma_n) +\log (J_n/\sigma_n) + \log (1/u)) + \log \log (S_n/\sigma_n) + \log (1/u).
\end{align*} 
By definition $\log (a_n/\sigma_n) +\log (J_n/\sigma_n) +\log \log (S_n/\sigma_n) \lesssim \log n$ so that 
$$ J_{[],n} =\sqrt{J_n} \sqrt{ \log n}  \int_0^{\delta_n} \sqrt{ \log(1/ u)}du \lesssim \sqrt{J_n} \delta_n \log n.$$
 \end{proof}

\subsection{Proof of Theorem \ref{theo::1step_miss}}

Let
\begin{equation*}
    \psi_{P} = \chi(P) + \dot{\chi}_{P} = \frac{A}{\pi(X)}\{Y - m(X)\} + m(X),
\end{equation*}
such that $\dot{\chi}_{P} = \psi_{P} - P[\psi_{P}]$. As we have remarked in the main paper, it is sufficient to verify Assumption \ref{ass::1step_bvm} with $\dot{\chi}_{P}$ and $\dot{\chi}_{P_{0}}$ replaced by $\psi_{P}$ and $\psi_{P_{0}}$ respectively for (b) and (c). Assumption \ref{ass::1step_bvm}(a) is verified by the inequality in (\ref{eqn::miss_2nd}) and Assumption \ref{ass::miss_bvm}(a).

Note that
\begin{equation*}
    \psi_{P} - \psi_{P_{0}} = \left(1 - \frac{A}{\pi(X)}\right)[m(X)-m_{0}(X)] +A(Y-m(X))\left[\frac{1}{\pi(X)} - \frac{1}{\pi_{0}(X)}\right],
\end{equation*}
so by Assumption \ref{ass::miss_bvm}(b), we have
\begin{equation*}
    \Vert \psi_{P} - \psi_{P_{0}} \Vert_{P_{0}} \leq \left(1+ \frac{1}{\delta}\right)\Vert m - m_{0} \Vert_{P_0} + C\left\Vert \frac{1}{\pi} - \frac{1}{\pi_{0}} \right\Vert_{P_0}.
\end{equation*}
We deduce by Assumption \ref{ass::miss_bvm}(a) that
\begin{equation} \label{eqn::miss_l2}
    \sup_{P \in H_{n}} \Vert \psi_{P} - \psi_{P_{0}} \Vert_{P_{0}}  \rightarrow 0.
\end{equation}
With the uniform bounding in Assumption \ref{ass::miss_bvm}(b) and the permanence of the Donsker property under Lipschitz transformations (see examples 2.10.7, 2.10.8 and 2.10.9 in \citet{vanderVaart96}), the classes $\{\psi_{P}: P \in H_{n}\}$ eventually lie in a fixed $P_{0}$-Donsker class. \qed

\subsection{Proof of Theorem \ref{theo::1step_ray}}

To aid comparisons with \citet{Ray20}, we use the following notation:
\begin{align*}
    a &= \frac{1}{\pi}, \quad a_{0} = \frac{1}{\pi_{0}}, \quad \hat{a}_{n} = \frac{1}{\hat{\pi}_{n}}\\
    b  &= m, \quad b_{0} = m_{0}.
\end{align*}

Similar to the argument in the proof of Theorem 2 of \citet{Ray20}, it suffices to fix $\hat{a}_{n}$ to be a deterministic sequence $a_{n} \in \mathcal{A}_{n}$, where $\mathcal{A}_{n} = \{a:\,\Vert a \Vert_{\infty} \leq M, \Vert a-a_{0} \Vert_{P_0} \leq M\rho_{n} \}$ for any $M > 0$. This is due to the assumed stochastic independence of the data and $\hat{a}_{n}$, as well as the assumptions that $\Vert \hat{a}_{n}\Vert_{\infty} = O_{P_0}(1)$ and $\Vert \hat{a}_{n} - a_{0}\Vert_{P_{0}} = O_{P_0}(\rho_{n})$. Like the proof of Theorem \ref{theo::1step_miss}, we set our sequence of efficient estimators to be 
\begin{equation*}
    \hat{\chi}_{n} = \mathbb{P}_{n}[Aa_{0}(X)(Y-b_{0}(X))+b_{0}(X)].
\end{equation*}
Then we have
\begin{align*}
    \sqrt{n}(\Tilde{\chi}-\hat{\chi}_{n})=&\,\sqrt{n}\Tilde{P}[Aa_{n}(X)(Y-b(X))+b(X)]-\sqrt{n}\mathbb{P}_{n}[Aa_{0}(X)(Y-b_{0}(X))+b_{0}(X)]\\ =&\underbrace{\sqrt{n}(\Tilde{P}-\mathbb{P}_{n})[Aa_{n}(X)(Y-b(X))+b(X)]}_{\circled{1}}\\
    &+\underbrace{\mathbb{G}_{n}[Aa_{n}(X)(Y-b(X))+b(X) - Aa_{0}(X)(Y-b_{0}(X))-b_{0}(X)]}_{\circled{2}}\\
    &+\underbrace{\sqrt{n}P_{0}[Aa_{n}(X)(Y-b(X))+b(X) - Aa_{0}(X)(Y-b_{0}(X))-b_{0}(X)]}_{\circled{3}}
\end{align*}

Term \circled{3} is just the second-order bias
\begin{align*}
    \circled{3} &= \sqrt{n}P_{0}\left[\frac{1}{a_{0}(X)}\left(a_{0}(X) - a_{n}(X)\right)\{b_{0}(X) - b(X)\}\right] \\
    &\leq \sqrt{n}\left\Vert\frac{1}{a_{0}}\right\Vert_{\infty}\left\Vert a_{0} - a_{n} \right\Vert_{P_0} \Vert b_{0} - b\Vert_{P_0},
\end{align*}
and we can upper bound this by $\sqrt{n}M\rho_{n}\varepsilon_{n}\rightarrow 0$ for $b$ in $\{b = \Psi(w): w \in \Tilde{\mathcal{H}}_{n}\}$ by (\ref{ass::1step_3.11}).
For term \circled{2}, we have the inequality
\begin{align*}
    |\circled{2}| \leq &\, |\mathbb{G}_{n}[Aa_{n}(X)(Y-b(X)) - Aa_{0}(X)(Y-b_{0}(X))]| + |\mathbb{G}_{n}[b(X)-b_{0}(X)]|.
\end{align*}
The second term on the right converges uniformly in probability to zero on $\{b = \Psi(w): w \in \Tilde{\mathcal{H}}_{n}\}$ by (\ref{ass::1step_3.12}). We can decompose the first term on the right as follows:
\begin{align*}
    |\mathbb{G}_{n}[Aa_{n}(X)(Y-b(X)) - Aa_{0}(X)(Y-b_{0}(X))]| \leq &\, \underbrace{|\mathbb{G}_{n}[Aa_{n}(X)\{b_{0}(X)-b(X)\}]|}_{\circled{A}} \\
    &+ \underbrace{|\mathbb{G}_{n}[A\{a_{n}(X)-a_{0}(X)\}\{Y-b_{0}(X)\}]|}_{\circled{B}}
\end{align*}
We need to show that each term converges uniformly in probability to zero on $\{b = \Psi(w): w \in \Tilde{\mathcal{H}}_{n}\}$. For term \circled{A}, this is achieved by applying Lemmas 10 and 9 of \citet{Ray20} with $\varphi = Aa_{n}$ and $h = b-b_{0}$ with the assumptions of (\ref{ass::1step_3.11}) and (\ref{ass::1step_3.12}), along with the uniform bounding for $a_{n}$. Since $a_{n}$ is a deterministic sequence that converges in $L_{2}(P_{0})$ to $a_{0}$, term \circled{B} converges to zero in probability.

It remains to establish the convergence of term \circled{1} to the requisite normal distribution. Having shown that term \circled{2} converges uniformly in probability to zero on $\{b = \Psi(w): w \in \Tilde{\mathcal{H}}_{n}\}$, we already have the requisite Glivenko-Cantelli condition to apply Lemma \ref{lem::bvm_bb} (the bounding on the envelope functions is immediately implied by $\Vert a_{n} \Vert_{\infty} \leq M$). Thus, we just need to show that $[Aa_{n}(X)(Y-b(X))+b(X)]$ converges uniformly in $L_{2}$ to $[Aa_{0}(X)(Y-b_{0}(X))+b_{0}(X)]$. Following the steps in the proof of Theorem \ref{theo::1step_miss}, we have
\begin{equation*}
    \Vert Aa_{n}(X)(Y-b(X))+b(X) - Aa_{0}(X)(Y-b_{0}(X))-b_{0}(X)\Vert_{P_0} \leq \left(1+ M\right)\Vert b - b_{0} \Vert_{P_0} + \left\Vert a_{n} - a_{0} \right\Vert_{P_0},
\end{equation*}
so we are done after applying (\ref{ass::1step_3.11}). \qed

\subsection{Proof of Corollary \ref{cor::riemann}}
\label{sec::proof_cor_riemann}

We prove this corollary by applying several of the results in \citet{Ray20} to check the conditions in Theorem \ref{theo::1step_ray}. First, by applying Lemma 5 of \citet{Ray20} with $t=0$, there exist measurable subsets $\mathcal{H}_{n} \subset C([0,1])$ (the set of continuous functions on $[0,1]$) with
\begin{align*}
    \Pi(w \in \mathcal{H}_{n} \mid  Z^{(n)})&\xrightarrow[]{P_{0}} 1\\
    \sup_{m = \Psi(w): w \in \mathcal{H}_{n}}
    |\mathbb{G}_{n}[m-m_{0}]|&\xrightarrow[]{P_{0}} 0.
\end{align*}

Next, Lemma 16 and the proof of Corollary 1 for the Riemann-Liouville prior from \citet{Ray20} implies that $m$ contracts to $m_{0}$ at the rate
\begin{equation*}
    \varepsilon_{n} = n^{-\frac{\beta \wedge \bar{\beta}}{2\bar{\beta} + 1}}(\log n)^{\kappa},
\end{equation*}
where $\kappa$ either takes the value 0 or 1 depending on the values of $(\beta, \bar{\beta})$ (see Theorem 4 of \citet{Castillo08} or the discussion following Lemma 11.34 in \citet{Ghosal17}). In other words,
\begin{equation*}
    \Pi(\Vert m - m_{0} \Vert_{P_0} \leq M\varepsilon_{n} \mid  Z^{(n)})\xrightarrow[]{P_{0}} 1
\end{equation*}
for some sufficiently large constant $M > 0$ (see also: Theorem 11.22 of \citet{Ghosal17}). Setting $\kappa$ to be identically equal to one also suffices. (This was done in the statement of the corollary for the sake of simplicity.) So the sequence of subsets 
\begin{equation*}
    \Tilde{\mathcal{H}}_{n} = \mathcal{H}_{n} \cap \{w \in C([0,1]): \Vert \Psi(w) - m_{0} \Vert_{P_0} \leq M\varepsilon_{n}\}
\end{equation*}
satisfies the conditions for Theorem \ref{theo::1step_ray}. \qed

\subsection{Proof of Theorem \ref{theo::1step_att}} \label{sec::proof_att}

Our sequence of efficient estimators is defined by
\begin{equation*}
    \hat{\chi}_{n} = \frac{\mathbb{P}_{n}\left[\left(\frac{A-\pi_{0}(X)}{1-\pi_{0}(X)}\right)\{Y-\mu_{0}^{(0)}(X)\}\right]}{\mathbb{P}_{n}[A]}.
\end{equation*}
We begin by showing that this sequence does indeed achieve the requisite limiting distribution.
\begin{align*}
    \sqrt{n}(\hat{\chi}_{n} - \chi(P_{0})) &= \frac{\sqrt{n}\mathbb{P}_{n}\left[\left(\frac{A-\pi_{0}(X)}{1-\pi_{0}(X)}\right)\{Y-\mu_{0}^{(0)}(X)\} - A\chi(P_{0})\right]}{\mathbb{P}_{n}[A]} \\
    &= \sqrt{n}\mathbb{P}_{n}[\dot{\chi}_{P_{0}}]\frac{P_{0}[A]}{\mathbb{P}_{n}[A]}\\
    &\rightsquigarrow \mathcal{N}(0, \Vert \dot{\chi}_{P_{0}} \Vert^{2}_{P_{0}}),
\end{align*}
since the ratio $P_{0}[A]/\mathbb{P}_{n}[A]$ converges in probability to one by the weak law of large numbers and Slutsky's lemma. 

Now define
\begin{align*}
    h(Z) &= \left(\frac{A-\pi(X)}{1-\pi(X)}\right)\{Y-\mu^{(0)}(X)\}-A\chi(P_{0}) \\
    h_{0}(Z) &= \left(\frac{A-\pi_{0}(X)}{1-\pi_{0}(X)}\right)\{Y-\mu_{0}^{(0)}(X)\}-A\chi(P_{0}) = P_{0}[A]\dot{\chi}_{P_{0}}(Z).
\end{align*}
With the uniform bounding in Assumption \ref{ass::att_bvm}(b) and the permanence of the Donsker property under Lipschitz transformations (see examples 2.10.7, 2.10.8 and 2.10.9 in \citet{vanderVaart96}), the classes $\{h: P \in H_{n}\}$ eventually lie in a fixed $P_{0}$-Donsker class. We have
\begin{align*}
    \sqrt{n}(\Tilde{\chi}-\hat{\chi}_{n}) &=\sqrt{n}\{\Tilde{\chi}-\chi(P_{0}) -\hat{\chi}_{n}+ \chi(P_{0})\}\\
    &=\frac{\sqrt{n}\Tilde{P}\left[\left(\frac{A-\pi(X)}{1-\pi(X)}\right)\{Y-\mu^{(0)}(X)\}-A\chi(P_{0})\right]}{\Tilde{P}[A]} - \frac{\sqrt{n}\mathbb{P}_{n}\left[\left(\frac{A-\pi_{0}(X)}{1-\pi_{0}(X)}\right)\{Y-\mu_{0}^{(0)}(X)\}-A\chi(P_{0})\right]}{\mathbb{P}_{n}[A]}\\
    &= \frac{\sqrt{n}(\Tilde{P} - \mathbb{P}_{n})[h(Z)]+\sqrt{n}\mathbb{P}_{n}[h(Z)]}{\Tilde{P}[A]}-\frac{\sqrt{n}\mathbb{P}_{n}[h_{0}(Z)]}{\mathbb{P}_{n}[A]}\\
    &= \underbrace{\frac{\sqrt{n}(\Tilde{P}-\mathbb{P}_{n})[h(Z)]}{P_{0}[A]} + \frac{\sqrt{n}(\mathbb{P}_{n}-P_{0})[h(Z)-h_{0}(Z)]}{\mathbb{P}_{n}[A]} + \frac{\sqrt{n}P_{0}[h(Z)]}{\mathbb{P}_{n}[A]}}_{\circled{1}}\\
    & + \underbrace{\left(\frac{1}{\Tilde{P}[A]} - \frac{1}{P_{0}[A]}\right)\sqrt{n}(\Tilde{P} - \mathbb{P}_{n})[h(Z)] + \left(\frac{1}{\Tilde{P}[A]} - \frac{1}{\mathbb{P}_{n}[A]}\right)\sqrt{n}\mathbb{P}_{n}[h(Z)]. }_{\circled{2}}
\end{align*}
We will show that term $\circled{1}$ converges conditionally to the requisite $\mathcal{N}(0, \Vert \dot{\chi}_{P_{0}} \Vert^{2}_{P_{0}})$ distribution, while term $\circled{2}$ converges conditionally to zero. Combining these two results with Lemma \ref{lem::slutsky} completes the proof.

As before, it is sufficient to prove the result conditioned on $P \in H_{n}$. For term $\circled{1}$, we have
\begin{align*}
    d_{BL}\left(\mathcal{L}\left(\circled{1} \mid P \in H_{n}, Z^{(n)}\right), \mathcal{N}(0, \Vert \dot{\chi}_{P_{0}} \Vert^{2}_{P_{0}})\right) &\leq d_{BL}\left(\mathcal{L}\left(\frac{\sqrt{n}(\Tilde{P}-\mathbb{P}_{n})[h(Z)]}{P_{0}[A]} \mid P \in H_{n}, Z^{(n)}\right), \mathcal{N}(0, \Vert \dot{\chi}_{P_{0}} \Vert^{2}_{P_{0}})\right)\\
    &+ \sup_{P \in H_{n}}\left|\frac{\mathbb{G}_{n}[h(Z)-h_{0}(Z)]}{\mathbb{P}_{n}[A]}\right| + \sup_{P \in H_{n}}\left|\frac{\sqrt{n}P_{0}[h(Z)]}{\mathbb{P}_{n}[A]}\right|.
\end{align*}
Provided that 
\begin{equation*}
    \sup_{P \in H_{n}}\Vert h - h_{0} \Vert_{P_0} \rightarrow 0,
\end{equation*}
the first two terms on the right can be handled in the same way as Theorem \ref{theo::1step_bvm}. Namely, the first term uses Theorem 3.6.13 of \citet{vanderVaart96}, and the second term uses Corollary 2.3.12 of \citet{vanderVaart96}. To establish the uniform $L_{2}$-convergence of $h$, consider
\begin{align*}
    h(Z) - h_{0}(Z) =& \left[\frac{A-\pi(X)}{1-\pi(X)}\right]\{Y-\mu^{(0)}(X)\} - \left[\frac{A-\pi_{0}(X)}{1-\pi_{0}(X)}\right]\{Y-\mu_{0}^{(0)}(X)\}\\
    =& \left(1-\frac{1-A}{1-\pi(X)}\right)\{\mu_{0}^{(0)}(X)-\mu^{(0)}(X)\} + (1-A)\left[\frac{1}{1-\pi_{0}(X)}-\frac{1}{1-\pi(X)}\right]\{Y - \mu_{0}^{(0)}(X)\} \\
    =& \left(1-\frac{1-A}{1-\pi(X)}\right)\{\mu_{0}^{(0)}(X)-\mu^{(0)}(X)\} + (1-A)\left[\frac{\pi_{0}(X)-\pi(X)}{(1-\pi_{0}(X))(1-\pi(X))}\right]\{Y - \mu_{0}^{(0)}(X)\}
\end{align*}
so
\begin{equation*}
    \Vert h - h_{0} \Vert_{P_0} \leq \left(1 + \frac{1}{\delta}\right)\Vert \mu_{0}^{(0)} - \mu^{(0)}\Vert_{P_0} + \left(\frac{C}{\delta^{2}}\right)\Vert \pi - \pi_{0} \Vert_{P_0},
\end{equation*}
which establishes the uniform $L_{2}$-convergence of $h$ under our assumptions.

For the final term in our upper bound for $\circled{1}$, we have
\begin{align*}
    P_{0}[h(Z)] =& P_{0}\left[\left(\frac{A-\pi(X)}{1-\pi(X)}\right)\{Y-\mu^{(0)}(X)\}-\pi_{0}(X)\{\mu_{0}^{(1)}(X)-\mu_{0}^{(0)}(X)\}\right]\\
    =& P_{0}\left[\frac{\pi_{0}(X)\mu_{0}^{(1)}(X) - \pi_{0}(X)\mu^{(0)}(X) - \pi(X)[\pi_{0}(X)\mu_{0}^{(1)}(X) + (1-\pi_{0}(X))\mu_{0}^{(0)}(X)] + \pi(X)\mu^{(0)}(X)}{1-\pi(X)}\right]\\
    &-P_{0}\left[\pi_{0}(X)\{\mu_{0}^{(1)}(X)-\mu_{0}^{(0)}(X)\}\right]\\
    =& P_{0}\left[\frac{ - \pi_{0}(X)\mu^{(0)}(X) - \pi(X) (1-\pi_{0}(X))\mu_{0}^{(0)}(X) + \pi(X)\mu^{(0)}(X) + \pi_{0}(X)\mu_{0}^{(0)}(X)(1-\pi(X))}{1-\pi(X)}\right]\\
    =& P_{0}\left[\frac{ - \pi_{0}(X)\mu^{(0)}(X) - \pi(X)\mu_{0}^{(0)}(X) + \pi(X)\mu^{(0)}(X) + \pi_{0}(X)\mu_{0}^{(0)}(X))}{1-\pi(X)}\right]\\
    =& P_{0}\left[\frac{(\pi(X)-\pi_{0}(X))(\mu^{(0)}(X)-\mu_{0}^{(0)}(X))}{1-\pi(X)}\right]\\
    \leq & \frac{1}{\delta}\Vert \pi - \pi_{0} \Vert_{P_0} \Vert \mu_{0}^{(0)} - \mu^{(0)}\Vert_{P_0}
\end{align*}
where the inequality on the final line is due to Cauchy-Schwarz and the assumed bounding on $\pi$. By assumption, the product of the errors converge uniformly faster than $1/\sqrt{n}$.

Now we proceed to study term $\circled{2}$. First we note that $\Tilde{P}[A]$ converges conditionally in distribution to $P_{0}[A]$ almost surely under the infinite product probability measure $P_{0}^{(\infty)}$. For an outcome in the infinite product sample space such that $\Tilde{P}[A] \rightsquigarrow P_{0}[A]$, we can apply Slutsky's lemma to deduce that
\begin{equation*}
    \left(\frac{P_{0}[A]}{\Tilde{P}[A]}-1\right) \rightsquigarrow 0.
\end{equation*}
Therefore, this also holds conditionally almost surely $[P_{0}^{(\infty)}]$. We already know that 
\begin{equation*}
    d_{BL}\left(\mathcal{L}\left(\frac{\sqrt{n}(\Tilde{P}-\mathbb{P}_{n})[h(Z)]}{P_{0}[A]} \mid P \in H_{n}, Z^{(n)}\right), \mathcal{N}(0, \Vert \dot{\chi}_{P_{0}} \Vert^{2}_{P_{0}})\right)\xrightarrow[]{P_{0}} 0
\end{equation*}
so by Lemma \ref{lem::slutsky}, we deduce that
\begin{equation*}
    d_{BL}\left(\mathcal{L}\left(\left(\frac{P_{0}[A]}{\Tilde{P}[A]}-1\right)\frac{\sqrt{n}(\Tilde{P}-\mathbb{P}_{n})[h(Z)]}{P_{0}[A]} \mid P \in H_{n}, Z^{(n)}\right), \delta_{\{0\}}\right) \xrightarrow[]{P_{0}} 0.
\end{equation*}
Finally, we have the second term in $\circled{2}$. Using similar arguments to the above, we have that
\begin{equation*}
    \left(\frac{1}{\Tilde{P}[A]} - \frac{1}{\mathbb{P}_{n}[A]}\right)
\end{equation*}
converges conditionally in distribution to zero almost surely $[P_{0}^{(\infty)}]$. We also have
\begin{equation*}
    \sqrt{n}\mathbb{P}_{n}[h(Z)] = \sqrt{n}(\mathbb{P}_{n}-P_{0})[h(Z) - h_{0}(Z)] + \sqrt{n}P_{0}[h(Z)] + \sqrt{n}\mathbb{P}_{n}[h_{0}(Z)].
\end{equation*}
We have already established that the first two terms on the right converge uniformly to zero over $P \in H_{n}$ (the first one is in probability). The final term converges weakly to a normal distribution by the central limit theorem. Thus, by Lemma \ref{lem::slutsky}, we have
\begin{equation*}
    d_{BL}\left(\mathcal{L}\left(\left(\frac{1}{\Tilde{P}[A]} - \frac{1}{\mathbb{P}_{n}[A]}\right)\sqrt{n}\mathbb{P}_{n}[h(Z)]\mid P \in H_{n}, Z^{(n)}\right), \delta_{\{0\}}\right) \xrightarrow[]{P_{0}} 0.
\end{equation*}
We use Lemma \ref{lem::slutsky} once more to combine the results for the two terms in $\circled{2}$. \qed

\subsection{Proof of Theorem \ref{theo::1step_ACTT}} \label{sec::proof_ACTT}

Define
\begin{align*}
    g(Z) &= \left(A - \frac{(1-A)\pi(X)}{1-\pi(X)}\right)[Y-\mu^{(A)}(X)] + [A-\pi(X)]\{\mu^{(1)}(X) -\mu^{(0)}(X) - \chi(P_{0})\}\\
    g_{0}(Z) &= \left(A - \frac{(1-A)\pi_{0}(X)}{1-\pi_{0}(X)}\right)[Y-\mu_{0}^{(A)}(X)] + [A-\pi_{0}(X)]\{\mu_{0}^{(1)}(X) - \mu_{0}^{(0)}(X) - \chi(P_{0})\}.
\end{align*}
We begin by establishing the uniform $L_{2}$-convergence of $g$ on $H_{n}$:
\begin{equation*}
    \sup_{P \in H_{n}}\Vert g - g_{0} \Vert_{P_0} \rightarrow 0.
\end{equation*}
Using the definitions of $h$ and $h_{0}$ from the proof of Theorem \ref{theo::1step_att} in Section \ref{sec::proof_att}, $g$ and $g_{0}$ can be rewritten as
\begin{align*}
    g(Z) &= h(Z) -\pi(X)\{\mu^{(1)}(X) -\mu^{(0)}(X) - \chi(P_{0})\}\\
    g_{0}(Z) &= h_{0}(Z) -\pi_{0}(X)\{\mu_{0}^{(1)}(X) - \mu_{0}^{(0)}(X) - \chi(P_{0})\}.
\end{align*}
Under the assumptions, we already showed that $h$ converges uniformly in $L_{2}$ to $h_{0}$ on $H_{n}$. Thus, it suffices to show that
\begin{equation*}
    \sup_{P \in H_{n}}\Vert \pi\{\mu^{(1)} -\mu^{(0)}- \chi(P_{0})\} - \pi_{0}\{\mu_{0}^{(1)} - \mu_{0}^{(0)} - \chi(P_{0})\} \Vert_{P_0} \rightarrow 0.
\end{equation*}
The expression inside the norm can be decomposed into
\begin{equation*}
    \pi\{\mu^{(1)} -\mu^{(0)} - (\mu_{0}^{(1)} -\mu_{0}^{(0)})\} + (\pi - \pi_{0})\{\mu_{0}^{(1)} -Y + Y-\mu_{0}^{(0)} - \chi(P_{0})\}
\end{equation*}
and this has $L_{2}$-norm upper-bounded by
\begin{equation*}
    \Vert \mu^{(1)} -\mu_{0}^{(1)} \Vert_{P_0} + \Vert \mu^{(0)} -\mu_{0}^{(0)} \Vert_{P_0} + (2C+|\chi(P_{0})|)\Vert \pi -\pi_{0} \Vert_{P_0},
\end{equation*}
which converges uniformly to zero on $H_{0}$ by Assumptions \ref{ass::ACTT_bvm}(a) and (c).

To reduce verbosity and repetition, we will say hereafter that an expression is ``controlled'' if it converges conditionally to zero in probability on $H_{n}$. The ACTT one-step corrected parameter can be decomposed as follows:
\begin{align*}
    \Tilde{\theta} &= \theta(P) + \Tilde{P}\left(\frac{A}{\Tilde{P}[A]} - \frac{(1-A)\pi(X)}{\Tilde{P}[A][1-\pi(X)]}\right)[Y-\mu^{(A)}(X)]
    + \Tilde{P}\left[\left(\frac{A-\pi(X)}{\Tilde{P}[A]}\right)\{\mu^{(1)}(X) - \mu^{(0)}(X) - \theta(P)\}\right]\\
    &= \frac{\Tilde{P}[g(Z)]}{\Tilde{P}[A]}+\chi(P_{0})+\frac{\Tilde{P}[\pi(X)]\{\theta(P)-\chi(P_{0})\}}{\Tilde{P}[A]}
\end{align*}
We can further decompose $\Tilde{P}[g(Z)]$ into
\begin{align*}
    \Tilde{P}[g(Z)] &= [\Tilde{P}-\mathbb{P}_{n}][g(Z)]+ \mathbb{P}_{n}[h(Z)] - \mathbb{P}_{n}[\pi(X)\{\mu^{(1)}(X)-\mu^{(0)}(X)-\chi(P_{0})\}]\\
    &= [\Tilde{P}-\mathbb{P}_{n}][g(Z)]+ \mathbb{P}_{n}[h(Z)]-\mathbb{P}_{n}[\pi(X)]\{\theta(P)-\chi(P_{0})\},
\end{align*}
where the second equality follows from the definition of $\theta(P)$, which satisfies
\begin{equation*}
    \mathbb{P}_{n}[\pi(X)\{\mu^{(1)}(X)-\mu^{(0)}(X)-\theta(P)\}] = 0.
\end{equation*}
Then
\begin{align*}
    \sqrt{n}(\Tilde{\theta} - \hat{\chi}_{n}) &= \underbrace{\frac{\sqrt{n}[\Tilde{P}-\mathbb{P}_{n}][g(Z)]}{\Tilde{P}[A]}}_{\circled{1}}+\underbrace{\frac{\sqrt{n}\mathbb{P}_{n}[h(Z)]}{\Tilde{P}[A]}-\frac{\sqrt{n}\mathbb{P}_{n}[h_{0}(Z)]}{P_{0}[A]}}_{\circled{2}}-\underbrace{\frac{\sqrt{n}(\Tilde{P}-\mathbb{P}_{n})[\pi(X)]\{\theta(P)-\chi(P_{0})\}}{\Tilde{P}[A]}}_{\circled{3}}.
\end{align*}
We will show that term $\circled{1}$ converges conditionally in distribution to $\mathcal{N}(0, \Vert \dot{\chi}^{Y}_{P_0} +\dot{\chi}^{A}_{P_0} \Vert^{2}_{P_{0}})$ as required, while the other two terms are controlled. 

Term $\circled{1}$ can be decomposed into
\begin{equation*}
    \frac{\sqrt{n}[\Tilde{P}-\mathbb{P}_{n}][g(Z)]}{\Tilde{P}[A]} = \frac{\sqrt{n}[\Tilde{P}-\mathbb{P}_{n}][g(Z)]}{P_{0}[A]} + 
    \left(\frac{P_{0}[A]}{\Tilde{P}[A]}-1\right)\frac{\sqrt{n}[\Tilde{P}-\mathbb{P}_{n}][g(Z)]}{P_{0}[A]}.
\end{equation*}
For the first term on the right, we already showed that $g$ converges uniformly in $L_{2}$ to $g_{0}$ on $H_{n}$, and moreover,
\begin{equation*}
    \frac{g_{0}}{P_{0}[A]} = \dot{\chi}^{Y}_{P_0} +\dot{\chi}^{A}_{P_0}.
\end{equation*}
As argued in Theorem \ref{ass::1step_bvm}, the Donsker class assumption establishes the necessary Bayesian bootstrap Bernstein-von Mises theorem, which implies the convergence to $\mathcal{N}(0, \Vert \dot{\chi}^{Y}_{P_0} +\dot{\chi}^{A}_{P_0} \Vert^{2}_{P_{0}})$. For the second term, we already showed in the proof of Theorem \ref{theo::1step_att} that the factor $(P_{0}[A]/\Tilde{P}[A] - 1)$ converges conditionally in distribution to zero almost surely. Thus, by Lemma \ref{lem::slutsky}, this second term is controlled.

For term $\circled{2}$, we have
\begin{align*}
    \frac{\sqrt{n}\mathbb{P}_{n}[h(Z)]}{\Tilde{P}[A]}-\frac{\sqrt{n}\mathbb{P}_{n}[h_{0}(Z)]}{P_{0}[A]} &= \underbrace{\frac{\sqrt{n}(\mathbb{P}_{n}-P_{0})[h(Z)-h_{0}(Z)]}{P_{0}[A]}}_{\circled{A}}+\underbrace{\frac{\sqrt{n}P_{0}[h(Z)]}{P_{0}[A]}}_{\circled{B}}\\ &+\underbrace{\left(\frac{P_{0}[A]}{\Tilde{P}[A]}-1\right)\left(\frac{\sqrt{n}\mathbb{P}_{n}[h_{0}(Z)]}{P_{0}[A]}+\frac{\sqrt{n}(\mathbb{P}_{n}-P_{0})[h(Z)-h_{0}(Z)]}{P_{0}[A]} + \frac{\sqrt{n}P_{0}[h(Z)]}{P_{0}[A]} \right).}_{\circled{C}}
\end{align*}
Terms $\circled{A}$ and $\circled{B}$ were already shown to be controlled in the proof of Theorem \ref{theo::1step_att}. As before, the factor $(P_{0}[A]/\Tilde{P}[A] - 1)$ converges conditionally in distribution to zero almost surely. By the central limit theorem, $\sqrt{n}\mathbb{P}_{n}[h_{0}(Z)]$ converges weakly to a normal distribution with finite variance. The second and third terms in the second bracket of term $\circled{C}$ are terms $\circled{A}$ and $\circled{B}$ respectively. We deduce that term $\circled{C}$ is controlled as well.

Finally, for term $\circled{3}$, we have
\begin{equation*}
    \circled{3} = \frac{\sqrt{n}(\Tilde{P}-\mathbb{P}_{n})[\pi(X)]\{\theta(P)-\chi(P_{0})\}}{P_{0}[A]} + \left(\frac{P_{0}[A]}{\Tilde{P}[A]}-1\right)\frac{\sqrt{n}(\Tilde{P}-\mathbb{P}_{n})[\pi(X)]\{\theta(P)-\chi(P_{0})\}}{P_{0}[A]}.
\end{equation*}
Therefore, it suffices to show that $\sqrt{n}(\Tilde{P}-\mathbb{P}_{n})[\pi(X)]\{\theta(P)-\chi(P_{0})\}$ is controlled. Since $\pi$ is eventually contained in a fixed $P_{0}$-Donsker class and converges uniformly in $L_{2}$ to $\pi_{0}$, we have that $\sqrt{n}(\Tilde{P}-\mathbb{P}_{n})[\pi(X)]$ converges conditionally in distribution to $\mathcal{N}(0, P_{0}[(\pi_{0}-P_{0}[\pi_{0}])^{2}])$ in probability. 

It remains to study the convergence of $\theta(P)$ to $\chi(P_{0})$:
\begin{align*}
    \theta(P) - \chi(P_{0}) &= \frac{\mathbb{Q}_{n}[\pi\{\mu^{(1)} - \mu^{(0)}\}]}{\mathbb{Q}_{n}[\pi]} - \frac{P_{0}[\pi_{0}\{\mu_{0}^{(1)} - \mu_{0}^{(0)}\}]}{P_{0}[\pi_{0}]}\\
    &= \underbrace{\left(\frac{P_{0}[\pi_{0}]}{\mathbb{Q}_{n}[\pi]}-1\right)\frac{\mathbb{Q}_{n}[\pi\{\mu^{(1)} - \mu^{(0)}\}]}{P_{0}[\pi_{0}]}}_{\circled{I}} + \underbrace{\frac{\mathbb{Q}_{n}[\pi\{\mu^{(1)} - \mu^{(0)}\}]-P_{0}[\pi_{0}\{\mu_{0}^{(1)} - \mu_{0}^{(0)}\}]}{P_{0}[\pi_{0}]}}_{\circled{II}}.
\end{align*}
First consider term \circled{II}. The numerator can be decomposed into
\begin{align*}
    P_{0}[\pi_{0}]\circled{II} &= (\mathbb{Q}_{n}-P_{0})[\pi\{\mu^{(1)} - \mu^{(0)}\}] + P_{0}[\pi\{\mu^{(1)} - \mu^{(0)}\} - \pi_{0}\{\mu_{0}^{(1)} - \mu_{0}^{(0)}\}].
\end{align*}
Define the classes $\mathcal{H}_{n,1} = \{\pi: P \in H_{n}\}$, $\mathcal{H}_{n,2} = \{\mu^{(1)}-y: P \in H_{n}\}$ and $\mathcal{H}_{n,3} = \{\mu^{(0)}-y: P \in H_{n}\}$. These have envelope functions $1$, $C$ and $C$ respectively. The first and third classes are each eventually contained in fixed $P_{0}$-Glivenko-Cantelli classes by assumption. We apply Lemma 11 of \citet{Ray20} with the continuous function $\phi(h_{1},h_{2},h_{3}) = h_{1}(h_{2}-h_{3})$ to deduce that 
\begin{equation*}
    \sup_{P \in H_{n}}|(\mathbb{Q}_{n}-P_{0})[\pi\{\mu^{(1)} - \mu^{(0)}\}]|\xrightarrow[]{P_{0}^{*}} 0.
\end{equation*}
For the remaining term in the numerator of \circled{II}, consider the decomposition
\begin{align*}
    |P_{0}[\pi\{\mu^{(1)} - \mu^{(0)}\} - \pi_{0}\{\mu_{0}^{(1)} - \mu_{0}^{(0)}\}]| &= |P_{0}[\pi\{\mu^{(1)} - \mu^{(0)}-\mu_{0}^{(1)}+ \mu_{0}^{(0)}\}] - P_{0}[(\pi_{0}-\pi)\{\mu_{0}^{(1)}- Y + Y-\mu_{0}^{(0)}\}]|\\
    &\leq P_{0}|\mu^{(1)} -\mu_{0}^{(1)}| + P_{0}|\mu^{0)} -\mu_{0}^{(0)}| + 2C P_{0}|\pi -\pi_{0}| \\
    &\leq \Vert\mu^{(1)} -\mu_{0}^{(1)}\Vert_{P_0} + \Vert\mu^{0)} -\mu_{0}^{(0)}\Vert_{P_0} + 2C \Vert\pi -\pi_{0}\Vert_{P_0}.
\end{align*}
The final inequality follows from Cauchy-Schwarz. Thus, term \circled{II} is controlled. 

For term \circled{I}, the first bracket is upper-bounded by
\begin{align*}
    \left|\frac{P_{0}[\pi_{0}]}{\mathbb{Q}_{n}[\pi]}-1\right| &= \left|\frac{P_{0}[\pi_{0}] - \mathbb{Q}_{n}[\pi]}{\mathbb{Q}_{n}[\pi]}\right| \\
    &\leq \frac{1}{\delta}\left|P_{0}[\pi_{0}] - \mathbb{Q}_{n}[\pi]\right|\\
    &\leq \frac{1}{\delta}P_{0}\left|\pi_{0}-\pi\right| + \frac{1}{\delta}\left|(\mathbb{Q}_{n}-P_{0})[\pi]\right|\\
    &\leq \frac{1}{\delta}\Vert \pi_{0}-\pi\Vert_{P_0} + \frac{1}{\delta}\left|(\mathbb{Q}_{n}-P_{0})[\pi]\right|.
\end{align*}
The upper bound is controlled since $\pi$ converges uniformly to $\pi_{0}$ in $L_{2}$ on $H_{n}$, and $\pi$ is eventually contained in a $P_{0}$-Glivenko-Cantelli class. We can write term \circled{I} as
\begin{equation*}
    \circled{I} = \left(\frac{P_{0}[\pi_{0}]}{\mathbb{Q}_{n}[\pi]}-1\right) \left(\circled{II} +\frac{P_{0}[\pi_{0}\{\mu_{0}^{(1)} - \mu_{0}^{(0)}\}]}{P_{0}[\pi_{0}]}\right).
\end{equation*}
Since we have already showed that \circled{II} is controlled, we deduce that term \circled{I} is also controlled. This completes the proof. \qed

\end{document}